\documentclass[a4paper,fleqn]{cas-dc}

\usepackage{algorithm}
\usepackage{subcaption}
\usepackage{tikz}
\usetikzlibrary{arrows.meta}
\usepackage{marginnote}
\usepackage{amsmath}
\usepackage{amssymb}
\usepackage[authoryear,longnamesfirst]{natbib}

\def\d{\,\mathrm d}	
\def\x{\,\mathrm x}
\usepackage{algpseudocode}
\usepackage{subcaption}
\newcommand{\mrm}[1]{\mathrm{#1}}

\newcommand{\mF}{\mrm{F}}

\newcommand{\bU}{\mathbf{U}}
\newcommand{\bUhat}{\hat{\mathbf{U}}}

\newcommand{\bN}{\mathbf{N}}
\newcommand{\be}{\mathbf{e}}

\newcommand{\Lp}[1]{L^{#1}}
\newcommand{\Hp}[1]{H^{#1}}
\newcommand{\Wp}[2]{W^{#1,#2}}
\newcommand{\dx}{\,\mathrm{d}\mathbf{x}}
\newcommand{\ds}{\,\mathrm{d}s}
\newcommand{\dt}{\,\mathrm{d}t}
\newcommand{\cR}{\mathcal{R}}
\newcommand{\cL}{\mathcal{L}}
\newcommand{\cG}{\mathcal{G}}
\newcommand{\cX}{\mathcal{X}}

\newtheorem{theorem}{Theorem}[section]
\newtheorem{lemma}{Lemma}[section]
\newtheorem{remark}{Remark}[section]
\newtheorem{definition}{Definition}[section]
\newtheorem{corollary}{Corollary}[section]
\newtheorem{proposition}{Proposition}[section]

\newtheorem{Assumption and notation}{Assumptions and notations}
\newtheorem{General assumptions}{General assumptions}[section]
\newtheorem{proof}{Proof}

\newtheorem{Definition and notation}{Definition and notation}
\everymath{\displaystyle}
\def\tsc#1{\csdef{#1}{\textsc{\lowercase{#1}}\xspace}}
\tsc{WGM}
\tsc{QE}

\begin{document}
\let\WriteBookmarks\relax
\def\floatpagepagefraction{1}
\def\textpagefraction{.001}

\shorttitle{}    

\shortauthors{}  

\title [mode = title]{Mathematical Modeling of Cancer–Bacterial Therapy: Analysis and Numerical Simulation via Physics-Informed Neural Networks}  


%
\author[1,2]{Ayoub Farkane}
\fnmark[1]
\ead{ayoub.farkane@univ-lorraine.fr}
\credit{Conceptualization, Methodology, Mathematical analysis, Numerical simulations, Writing - original draft}

\affiliation[1]{organization={TICLab, International University of Rabat},
            city={Rabat},
            postcode={11100},
            country={Morocco}}

\affiliation[2]{organization={Université de Lorraine, CNRS, CRAN, F-54000 Nancy, France}
            }

\author[2,3,4]{David Lassounon}
\fnmark[1]
\ead{david.lassounon@univ-lorraine.fr}
\credit{Conceptualization, Mathematical analysis, Supervision, Writing - review \& editing}

\affiliation[3]{organization={Université de Lorraine, CNRS, IECL, F-54000, Nancy, France}
            }

\affiliation[4]{organization={Univ Rennes, INSA, CNRS, IRMAR-UMR 6625, F-35000},
            postcode={35708},
           city={Rennes},
            country={France}}

\fntext[1]{These authors contributed equally to this work.}
\cormark[1]
\cortext[1]{Corresponding author}

\begin{abstract}
Bacterial cancer therapy exploits anaerobic bacteria’s ability to target hypoxia tumor regions, yet the interactions among tumor growth, bacterial colonization, oxygen levels, immunosuppressive cytokines, and bacterial communication remain poorly quantified. We present a mathematical model of five coupled nonlinear reaction–diffusion equations in a two-dimensional tissue domain. We proved the global well-posedness of the model and identified its steady states to analyze stability.  Furthermore, a physics-informed neural network (PINN) solves the system without a mesh and without requiring extensive data. It provides convergence guarantees by combining residual stability and Sobolev approximation error bounds. This results in an overall error rate of $\mathcal{O}(n^{-2}\ln^{4}(n) + N^{-1/2}),$
which depends on the network width $n$ and the number of collocation points $N$. We conducted several numerical experiments, including predicting the tumor’s response to therapy. We also performed a sensitivity analysis of certain parameters. The results suggest that long-term therapeutic efficacy may require the maintenance of hypoxia regions in the tumor, or using bacteria that tolerate oxygen better, may be necessary for long-lasting tumor control.  
 \nocite{*}
\end{abstract}




\begin{keywords}
Physics-informed neural networks \sep Reaction-diffusion systems \sep Tumor microenvironment  \sep Bacterial cancer therapy \sep Global well-posedness \sep Numerical simulations\sep Convergence analysis
\end{keywords}

\maketitle

\section{Introduction}
\label{sec:intro}

\subsection{Background and Motivation of PINNs }
\label{sec:pinn_background}

Physics-informed neural networks (PINNs) were introduced in work~\cite{raissi2019physics}. They reformulate solving partial differential equations (PDEs) as minimizing a composite loss functional. This functional encodes the governing equations, initial conditions, and boundary conditions as soft constraints on a neural network. The approach offers three structural advantages over classical numerical methods.
First, no spatial mesh is required: derivatives are computed exactly at any
point in the domain via automatic differentiation through the network
graph~\cite{paszke2019pytorch}.
Second, the method is fully unsupervised: no labeled PDE solution data are
needed, only collocation points drawn from the domain.
Third, the same network simultaneously represents the solution at all times
and all points, making temporal queries at arbitrary resolutions trivial
after training.

These properties make PINNs well suited for systems of coupled, nonlinear PDEs with complex reaction terms. Classical methods often struggle with such systems. They face issues like mesh stiffness, operator splitting errors, or sensitivity to time-step size.
However, the theoretical convergence of PINNs applied to nonlinear systems
of PDEs has remained largely open, with rigorous results confined to linear
or semi-nonlinear equations~\cite{mishra2023E,shin2020ON,de2021approximation,de2024error}.
The present paper addresses this gap for a five-species nonlinear reaction-diffusion system arising in mathematical biology.
\subsection{Biological Context of Bacterial Cancer Therapy}
\label{sec:bio_context}
In $2022$, in the world, $20$ million new cases of cancer were diagnosed, resulting in nearly 9.7  million deaths worldwide, according to the World Health Organization (WHO). By $2050$, the number of new cases is expected to reach 35 million, an increase of $77\%$ according to~\cite{soerjomataram2021planning}. As a result, cancer remains a major cause of death worldwide. This disease results from the uncontrolled proliferation of tumor cells. These cells escape the normal mechanisms regulating the cell cycle and apoptosis~\cite{hanahan2022}, thereby promoting their survival, growth, and resistance to treatment. This behavior is closely linked to complex interactions with the tumor microenvironment (TME).

TME is a dynamic coupled ecosystem that can involve five main species: the tumor, bacteria, oxygen concentration, immunosuppressive cytokines, and quorum-sensing (QS) signaling molecules. Their interactions form a network of positive and negative feedback loops (Figure \ref{diag}). The tumor cells consume oxygen, creating hypoxia niches that promote bacterial growth, particularly anaerobic bacteria. The bacteria produce QS signals that inhibit tumor proliferation. In addition,  tumor stimulate the production of immunosuppressive cytokines, which, in turn, limit bacterial proliferation.
Immunosuppressive Cytokines produced by tumor cells can inhibit bacterial spread in two ways. On one hand, they activate local immune cells, such as macrophages and neutrophils, which destroy bacteria. On the other hand, they modify the tumor microenvironment by increasing blood perfusion or oxygenation, making conditions less favorable for the anaerobic bacteria used in bacterial cancer therapy (BCT).

Today, BCT uses a characteristic of tumors (the presence of hypoxia areas where conventional therapies such as chemotherapy or radiotherapy are ineffective). In these areas, therapeutic anaerobic bacteria can  survive~\cite{mascheroni2020}.  For example, bacteria such as \emph{Salmonella} and \emph{Clostridium novyi}-NT colonize these hypoxia areas and are activated to destroy tumors via their diffusible molecular signals (quorum-sensing (QS)  effect)~\cite{howell2025mathematical,aganja2024}. QS molecules, \emph{acyl-homoserine lactones} (AHL), and \emph{autoinducer peptides} (AIP)~\cite{guo2025}, accumulate and are proportional to bacterial density, and are responsible for the destruction of cancer cells.

Modeling this dynamic system mathematically requires a nonlinear reaction-diffusion equations (PDE) to describe spatial heterogeneity among different species. This approach helps us understand interactions between tumors, oxygen, therapeutic bacteria, and tumor-induced cytokines. It also helps predict anti-cancer bacterial activity, which is the focus of our work.

\subsection{Existing Mathematical Modeling}
\label{sec:literature}
Several mathematical models of the tumor or immune system response and the role of hypoxia in bacterial therapy have been extensively studied in recent years (see e.g.  \cite{milotti2020oxygen,mascheroni2020,sadhu2023impact,lampropoulos2025spatio,nova2025tumor,alarabi2022mathematical,howell2025mathematical,geretovszky2025}, and the references therein).
In \cite{mascheroni2020}, a model of anaerobic bacterial infiltration into hypoxia areas and the effect of oxygen on their growth was studied, without accounting for the immune response or bacterial quorum sensing. Very recently, \cite{geretovszky2025} proposed an ordinary differential equation model of interactions among tumors, bacteria, and chemotherapy treatment, but did not account for the immune response or QS signaling. Furthermore, in \cite{howell2025mathematical}, an ordinary differential equation model of \emph{Salmonella}-based cancer therapies is presented. Their work has shown that bacterial cytotoxicity alone is insufficient to treat tumors and that activating the immune system is important for effective treatment. In their model, the spatial heterogeneity of the different species (tumors, bacteria, immune response) and also interactions with tumor-derived immunosuppressive cytokines in response to bacteria are neglected.
These often-overlooked biological phenomena may provide insight into the indirect effects of anti-cancer bacteria on tumor cells and into the tumor's response to their presence in its environment.
\subsection{Deep Learning  Approaches to Tumor Modeling}
The intersection of deep learning and mathematical oncology has led to a rapidly growing body of research. However, physics-informed methods for modeling complex, multi-species tumor microenvironment remain largely unexplored.  Early studies used PINNs in simple settings. For instance, in this work \cite{rodrigues2024using} applied PINNs to find parameters in classical scalar tumor growth ODE models, such as the Verhulst logistic and Montroll power-law equations. This work showed robustness to experimental noise. However, these models were purely temporal and spatially uniform, which limited their relevance for tumors influenced by spatial gradients of nutrients and signaling molecules. Other studies have focused on diagnostic applications instead of mechanistic modeling. The paper \cite{kandasamy2025optimized}  developed an attention-based multi-scale convolutional neural network optimized with hybrid multi-objective CAT algorithms, achieving a high  accuracy in classifying bone marrow cancer cells. While this approach is highly effective, it relies entirely on data and does not include governing equations. This makes it complementary to, but fundamentally different from, physics-based tumor modeling. This framework \cite{sun2025physics} a physics-informed U-Net for dose prediction in intensity-modulated radiation therapy. Here, physical knowledge was included through structured input engineering rather than explicit PDE residual constraints.   This shows that “physics-informed” methods cover a wide range of enforcement strategies. A more rigorous PDE-constrained formulation was later proposed in this paper \cite{zhang2025personalized}. This work combined PINNs with the Fisher–KPP reaction–diffusion equation. This study showed the clinical feasibility of PINNs for single-species tumor modeling.  Most recently,  the work  \cite{el2026tumor} proposed a digital twin framework integrated U-Net segmentation, Chan–Vese active contour refinement, and a PINN-based reaction–diffusion solver. 

Despite significant advances, one limitation remains. Most tumor response models based on PINN networks focus on a single cell population. As a result, they cannot account for interactions underlying emerging therapies. To address this, the present work proposes and analyzes a system of five coupled nonlinear reaction-diffusion on a tissue domain. This model describes tumor response to bacterial therapy, tumor oxygen consumption, tumor-secreted immunosuppressive cytokines, and quorum-sensing regulation. Unlike previous methods that treat the basic equations as soft constraints without formal guarantees, our work establishes the global well-posedness of the system via a mass-control structure. Furthermore, we identify three  biologically relevant steady-states. In addition, we formally provide the suitability analysis for PINN approximation for considered system, including approximation error estimates.

\subsection{Contributions}
\label{sec:contributions}

This paper makes the following four contributions.

\begin{enumerate}

\item \textbf{A coupled reaction-diffusion system for BCT.}
We formulate and described system~\eqref{couple}, coupling tumor density response with quorum-sensing inhibition, hypoxia-activated bacterial growth (by a inverse Michaelis–Menten function), vascular oxygen exchange, tumor-induced immunosuppressive cytokines
production in response to bacteria (see Section \ref{sec:model}). The system is the first to simultaneously integrate  these cross-interactions.
\item \textbf{Well-posedness and steady-states analysis.}
Under general assumptions \textit{\textbf{(H)}} on initial data and system parameters in \eqref{couple}, we prove the following: The global existence, uniqueness, regularity, and positivity of a bounded solution in an appropriate Sobolev space (see Theorem \ref{wp} and Proposition \ref{propReg}, Section \ref{MAsection}). We also provide a steady-state stability analysis and show that no Turing instability is induced by diffusion. Details are provided in Section \ref{MAsection}.
\item \textbf{PINN convergence analysis for our system.}
We establish convergence guarantees for the proposed PINN applied to the system~\eqref{couple}. Specifically, our analysis includes: (i) a Lipschitz bound on the nonlinear operator (Lemma~\ref{lem:lipschitz}); (ii) a stability theorem (Theorem~\ref{thm:stability}) relating approximation error to the network residual via energy estimates; (iii) an approximation theorem (Theorem~\ref{thm:approx}); and (iv) a generalization bound (Lemma~\ref{lem:mc}) for the discrete loss. Finally, Theorem~\ref{thm:main} combines these results to characterize how the method converges as a function of the network architecture and the total number of collocation points.
\item \textbf{Numerical simulations.}
We validate the proposed framework with numerical simulations. We conducted various experiments to analyses the different behaviors of solutions. They show that the network describes the complex nonlinear dynamics of the underlying biological system with high precision and does not rely on synthetic or assumed parameter values. Simulations confirm the expected interactions  between the system components, that are illustrated in Figure~\ref{diag}.
\end{enumerate}
The rest of the article is organized as follows. Section~\ref{sec:model} formulates the mathematical model~\eqref{couple}. 
Section~\ref{MAsection} presents the mathematical analysis, including the well-posedness theorem with a complete proof and the steady-state characterization. 
Section~\ref{sec:pinn_method} develops the PINN framework for the considered system. 
Section~\ref{sec:convergence} proves the convergence theorems for the applied PINNs and discusses parameter calibration. 
Section~\ref{sec:num} presents validation results from various numerical experiments. 
Finally, Section~\ref{sec:conclusion} discusses limitations and perspectives.
\section{Mathematical Modeling}\label{sec:model}
\subsection{Evolution of Tumor Density under Treatment}
Mathematical modeling of tumor evolution under therapy (chemotherapy, immunotherapy, radiotherapy, etc.) has been intensively studied in recent years (see e.g. \cite{EFTIMIE2021110739,belmiloudi:hal-01524395,jarrett2018mathematical,jarrett2018incorporating,alma991008704579706535,Lorenzo2021QuantitativeIV,LBH2,LBH1} and the references therein) . Its value lies particularly in understanding the dynamic mechanisms of growth, resistance, and response to treatment. The tumor response model adopted here describes the indirect action of anticancer bacteria in tumor cells density $T$ in the tissue via diffusible molecular signals $S$ released by bacteria. We consider the following reaction diffusion equation:
\begin{align}\label{T}
\hspace*{-1cm}&\frac{\partial T}{\partial t}(t,x)=D_{T}\Delta T(t,x) + \rho_{T }T(t,x) \left(\!\!1 - \frac{T(t,x)}{c}\!\!\right )\nonumber\\&- \delta_{T} T(t,x) - \alpha_{S} S(t,x) T(t,x) \mbox{ in } \mathcal{Q}\!=\!(0, \tau)\times\Omega.
\end{align}
In equation \eqref{T}, $D_T$ is the diffusion coefficient of tumor cells in tissue. The second term in the second member of the equation represents the standard term describing tumor cell proliferation, characterized by an intrinsic proliferation rate $\rho_{T}$ and limited by the maximum capacity of the microenvironment $c$ \cite{howell2025mathematical}. The term $-\delta_{T}T$ is natural mortality due to apoptosis or necrosis. The last term $-\alpha_{S}ST$ reflects the influence of the diffusible bacterial signal $S$ on the tumor population. It describe the indirect and regulated cytotoxicity of bacteria. The higher the concentration of $S$, the greater the inhibition of tumor cells. The coefficient $\delta_{T}$ is the degradation rate, and $\alpha_{S}$ represents the efficacy of the cytotoxic action of the diffusible bacterial signal $S$. The set $\Omega\subset\mathbb{R}^{d}$ ($d=1,2,3$) is open bounded represents a tissue, whose boundary $\partial\Omega$ is sufficiently regular. The parameter $\tau>0$ is the final time horizon. The quantity $T(t, x)$ is the tumor density at time $t\in (0, \tau)$ and position $x\in\Omega$.
\subsection{Evolution of Bacterial Density}
The evolution of bacterial density $B$ in tumor tissue is modeled by taking into account their spatial diffusion, their proliferation controlled by oxygen availability $O$, their natural mortality, and their elimination by immunosuppressive cytokines $I$ present in the microenvironment. We consider the following equation:
\begin{align}\label{B}
\frac{\partial B}{\partial t}(t, x)&= D_{B} \Delta B(t, x) + \rho_{B} B(t, x) \frac{K_{H}}{K_{H} + O(t, x)} \nonumber\\&- \delta_{B }B(t, x) - \beta_{I}I(t, x) B(t, x) \mbox{ in } \mathcal{Q}.
\end{align}
The model \eqref{B} combines several key biological processes. The quantity $B(t,x)$ is the bacterial density at time $t\in (0, \tau)$ and position $x\in\Omega$. The term $D_{B} \Delta B$ represents the spatial diffusion of bacteria in tumor tissue. The coefficient $\rho_{B}$ is the bacterial proliferation rate. The bacterial growth is described by the term $\rho_{B} B {K_H}/{(K_{H} + O(t,x))}$. In this term, the quantity ${K_H}/{(K_{H} + O(t,x))}$ reflects the preference of bacteria for hypoxia conditions. In areas with high oxygen concentrations, there is low bacterial growth. For example, therapeutic bacteria such as \emph{Salmonella} \emph{Typhimurium} or \emph{Clostridium} \emph{novyi}-NT germinate and proliferate in hypoxia regions \cite{zhang2024attenuated,sharafabad2024ability}.
 The term $-\delta_{B}B$ corresponds to the natural mortality of bacteria with the mortality rate $\delta_{B}$. The term $-\beta_{I}I B$ models bacterial elimination by immunosuppressive cytokines $I(x,t)$ with $\beta_{I}$ the efficiency of the immunosuppressive response. The constant $D_{B} $ is the diffusion coefficient of bacteria, and  $\rho_{B}$ is the bacterial proliferation rate. The constant $K_H$ is the oxygen inhibition constant.
 \subsection{Evolution of Oxygen Density}
 The evolution of oxygen concentration, denoted as $O(x,t)$, in tissues is described by a diffusion-consumption model \eqref{O} that includes an external oxygen \cite{mascheroni2020,milotti2020oxygen}. This model reflects how oxygen diffuses through tissue, is consumed by tumor cells, and is replenished from the surrounding vascular environment. It provides a framework for representing the spatiotemporal dynamics of hypoxia, which is a crucial factor in tumor growth and bacterial colonization. For the oxygen concentration in tissues, we consider the following equation: 
 \begin{align}\label{O}
\frac{\partial O}{\partial t}&=D_{O}\Delta O(t,x)-\gamma_{T} T(t,x)O(t,x)\nonumber\\&+ \gamma_{vas}(O_{vas}-O(t,x)) \mbox{ in } \mathcal{Q}.
\end{align}
The equation \eqref{O} describes the classical processes that govern oxygen distribution within the tumor region. The term $D_{O} \Delta O$ represents the spatial diffusion of oxygen through the tissue, where $D_{O}$ is the diffusion coefficient that depends on the medium's structure and its porosity. 
The terms $-\gamma_{T}TO$ represents the consumption of oxygen by tumor cells. The multiplicative form of these terms indicates that oxygen consumption increases with cell density and local oxygen availability. The coefficients $\gamma_{T} $ represent the consumption rate.
The term $\gamma_{vas} (O_{vas} - O(t,x))$ accounts for the external oxygen supply from the vascular environment. The parameter $\gamma_{vas}$ controls the rate at which the local oxygen concentration approaches the reference value $O_{vas}$. The value of $O_{vas}$ represents the vascular irrigation. Including an explicit term representing internal vascular oxygen supply is essential for accurately modeling oxygen distribution within the tumor. It enables a proper balance between diffusion, cellular consumption, and the limited contribution of blood perfusion. Without such a term, oxygen gradients would be underestimated, resulting in an inadequate representation of hypoxia regions.
 \subsection{Evolution of Immunosuppressive Cytokines Density}
The evolution of the density of immunosuppressive cytokines, denoted $I(x,t)$, in the TME is described by a reaction-diffusion equation \eqref{I} inspired by the work of \cite{howell2025mathematical}. The equation describes the biological process by which tumor cells produce immunosuppressive cytokines to evade therapies. Secreted by tumor cells, these cytokines, such as TGF-$\beta$ and IL-$10$, diffuse into surrounding tissues to establish an immunosuppressive tumor microenvironment and naturally degrade. Together, they control inflammation, angiogenesis, and tumor cell survival \cite{terry2020hypoxia,quail2013, hanahan2022}. For example, cytokines IL-$10$ and TNF-$\alpha$ activate the NF-$\kappa$B signaling pathway, contributing to inflammation and tumor progression. Their evolution is described by the following equation
\begin{align}\label{I}
\hspace*{-0.5cm}\frac{\partial I}{\partial t}(t,x)= D_{I}\Delta I(t,x) + \beta_{T}T(t,x)-\delta_{I}I(t,x) \mbox{ in } \mathcal{Q}.
\end{align}
The model describes the distribution of tumor immunosuppressive cytokines in the tissue. The term $D_{I} \Delta I$ represents the spatial diffusion of cytokines through the tissue, where $D_{I}$ is the diffusion coefficient. The term $\beta_{T} T$ represents cytokines  production by tumor cells. This production is proportional to the local density of tumor cells and reflects tumor-induced immune activation. The term $-\delta_{I} I$ represents the spontaneous degradation of cytokines, taking into account their half-life and elimination by consumption by other immune cells. By combining these terms, the model allows for the consideration of spatial gradients and the temporal evolution of immunosuppressive cytokines. This approach is essential for representing inhibitory effects on bacterial growth. Consequently, \eqref{I} enables the investigation of immune feedback on both bacteria and tumor.
 \subsection{Evolution of Diffusible Bacterial Signals}
 As explained in the first part of the introduction, the integration of diffusible bacterial molecular signals represented by $S$ acting on tumor dynamics allows us to understand microbial communication, such as the \emph{quorum}-\emph{sensing} (\cite{balhouse2017n}) on the viability of cancer cells. These signals can be \emph{acyl}-\emph{homoserine lactones} (AHLs) or \emph{autoinducing peptides} (AIPs) \cite{guo2025}. The dynamics of the diffusible bacterial signal $S(x,t)$ in time and space are described by a diffusion-production-degradation model given by the equation \eqref{S}. This model shows how the signal diffuses through tissue, is produced by bacteria, and degrades naturally. It shows the influence of bacterial signals of tumor viability carried by the term $-\alpha_{S}ST$ in the tumor equation \eqref{T}. We consider the following equation:
\begin{align}\label{S}
\hspace*{-0.8cm}\frac{\partial S}{\partial t}(t,x)= D_{S}\Delta S(t,x) + \beta_{B}B(t,x)-\delta_{S}S(t,x) \mbox{ in } \mathcal{Q}.
\end{align}
The term $D_{S} \Delta S$ refers to the spatial diffusion of the signal within the tissue, where $D_{S}$ is the diffusion coefficient. The term $\beta_{B} B$ models the production of the signal by bacteria, with $\beta_{B}$ the production parameter. Meanwhile, the term $-\delta_{S}S$  accounts for the natural degradation of the signal in the tissue, where $\delta_{S}$ is the degradation coefficient.
The \emph{quorum}-\emph{sensing} mechanism plays an essential role in the collective coordination of bacteria within the tumor microenvironment. Bacteria produce and release a diffusible molecular signal into the environment, the concentration of which increases with bacterial density. 
When this concentration exceeds a certain threshold, it triggers the expression of specific genes linked to virulence, cytotoxicity, or the production of anti-cancer enzymes. This approach makes \eqref{S} more consistent with experimental observations of bacterial therapies using \emph{quorum}-\emph{sensing} strains (e.g., \emph{Salmonella} in \cite{howell2025mathematical} and \emph{Pseudomonas} in \cite{pang2022pseudomonas}). In the literature, mathematical modeling of molecular signals of bacteria is often neglected. Their explicit inclusion highlights their diffuse local interactions with the tumor.\\
The Figure \ref{diag} presents a simplified diagram of the main interactions between variables. The arrows indicate the influences (positive in green, negative in red) deduced from the reaction terms of the equations (without diffusion). The nodes represent only the variables $T$ (tumor), $B$ (bacteria), $O$ (oxygen), $I$ (immunosuppressive cytokines), $S$ (signal). Tumor cells consume oxygen ($T$ $\to$ $O$), which promotes hypoxia and stimulates bacterial growth ($O$ $\to$ $B$). Bacteria produce a cytotoxic diffusible signal ($B$ $\to$ $S$ $\to$ $T$) that inhibits tumor proliferation. The tumor also induces the production of immunosuppressive cytokines ($T$ $\to$ $I$), which exert an inhibitory effect on bacteria ($I$ $\to$ $B$), thus forming a network of negative and positive feedback loops. This interaction describes the combined effects of hypoxia, bacterial cytotoxicity, and the immune response on the tumor.

In this paper, we will consider the following initial conditions, e.g., respectively the initial tumor density, bacterial density, oxigen concentration, immunosuppressive cytokines concentration, and the diffusible bacterial signals at initial time $t=0$ and for all $x\in\Omega$:
\begin{align}\label{initial}
\hspace*{-1cm}&T(0,x)=T_{0}(x), B(0,x)=B_{0}(x), O(0,x)=O_{0}(x), \nonumber\\&I(0,x)=I_{0}(x), S(0,x)=S_{0}(x) \mbox{ in } \Omega,
\end{align}
with homogeneous Neumann boundary conditions as follows
\begin{equation}\label{bord}
\hspace*{-1cm}\displaystyle{\frac{\partial T}{\partial n}=\frac{\partial B}{\partial n}=\frac{\partial O}{\partial n}=\frac{\partial I}{\partial n}=\frac{\partial S}{\partial n}=0 \mbox{ on } \Sigma:=(0, \tau)\times\partial\Omega.}
\end{equation}
where the vector $n$ is the outward normal to $\partial\Omega$.\\
By combining equations \eqref{T}-\eqref{B}-\eqref{O}-\eqref{I}-\eqref{S}-\eqref{initial}-\eqref{bord}, we obtain the following interactive system of diffusion reaction equations:
\begin{equation}\label{couple}
\hspace*{-0.8cm}\begin{cases}
\frac{\partial T}{\partial t}(t,x)& =D_{T}\Delta T(t,x) + \rho_{T }T(t,x) \left(1 - \frac{T(t,x)}{c}\right)\\&- \delta_{T} T(t,x) - \alpha_{S} S(t,x) T(t,x) \mbox{ in } \mathcal{Q},
\\[0.4em]
\displaystyle
\frac{\partial B}{\partial t}(t, x)& = D_{B} \Delta B(t, x) + \rho_{B} B(t, x) \frac{K_{H}}{K_{H} + O(t, x)}\\& - \delta_{B }B(t, x) - \beta_{I}I(t, x) B(t, x) \mbox{ in } \mathcal{Q},
\\[0.4em]
\displaystyle
\frac{\partial O}{\partial t}(t,x)& =D_{O}\Delta O(t,x)-\gamma_{T} T(t,x)O(t,x) \\&+ \gamma_{vas}(O_{vas}-O(t,x)) \mbox{ in } \mathcal{Q}, 
\\[0.4em]
\displaystyle
\frac{\partial I}{\partial t}(t,x)& = D_{I}\Delta I(t,x) + \beta_{T}T(t,x)-\delta_{I}I(t,x) \mbox{ in } \mathcal{Q},
\\[0.4em]
\displaystyle
\frac{\partial S}{\partial t}(t,x)& = D_{S}\Delta S(t,x) + \beta_{B}B(t,x)-\delta_{S}S(t,x) \mbox{ in } \mathcal{Q},
\\[0.4em]
&\displaystyle{\frac{\partial T}{\partial n}=\frac{\partial B}{\partial n}=\frac{\partial O}{\partial n}=\frac{\partial I}{\partial n}=\frac{\partial S}{\partial n}=0 \mbox{ on } \Sigma,}
\\[0.4em]
\displaystyle
T(0,x)&= T_{0}(x), B(0,x)=B_{0}(x), O(0,x)=O_{0}(x),\\&I(0,x)=I_{0}(x), S(0,x)=S_{0}(x) \mbox{ in } \Omega.
\end{cases}
\end{equation}
The system \eqref{couple} describes the simultaneous evolution of biological variables $T$,$B$,$O$,$I$,$S$, representing tumor density, bacterial density, oxygen concentration, immune cytokine concentration, and diffusible signal concentration, respectively. It formalizes the cross-feedback loops that govern the balance between tumor cell proliferation and destruction, particularly under the effect of hypoxia and bacterial \emph{quorum}-\emph{sensing} mechanisms. This coupling provides a better representation of the complex dynamics of the tumor microenvironment under bacterial therapy. The mathematical analysis of system \eqref{couple} will be discussed in the following section.
\section{Mathematical Analysis}\label{MAsection}
\subsection{Existence and Uniqueness of Global Classical Solutions}\label{sec:math}
In this section, we state and prove the existence, uniqueness, and regularity of global classical solutions of the system \eqref{couple}. 
Before that, we make the following assumptions, which we assume hold throughout the paper.
\begin{General assumptions}
\textit{ $\textbf{(H)}:$
We assume that the initial data $(T_{0},B_{0},O_{0},I_{0},S_{0})=:\mathbf{U}_{0} \in(L^{\infty}(\Omega)_{+}\cap H^{1}(\Omega))^{5}$. The real diffusion coefficients  $D_T, D_B, D_O, D_I, D_S$ are non-negative, and 
  the real proliferation, degradation, and consumption coefficients $\rho_{T}, \rho_{B}, \beta_{B}, \beta_{T}, \beta_{I}, \delta_{T}, \delta_{B}, \delta_{I}, \delta_{S},\\
  \alpha_{S}, \gamma_{T}, \gamma_{vas}, K_{H},O_{vas}$ are non-negative.
  }
  \end{General assumptions}
\begin{figure}
    \centering
\includegraphics[width=0.7\linewidth]{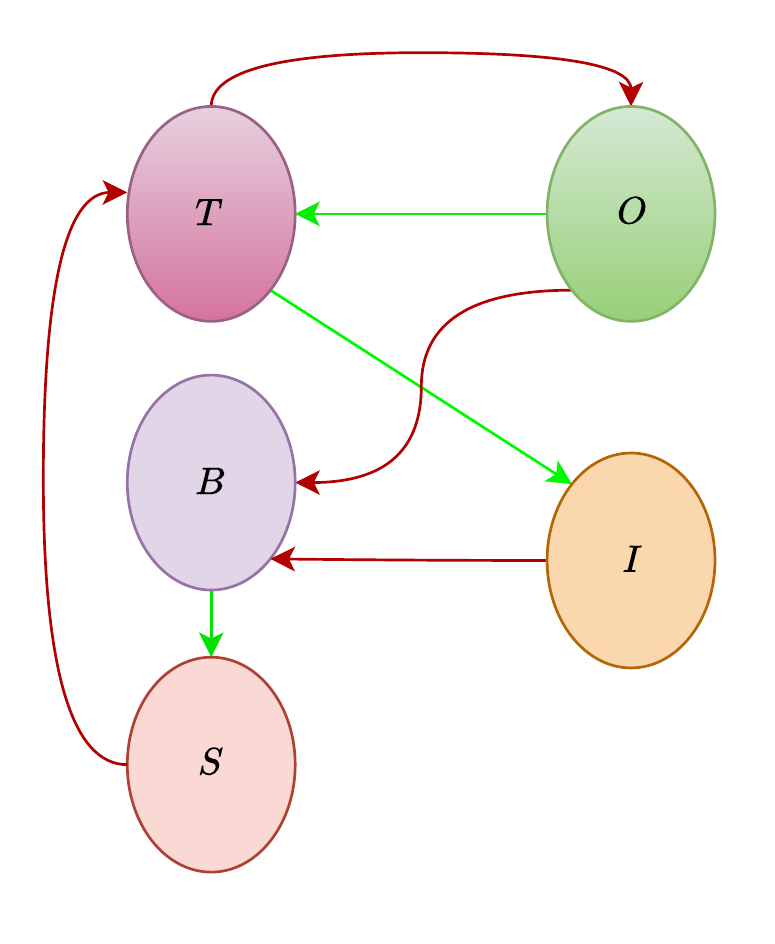}
\caption{Simplified diagram of the main interactions variables $T,B,O,I,S$.}
\label{diag}
\end{figure}
Throughout, we denote by $C$ a generic positive constant whose exact value is not important and may vary from line to line.\\
Now, we collect some known definitions of a classical solution of \eqref{couple}. For more details, we refer the reader to \cite{pierre2010global}.
\begin{definition}\label{defcs}
The classical solution to \eqref{couple} on $[0,\tau)$,
is a function
$
\mathbf{U}:=(T,B,O,I,S)^{\top} : [0,\tau) \times \Omega \to \mathbb{R}^5
$
such that
$$
\mathbf{U} \in C\big([0,\tau);(L^1(\Omega))^5\big)
\cap (L^\infty\big([0,\tau-\tau^{*}]\times\Omega\big))^5,
 \forall \tau^{*} \in (0,\tau),
$$
and for all $k,l=1,\dots,N$ and all $p\in[1,\infty)$,
$$
\partial_t \mathbf{U},\ \partial_{x_k}\mathbf{U},\ \partial_{x_k x_l}\mathbf{U}\in (L^p\big((\tau^{*},\tau-\tau^{*})\times\Omega\big))^5,
$$
and the equations in \eqref{coupleAd} are satisfied a.e..
\end{definition}
Let the function $\mF= (F_1,\dots,F_5)\in \mathbb{R}^5$.
We say that $\mF$ is \emph{quasi-positive} if it satisfies the following property:
\begin{align}\label{P}
\forall r \in ([0,+&\infty))^5,\ \forall i = 1,\dots,5,\nonumber\\
&F_{i}(r_1,\dots,r_{i-1},0,r_{i+1},\dots,r_m) \ge 0,
\end{align}
where we denote $r = (r_1,\dots,r_5)$.\\
We say that $\mF$ satisfies a \emph{mass-control structure} if there exists a constant $C>0$, a vector $\mathbf{b}\in\mathbb{R}^{5}$, and
a lower triangular invertible $5\times 5$ matrix $\mathbf{P}$ with nonnegative entries such that
\begin{equation}\label{M}
\forall r \in ([0,+\infty))^5, \quad
\mathbf{P}\mF(r)
\le C \left[ 1 + \sum_{i=1}^5 r_i \right]\mathbf{b}.
\end{equation}
\begin{lemma}\label{MP}
  The nonlinearity $ \mF=(F_{T},F_{B},F_{O},F_{I},F_{S})^{\top}$ from \eqref{couple} with $U=(T,B,O,I,S)^{ \intercal}\ge0$ componentwise and
\begin{align}\label{F}
\begin{split}
F_{T}({U})&=\rho_{T }T\left(1 - \frac{T}{c}\right)-\delta_{T} T-\alpha_{S} S T;\\
  F_{B}({U})&=\rho_{B}B\frac{K_{H}}{K_{H}+O}-\delta_{B}B-\beta_{I}IB;\\
  F_{O}({U})&=-\gamma_{T} T O +\gamma_{vas} (O_{vas}-O); \\
  F_{I}({U})&= \beta_{T}T-\delta_{I}I;\\
 F_{S}(U)&= \beta_{B}B-\delta_{S}S,
 \end{split}
   \end{align} 
   is quasi-positive and satisfies the mass-control structure.
\end{lemma}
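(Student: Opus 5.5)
Quasi-positivity is checked one component at a time: set the relevant variable to zero and keep the others nonnegative. Setting $T=0$ gives $F_T=0$. Setting $B=0$ gives $F_B=0$. Setting $O=0$ gives $F_O=\gamma_{vas}O_{vas}\ge 0$, since both parameters are nonnegative by \textbf{(H)}. Setting $I=0$ gives $F_I=\beta_T T\ge 0$, and setting $S=0$ gives $F_S=\beta_B B\ge 0$. So \eqref{P} holds directly from the sign assumptions in \textbf{(H)}.

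For the mass-control structure \eqref{M}, the plan is to take $\mathbf{P}$ to be the $5\times 5$ identity. It is lower triangular, invertible and has nonnegative entries. We take $\mathbf{b}=(1,1,1,1,1)^\top$ and bound each component separately by $C(1+\sum_i r_i)$.

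For $r\ge 0$ componentwise, each component is bounded by dropping its nonpositive terms:
\begin{itemize}
\item $F_T\le \rho_T T$, since $-\rho_T T^2/c$, $-\delta_T T$ and $-\alpha_S ST$ are all $\le 0$.
\item $F_B\le \rho_B B$, since $K_H/(K_H+O)\le 1$ for $O\ge0$ and $K_H>0$.
\item $F_O\le \gamma_{vas}O_{vas}$.
\item $F_I\le \beta_T T$.
\item $F_S\le \beta_B B$.
\end{itemize}
Setting $C=\max\{\rho_T,\rho_B,\gamma_{vas}O_{vas},\beta_T,\beta_B,1\}$ then gives $F_i(r)\le C\,(1+\sum_j r_j)$ for every $i$, which is \eqref{M}.

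The only delicate point is the factor $K_H/(K_H+O)$. If $K_H=0$ the quotient is undefined at $O=0$. We therefore interpret the model with $K_H>0$, or, if $K_H=0$, define the factor as $0$ for $O>0$ and note it is still bounded by $1$. Either way the bound $F_B\le\rho_B B$ survives. A stronger, Pierre-type formulation would also ask for the weighted sum $\sum_i F_i\le C(1+\sum_i r_i)$. This follows by summing the five bounds above, so it is not an obstacle.
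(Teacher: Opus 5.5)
Your proposal is correct and follows the paper's proof. Quasi-positivity is obtained by setting each variable to zero in turn. Mass control uses $\mathbf{P}=\mathrm{diag}(1,1,1,1,1)$ and $\mathbf{b}=(1,1,1,1,1)^{\top}$, obtained by discarding the nonpositive terms and using $K_H/(K_H+O)\le 1$. The one difference is that the paper writes its bound for the sum $F_T+F_B+F_O+F_I+F_S$, whereas you keep the five componentwise bounds, which is what \eqref{M} with $\mathbf{P}$ the identity literally requires. You also correctly have $\beta_B B$ where the paper's display repeats $\rho_B B$.
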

\begin{proof}
Let $U=(T,B,O,I,S)^{ \intercal}\ge0$ componentwise, we have
\begin{gather}
\begin{split}
 F_{T}(0,B,O,I,S)&=0\ge 0;\\
  F_{B}(T,0,O,I,S)&= 0\ge 0;\\
  F_{O}(T,B,0,I,S)&= \gamma_{vas}O_{vas}\ge 0;  \\
  F_{I}(T,B,O,0,S)&= \beta_{T}T\ge 0;\\
 F_{S}(T,B,O,I,0)&= \beta_{B}B\ge 0.
 \end{split}
   \end{gather}
   Then, according to \eqref{P}, $\mF=(F_{T},F_{B},F_{O},F_{I},F_{S})^{\top}$ is quasi-positive. Moreover, for all $U=(T,B,O,I,S)^{ \intercal}\ge0$ componentewise, we have ${K_{H}}/({K_{H}+O})\leq 1$ and that 
   \begin{gather}\label{mcs}
       \begin{split}
     F_{T}(U)&+F_{B}(U)+F_{O}(U)+F_{I}(U)+F_{S}(U)\\
     &\leq \rho_{T}T+\rho_{B}B+\gamma_{vas}O_{vas}+\beta_{T}T+\rho_{B}B
     \\
    & \leq C(1+T+B)\\
    & \leq C(1+T+B+O+I+S),
       \end{split}
   \end{gather}
   implies that $\mF=(F_{T},F_{B},F_{O},F_{I},F_{S})^{\top}$ satisfies the mass-control structure in the sense of \eqref{M}  for  $\mathbf{P}=\mathrm{diag}(1,1,1,1,1)$ and $\mathbf{b}=(1,1,1,1,1)^{\top}$.
\end{proof}
The existence of global classical solutions of \eqref{couple} is given by the following 
\begin{theorem}\label{wp}
    The system \eqref{couple} has a unique global classical solution $\mathbf{U}=(T,B,O,I,S)^{\top}\in (L^{\infty}(\mathcal{Q})_{+})^{5}$ corresponding to the initial data $\mathbf{U}_{0}=(T_{0},B_{0},O_{0},I_{0},S_{0})^{\top}\in (L^{\infty}(\Omega)_{+})^{5}$ in the sense of Definition \ref{defcs}.
\end{theorem}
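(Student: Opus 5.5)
The plan is to use the global existence theory for reaction–diffusion systems with quasi-positive nonlinearities and a mass-control structure (Pierre, \cite{pierre2010global}), combined with a triangular/cascade argument that yields uniform $L^\infty$ bounds directly. Local existence and uniqueness of a classical solution on a maximal interval $[0,\tau_{\max})$ follows from standard semigroup theory for the Neumann heat semigroup. This uses two facts: $\mF$ from \eqref{F} is locally Lipschitz on $[0,\infty)^5$, since $K_H/(K_H+O)$ is smooth for $O\ge0$ when $K_H>0$ (if $K_H=0$ the term vanishes), and the data lie in $L^\infty$. Nonnegativity of the solution then follows from the quasi-positivity established in Lemma \ref{MP}, through the usual comparison argument applied to each equation with the other components frozen. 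Blow-up alternative: if $\tau_{\max}<\tau$, then $\|\mathbf U(t)\|_{L^\infty}\to\infty$. It therefore suffices to bound all components in $L^\infty(\mathcal Q)$ on any finite horizon.

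The a priori bounds follow a cascade, and I plan to exploit the structure of the system rather than only the abstract $L^1$ mass control.
\begin{enumerate}
\item \textbf{$T$.} Since $S,T\ge0$, we have $\partial_tT-D_T\Delta T\le\rho_TT(1-T/c)$. The comparison principle with the ODE solution then gives $0\le T\le\max(c,\|T_0\|_\infty)$.
\item \textbf{$O$.} Since $F_O\le\gamma_{vas}(O_{vas}-O)$, comparison gives $0\le O\le\max(O_{vas},\|O_0\|_\infty)$.
\item \textbf{$I$.} The equation for $I$ is linear with bounded source $\beta_TT$, so $0\le I\le\|I_0\|_\infty+\beta_T\|T\|_\infty\tau$.
\item \textbf{$B$.} Using $K_H/(K_H+O)\le1$ and $I\ge0$, we get $\partial_tB-D_B\Delta B\le\rho_BB$, hence $B\le\|B_0\|_\infty e^{\rho_B\tau}$.
\item \textbf{$S$.} With $B$ bounded, the same argument as for $I$ bounds $S$.
\end{enumerate}
All bounds are independent of $\tau_{\max}$, so $\tau_{\max}\ge\tau$ and the solution is global. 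The mass-control structure of Lemma \ref{MP} gives, as a consistency check, an $L^1$ bound on $\sum$ components via integrating and Gronwall; it could also be used to invoke the general theorem of \cite{pierre2010global} directly. The explicit cascade is simpler here because all nonlinearities are at most quadratic and sign-favorable.

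Uniqueness: two bounded nonnegative solutions take values in a compact set on which $\mF$ is Lipschitz. Testing the difference equations with the difference in $L^2$ (Neumann conditions kill boundary terms) and applying Gronwall gives that the difference vanishes. The regularity required in Definition \ref{defcs} ($\partial_t\mathbf U,\nabla^2\mathbf U\in L^p$ away from $t=0$) follows from maximal $L^p$-regularity of the Neumann heat operator, since $\mF(\mathbf U)\in L^\infty(\mathcal Q)$. Continuity in $L^1$ at $t=0$ follows from $L^\infty$ initial data and the semigroup representation.

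The main obstacle I expect is the degenerate case where some diffusion coefficient vanishes, which (H) allows. There the equation becomes an ODE in $t$ for each $x$, and parabolic regularity in $x$ fails. I would treat that component pointwise: the comparison bounds still hold via ODE comparison, uniqueness via Gronwall still works, and the second-derivative regularity of Definition \ref{defcs} must be understood only for components with positive diffusion. Assuming strictly positive $D$'s for the classical-regularity part would be the fallback.
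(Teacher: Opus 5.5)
Your proof is correct, but it follows a different route from the paper's. The paper's proof is essentially a citation. After Lemma~\ref{MP} (quasi-positivity and the mass-control inequality \eqref{M}) and the observation that $\mF$ is $C^1$ with at most polynomial growth, it invokes \cite[Lemma 1.1 and Theorem 3.5]{pierre2010global}. The global part of that result relies on a priori $L^p$ estimates coming from the triangular mass-control structure, combined with the polynomial growth of $\mF$ to reach $L^\infty$.

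You keep only the local theory: existence, uniqueness, positivity and the blow-up alternative. You replace the global theorem by a cascade of scalar comparison arguments. This works because each $F_i$ is bounded above by a function of its own variable plus terms in components already controlled: $T$ and $O$ stand alone, then $I$ follows from $T$, $B$ from the signs of $O$ and $I$, and $S$ from $B$.

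\emph{What your route buys.} It is elementary and self-contained, and it gives explicit bounds such as $0\le T\le\max(c,\|T_0\|_{\infty})$, $0\le O\le\max(O_{vas},\|O_0\|_{\infty})$ and $B\le\|B_0\|_{\infty}e^{\rho_B\tau}$. These quantify the constant $M$ that the paper later takes for granted in Lemma~\ref{lem:lipschitz}. It also sidesteps a small inconsistency in Lemma~\ref{MP}: that lemma only estimates the sum $\sum_i F_i$ while claiming $\mathbf{P}=\mathrm{diag}(1,\dots,1)$, which actually requires the componentwise upper bounds $F_i\le C(1+\sum_j r_j)$. These bounds do hold, and your cascade effectively uses sharper versions of them.

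\emph{What the paper's route buys.} It is shorter and more robust. The abstract theorem would still apply to couplings where a component is fed superlinearly through a mass-conserving reaction, e.g.\ $F_1=-r_1r_2^2$, $F_2=r_1r_2^2$. There a pointwise comparison on that component gives only a local bound, so your cascade would not close.

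Your caveat on degenerate diffusion applies equally to the paper. \textit{\textbf{(H)}} allows $D_i=0$, but the framework of \cite{pierre2010global} assumes strictly positive diffusion coefficients. The second-derivative regularity required in Definition~\ref{defcs} can fail without it, and Proposition~\ref{propReg} itself uses $d_0>0$. So positivity of all diffusion coefficients should be assumed in either proof.
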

\begin{proof}
The global well-posedness of \eqref{couple} is demonstrated using a more general framework presented in \cite[Lemma 1.1 and Theorem 3.5]{pierre2010global}.
Before applying this result, we prove that our system \eqref{couple} fits well within this framework by showing that for all $U=(T,B,O,I,S)^{\top}\ge0$ componentwise, the nonlinear operator  $\mF(U)$ satisfies \cite[conditions (P) and (M)]{pierre2010global} given by conditions \eqref{P} and \eqref{M}. Indeed, the nonlinearity $\mF$ has at most polynomial and of class $C^{1}(([0,\infty))^{5};\mathbf{R}^{5})$. By Lemma \eqref{MP}, $\mF$ satisfies the quasi-positivity condition and admits a mass-control structure. Then, by  \cite[Lemma 1.1 and Theorem 3.5]{pierre2010global}, for any initial data $\mathbf{U}_{0}=(T_{0}, B_{0},I_{0},S_{0})^{\top}\in(L^{\infty}(\Omega)_{+})^{5},$ the system \eqref{couple} admits a unique classical global solution $\mathbf{U}=(T,B,O,I,S)^{\top}\in (L^{\infty}(\mathcal{Q})_{+})^{5}$.
\end{proof}
Since from $\textbf{\textit{(H)}}$, we have more regularity on the initial data $\mathbf{U}_{0}$, we then demonstrate that the unique corresponding global classical solution satisfies the regularities given in the following
\begin{proposition}\label{propReg}
 The global classical solution $\mathbf{U}$ to \eqref{couple} corresponding to the initial data $\mathbf{U}_{0}\in(L^{\infty}(\Omega)_{+}\cap H^{1}(\Omega))^{5}$ satisfies
 \begin{gather}
 \begin{split}
    &\mathbf{U}\in C([0,\tau];(L^{2}(\Omega))^{5})\cap (L^{\infty}(0,\tau; H^{1}(\Omega)))^{5},\\
  &  \mathbf{U}\in \left(L^{2}(0,\tau; H^{2}(\Omega))\cap H^{1}(0,\tau; L^{2}(\Omega))\right)^{5}.
    \end{split}
 \end{gather}
 such that for all $t\in (0, \tau)$,
 \begin{align}
\begin{split}
\lVert \mathbf{U}(t)\lVert^{2}_{(L^{2}(\Omega))^{5}} &+2d_{0}\int^{t}_{0}\lVert\mathbf{\nabla} \mathbf{U}(s)\lVert^{2}_{(L^{2}(\Omega))^{5}}\d s \\&\leq C\lVert \mathbf{U}_{0}\lVert^{2}_{(L^{2}(\Omega))^{5}},
\end{split}
\end{align}
\begin{align}
\lVert\mathbf{\nabla}\mathbf{U}(t)\lVert^{2}_{(L^{2}(\Omega))^{5}}&+d_{0}\int^{t}_{0}\lVert\mathbf{\Delta}\mathbf{U}(s)\lVert^{2}_{(L^{2}(\Omega))^{5})}\d s \nonumber\\&\leq C\lVert\mathbf{\nabla}\mathbf{U}_{0}\lVert^{2}_{(L^{2}(\Omega))^{5}},
\end{align}
where $d_{0}:=\min(D_{T}, D_{B},D_{O},D_{I}, D_{S})>0.$
\end{proposition}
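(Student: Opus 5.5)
We will work from Theorem \ref{wp}. It already gives a unique nonnegative global classical solution $\mathbf{U}\in (L^{\infty}(\mathcal{Q})_{+})^{5}$. Denote $M:=\lVert\mathbf{U}\rVert_{(L^{\infty}(\mathcal{Q}))^{5}}$. The key observation is that on the set $\{0\le U_i\le M\}$ the nonlinearity $\mF$ of Lemma \ref{MP} is globally Lipschitz and bounded. Indeed, the only non-polynomial factor is $K_H/(K_H+O)\le 1$, whose derivative is bounded by $1/K_H$ for $O\ge 0$ (assuming $K_H>0$). Hence there is $L=L(M)$ with
\[
|\mF(\mathbf{U})|\le L(1+|\mathbf{U}|),
\]
and we obtain $\mathbf{F}:=\mF(\mathbf{U})\in (L^{\infty}(\mathcal{Q}))^{5}\subset (L^{2}(\mathcal{Q}))^{5}$. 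Each equation of \eqref{couple} is then a linear heat equation $\partial_t U_i-D_i\Delta U_i=F_i$ with homogeneous Neumann data. The proof therefore reduces to standard parabolic energy estimates, which we carry out formally on the classical solution. If needed, we justify them by a Galerkin approximation in the Neumann eigenbasis of $-\Delta$, or by the interior regularity from Definition \ref{defcs} combined with a limit $\tau^*\to0$.

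\textbf{Step 1 ($L^2$ estimate).} We multiply the $i$-th equation by $U_i$, integrate over $\Omega$, and use Green's formula with \eqref{bord}:
\[
\tfrac12\frac{d}{dt}\lVert U_i\rVert^2_{L^2}+D_i\lVert\nabla U_i\rVert^2_{L^2}=\int_\Omega F_i(\mathbf{U})U_i.
\]
We bound the right-hand side by $L(|\Omega|^{1/2}\lVert\mathbf{U}\rVert_{L^2}+\lVert\mathbf{U}\rVert^2_{L^2})$, using $F_i(0)$ only through the constant $\gamma_{vas}O_{vas}$. For the $T$ equation the cross terms are in fact signed, e.g. $-\alpha_S ST^2\le0$ and $-\rho_T T^3/c\le 0$. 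Summing over $i$ and applying Gronwall on $(0,\tau)$ yields the first inequality with $C=C(\tau,M,\text{parameters})$. One caveat: the constant source $\gamma_{vas}O_{vas}$ prevents a bound purely by $\lVert\mathbf{U}_0\rVert^2$. We will absorb it either by working with $O-O_{vas}$ or by writing $C(1+\lVert\mathbf{U}_0\rVert^2)$, and we interpret the stated $C$ accordingly. Continuity $\mathbf{U}\in C([0,\tau];L^2)$ follows from the Lions–Magenes lemma once Step 2 gives $\mathbf{U}\in L^2(H^1)\cap H^1((H^1)')$.

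\textbf{Step 2 ($H^1$ estimate).} We multiply the $i$-th equation by $-\Delta U_i$ and integrate. The Neumann condition gives $\int\partial_tU_i(-\Delta U_i)=\tfrac12\frac{d}{dt}\lVert\nabla U_i\rVert^2$, so
\[
\tfrac12\frac{d}{dt}\lVert\nabla U_i\rVert^2+D_i\lVert\Delta U_i\rVert^2\le \tfrac{D_i}{2}\lVert\Delta U_i\rVert^2+\tfrac{1}{2D_i}\lVert F_i(\mathbf{U})\rVert^2 .
\]
Since $\lVert F_i\rVert_{L^2}\le L(|\Omega|^{1/2}+\lVert\mathbf{U}\rVert_{L^2})$ is already controlled by Step 1, integrating in time gives the second inequality, together with $\mathbf{U}\in L^\infty(0,\tau;H^1)$ and $\Delta\mathbf{U}\in L^2(\mathcal{Q})$. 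Here $d_0>0$ is needed, which strengthens the non-negativity in \textbf{(H)} to positivity. Again, the right-hand side must contain the lower-order contribution $C(1+\lVert\mathbf{U}_0\rVert^2_{L^2})$. Alternatively, one can linearize the Lipschitz bound as $\lVert\nabla F_i\rVert\le L\lVert\nabla\mathbf{U}\rVert$ by multiplying by $-\Delta U_i$ and integrating by parts once, which leads to a pure gradient Gronwall argument.

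\textbf{Step 3 ($H^2$ and $H^1(L^2)$ regularity).} Elliptic regularity for the Neumann Laplacian on the smooth bounded domain $\Omega$ gives $\lVert U_i\rVert_{H^2}\le C(\lVert\Delta U_i\rVert_{L^2}+\lVert U_i\rVert_{L^2})$, hence $\mathbf{U}\in L^2(0,\tau;H^2)$. The equation then gives $\partial_t U_i=D_i\Delta U_i+F_i\in L^2(\mathcal{Q})$, i.e. $\mathbf{U}\in H^1(0,\tau;L^2)$.

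The main obstacle is rigor near $t=0$. Definition \ref{defcs} only provides $L^p$ regularity of derivatives on $(\tau^*,\tau-\tau^*)$, so the energy identities must be justified up to the initial time with merely $H^1$ data. We plan to handle this with Galerkin approximations of the linear problem with the fixed source $\mathbf{F}\in L^2(\mathcal{Q})$, pass to the limit, and invoke uniqueness of the weak solution to identify the limit with $\mathbf{U}$.
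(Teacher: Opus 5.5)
Your proposal follows essentially the paper's route (an $L^2$ energy estimate closed by Gr\"onwall, a second estimate from testing with $-\Delta\mathbf{U}$, then $\partial_t\mathbf{U}=\mathbf{D}\Delta\mathbf{U}+\mF(\mathbf{U})\in(L^2(\mathcal{Q}))^5$ and Lions--Magenes for time continuity), but with sounder justifications: using $M=\|\mathbf{U}\|_{(L^\infty(\mathcal{Q}))^5}$ is what legitimately gives $\|\mF(\mathbf{U})\|^2_{(L^2(\Omega))^5}\le C(1+\|\mathbf{U}\|^2_{(L^2(\Omega))^5})$ (the paper obtains it by ``squaring'' the one-sided bound \eqref{mcs}, which does not work), and your Galerkin-plus-uniqueness argument covers the passage to $t=0$, which the paper skips. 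Your caveat on the first estimate is right: for $\mathbf{U}_0=0$ and $\gamma_{vas}O_{vas}>0$ one gets $T=B=I=S\equiv0$ and $O(t)=O_{vas}(1-e^{-\gamma_{vas}t})$, so only $C(1+\|\mathbf{U}_0\|^2_{(L^2(\Omega))^5})$ is attainable, which is also all the paper's Gr\"onwall step produces; by contrast, the second estimate does hold exactly as stated via your integration-by-parts variant $\sum_i\int_\Omega F_i(\mathbf{U})(-\Delta U_i)=\sum_i\int_\Omega\nabla\bigl(F_i(\mathbf{U})\bigr)\cdot\nabla U_i\le C_M\|\nabla\mathbf{U}\|^2_{(L^2(\Omega))^5}$, so you should make that variant, not the $\|\mF(\mathbf{U})\|^2$ route, the main argument in Step 2.
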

\begin{proof}
Since $\mathbf{U}_{0}\in(L^{\infty}(\Omega)_{+}\cap H^{1}(\Omega))^{5}\subset (L^{\infty}(\Omega)_{+})^{5}$, then according to Theorem \ref{wp}, the system \eqref{couple} admits a unique global classical solution $\mathbf{U}\in (L^{\infty}(\mathcal{Q})_{+})^{5}$. Then, $\mathbf{U}$ verifies the following abstract form of \eqref{couple}.
\begin{equation}\label{Eqstate}
\begin{cases}
\frac{\partial \mathbf{U}}{\partial t} +\mathcal{\mathbf{A}} \mathbf{U}= \mF(\mathbf{U}) \mbox{ in } \mathcal{Q}, \\[0.3em]
\partial_{n} \mathbf{U}=0_{\mathbb{R}^{5}} \mbox{ in } \Sigma, \\[0.3em]
\mathbf{U}(0,x)=\mathbf{U}_{0}(x) \mbox{ in } \Omega,
\end{cases}
\end{equation}
where $\mathcal{\mathbf{A}}=-\mathcal{\mathbf{D}}\mathbf{\Delta}$ with  $\mathcal{\mathbf{D}}=\mathrm{diag}(D_{T}, D_B, D_O, D_I,D_S)$, and the nonlinearity $\mF=(F_{T},F_{B},F_{O},F_{I},F_{S})$ define in \eqref{F}. We have:
\begin{align}\label{FU}
(\mF(\mathbf{U}),\mathbf{U})_{(L^2(\Omega))^5}
&=\! \int_{\Omega}\!\!F_{T}(\mathbf{U})T\d x+\!\int_{\Omega}\!\! F_{B}(\mathbf{U})B\d x\nonumber\\&+\!\int_{\Omega}\!\!F_{O}(\mathbf{U})O\d x+\!\int_{\Omega}\!\! F_{I}(\mathbf{U})I\d x\nonumber\\&+\!\int_{\Omega}\!\!F_{S}(\mathbf{U})S\d x.
\end{align}
We will estimate each of the five terms of \eqref{FU}.
For the first term, using the fact that $T\ge0$ since $\mathbf{U}\ge0$ componentwise, we have
\begin{gather}\label{te1}
\begin{split}
\hspace{-0.8cm}\int_{\Omega} F_T(\mathbf{U})T\d x&= \int_{\Omega}\rho_{T }T^{2}\left(1 - \frac{T}{c}\right)-\delta_{T} T^{2}-\alpha_{S} S T^{2}\d  x\\
&\leq C\lVert T\lVert^{2}_{L^{2}(\Omega)}.
\end{split}
\end{gather}
For the second term of \eqref{FU}, using $\mathbf{U}\ge0$ then, we have $K_{H}/(K_{H}+O)\leq 1$ and that
\begin{gather}\label{be1}
\begin{split}
\hspace{-1cm}\int_\Omega\!\!F_{B}(\mathbf{U})B\d x\!&=\!\!\! \int_\Omega\!\!\big(\rho_{B}B^{2}\frac{K_{H}}{K_{H}+O}-\delta_{B}B^{2}-\beta_{I}IB^{2}\big)\\
&\leq C\lVert B\lVert^{2}_{L^{2}(\Omega)}.
\end{split}
\end{gather}
Similarly to the first two estimates, using the fact that $\mathbf{U}\ge0$, and Hölder's and Young's inequality,  we obtain
\begin{gather}\label{oe1}
\begin{split}
\hspace{-1cm}\int_{\Omega} F_{O}(\mathbf{U})O\d x
&=\!\!\int_{\Omega}\!\!\big(\!\!-\gamma_{T} T O^{2} +\gamma_{vas} (O_{vas}O-O^{2})\big)\\
&\leq \gamma_{vas}\int_{\Omega}O_{vas}O\d x\\
&\leq C\big(1+\lVert O\lVert^{2}_{L^{2}(\Omega)}\big).
\end{split}
\end{gather}
For the penultimate term, applying Hölder's and Young's inequality, we have
\begin{gather}\label{ie1}
\begin{split}
\int_{\Omega}F_{I}(\mathbf{U})I\d x&= \int_{\Omega}\big(\beta_{T}TI-\delta_{I}I^{2}\big)\d x\\
&\leq \beta_{I}\lVert T\lVert_{L^{2}(\Omega)}\lVert I\lVert_{L^{2}(\Omega)}\\
&\leq C\big(\lVert T\lVert^{2}_{L^{2}(\Omega)}+\lVert I\lVert^{2}_{L^{2}(\Omega)}\big),
\end{split}
\end{gather}
and that
\begin{gather}\label{se1}
\begin{split}
\int_{\Omega} F_{S}(\mathbf{U})S\d x&=\int_{\Omega}\big(\beta_{B}BS-\delta_{S}S^{2}\big)\d x\\
\leq & C \big(\|B\|_{L^2(\Omega)}^2+\|S\|_{L^2(\Omega)}^2\big).
\end{split}
\end{gather}
By summing all estimates \eqref{te1}, \eqref{be1}, \eqref{oe1}, \eqref{ie1}, \eqref{se1}, we obtain
\begin{gather}\label{estF}
\begin{split}
(\mathcal \mF(\mathbf{U}),\mathbf{U})_{L^2(\Omega)^5}\leq  C\left(1+\lVert\mathbf{U}\|^{2}_{(L^2(\Omega))^{5}}\right).
\end{split}
\end{gather}
Now, taking the inner product of the fist equation of \eqref{Eqstate} by $\mathbf{U}$ and using Definition \ref{defcs}, we have
\begin{align}\label{aps}
\langle \partial_t \mathbf{U}(t), \mathbf{U}(t)\rangle
&+ \big(\mathcal{\mathbf{D}}\mathbf{\nabla} \mathbf{U}(t), \mathbf{\nabla}\mathbf{U}(t) \big)_{(L^{2}(\Omega))^{5}}\nonumber\\& = \big( \mF(\mathbf{U}(t)), \mathbf{U}(t) \big)_{(L^{2}(\Omega))^{5}},
\end{align}
where $\langle \cdot, \cdot \rangle$ denotes the duality bracket between
 $\big((H^{1}(\Omega)^{5})\big)'$ and $(H^{1}(\Omega))^{5}$, and $(\cdot,\cdot)_{(L^{2}(\Omega))^{5}}$ is the inner product in $(L^{2}(\Omega))^{5}$.\\
Setting $d_{0}:=\min(D_{T}, D_{B},D_{O},D_{I}, D_{S})>0$ integrating \eqref{aps} over $[0, t]$ for all $t\in(0,\tau)$:
\begin{align}
\frac{1}{2}\lVert \mathbf{U}(t)&\lVert^{2}_{(L^{2}(\Omega))^{5}}-\frac{1}{2}\lVert \mathbf{U}_{0}\lVert^{2}_{(L^{2}(\Omega))^{5}} \nonumber\\&+d_{0}\int^{t}_{0}\lVert\mathbf{\nabla} \mathbf{U}(s)\lVert^{2}_{(L^{2}(\Omega))^{5}}\d s \nonumber\\&\leq \int^{t}_{0}\big( \mF(\mathbf{U}(s)), \mathbf{U}(s) \big)_{(L^{2}(\Omega))^{5}}\d s
\end{align}
According to \eqref{estF}, we have for all $t\in(0,\tau):$
\begin{align}
\begin{split}
\frac{1}{2}\lVert &\mathbf{U}(t)\lVert^{2}_{(L^{2}(\Omega))^{5}}-\frac{1}{2}\lVert \mathbf{U}_{0}\lVert^{2}_{(L^{2}(\Omega))^{5}} \\
&+d_{0}\int^{t}_{0}\lVert\mathbf{\nabla} \mathbf{U}(s)\lVert^{2}_{(L^{2}(\Omega))^{5}}\d s \\
&\leq C\int^{t}_{0}\left(1+\|\mathbf{U}(s)\|^{2}_{(L^2(\Omega))^{5}}\right)\d s.
\end{split}
\end{align}
Form the Grönwall's Lemma, we have
\begin{align}\label{ls}
\begin{split}
\lVert \mathbf{U}(t)\lVert^{2}_{(L^{2}(\Omega))^{5}} &+2d_{0}\int^{t}_{0}\lVert\mathbf{\nabla} \mathbf{U}(s)\lVert^{2}_{(L^{2}(\Omega))^{5}}\d s \\&\leq C\lVert \mathbf{U}_{0}\lVert^{2}_{(L^{2}(\Omega))^{5}}.
\end{split}
\end{align}
Then,
\[
\mathbf{U}\in L^{\infty}(0,\tau;(L^{2}(\Omega))^{5})
\cap (L^{2}(0,\tau; H^{1}(\Omega)))^{5}.
\]

Moreover, since $
\mathbf{D}\mathbf{\Delta}\mathbf{U}
\in (L^{2}(0,\tau;(H^{1}(\Omega))') )^{5}$ \text{(because }
$\mathbf{U}\in (L^{2}(0,\tau; H^{1}(\Omega)))^{5})$
and $\mF(\mathbf{U})\in (L^{2}(0,\tau;(H^{1}(\Omega))') )^{5}$ (arguing as in \eqref{estF} with test functions in $(H^{1}(\Omega))^{5}$), then 
$
\partial_{t}\mathbf{U}=\mathbf{D}\mathbf{\Delta}\mathbf{U}+\mF(\mathbf{U})\in (L^{2}(0,\tau;(H^{1}(\Omega))') )^{5}$.
Therefore, according to Aubin–Lions Lemma \cite{L2002}, $\mathbf{U}\in C([0,\tau];(L^{2}(\Omega))^{5}).$

Thus,
\begin{gather}\label{reg1}
\mathbf{U}\in C([0,\tau];(L^{2}(\Omega))^{5})
\cap (L^{2}(0,\tau; H^{1}(\Omega)))^{5}.
\end{gather}
For the other results regularities of $\mathbf{U}$, let us first estimate $\mF(\mathbf{U})$ in the $(L^{2}(\Omega))^{5}$ norm. We note that
\begin{align}
\hspace*{-0.5cm}\|\mF(\mathbf{U})\|^{2}_{(L^2(\Omega))^5}
&=\|F_{T}(\mathbf{U})\|^{2}_{L^2(\Omega)}+\|F_{B}(\mathbf{U})\|^{2}_{L^2(\Omega)}\nonumber\\&+\|F_{O}(\mathbf{U})\|^{2}_{L^2(\Omega)}+\|F_{I}(\mathbf{U})\|^{2}_{L^2(\Omega)}\nonumber\\&+\|F_{S}(\mathbf{U})\|^{2}_{L^2(\Omega)}.
\end{align}
Taking the square term by term in \eqref{mcs}, and integrating over $\Omega$, we have
\begin{gather}\label{l2Fu}
\|\mF(\mathbf{U})\|^{2}_{(L^2(\Omega))^5}\leq C\Big(1+\|\mathbf{U}\|^{2}_{(L^2(\Omega))^5}\Big).
\end{gather}        
Taking the inner product of the first equation of \eqref{Eqstate} with $-\mathbf{\Delta}\mathbf{U}$, integrating over $\Omega$, using \eqref{l2Fu}, Hölder's, and 
Young's inequality, we obtain for all $\varepsilon>0$:
\begin{align}
\hspace*{0cm}\frac{1}{2} \frac{\d}{\d t}\nonumber&\|\mathbf{\nabla}\mathbf{U}\|^{2}_{(L^2(\Omega))^{5}} +d_{0}\|\mathbf{\Delta}\mathbf{U}\|^{2}_{(L^2(\Omega))^{5}}\nonumber\\&\leq\lVert\mF(\mathbf{U})\|_{(L^2(\Omega))^5}\lVert\mathbf{\Delta}\mathbf{U}\lVert_{(L^2(\Omega))^5}
\nonumber\\&\leq C_{\varepsilon}\lVert\mF(\mathbf{U})\|^{2}_{(L^2(\Omega))^5}+\varepsilon\lVert\mathbf{\Delta}\mathbf{U}\lVert^{2}_{(L^2(\Omega))^5}\nonumber\\&\leq C_{\varepsilon}\big(1+\|\mathbf{U}\|^{2}_{(L^{2}(\Omega))^5}\big)+\varepsilon\lVert\mathbf{\Delta}\mathbf{U}\lVert^{2}_{(L^2(\Omega))^5}.
\end{align} 
By choosing $\varepsilon=\frac{d_{0}}{2}$, and using a nonlinear Gr\"onwall Lemma, we have for all $t\in(0,\tau^{*})$:
\begin{align}\label{ls2}
\hspace{-1cm}\|\mathbf{\nabla}\mathbf{U}(t)\|^{2}_{(L^2(\Omega))^{5}}\!+\!d_{0}\!\!\int^{t}_{0}\!\!\!\!\Vert\mathbf{\Delta}\mathbf{U}(s)\lVert^{2}_{(L^2(\Omega))^{5}}\d s
\leq C\|\mathbf{\nabla}\mathbf{U}_{0}\|^{2}_{(L^2(\Omega))^{5}}.
\end{align}
This implies that 
\begin{gather}\label{reg2}
\mathbf{U}\in (L^{\infty}(0,\tau; H^{1}(\Omega))\cap L^{2}(0,\tau; H^{2}(\Omega)))^{5}.
\end{gather}
According to \eqref{l2Fu}, and since $\mathbf{U}\in(L^{2}(\mathcal{Q}))^{5}$, we have $\mF(\mathbf{U})\in (L^{2}(\mathcal{Q}))^{5}$. 
Moreover, since $\mathbf{U}\in (L^{2}(0,\tau;H^{2}(\Omega)))^{5}$, it follows that 
$\mathbf{D}\mathbf{\Delta}\mathbf{U}\in (L^{2}(\mathcal{Q}))^{5}$. 
Therefore, we obtain
\begin{gather}\label{reg3}
\partial_{t}\mathbf{U}=\mathbf{D}\mathbf{\Delta}\mathbf{U}+\mF(\mathbf{U})\in (L^{2}(\mathcal{Q}))^{5}.
\end{gather}
Therefore, the estimates and regularity results \eqref{reg1}, \eqref{reg2}, \eqref{reg3}, \eqref{ls}, and \eqref{ls2} complete the proof of Proposition \ref{propReg}.
\end{proof}
\begin{remark} 
    We used a homogeneous Neumann boundary condition for all variables $T,B,O,I,S$ in \eqref{bord} to maintain a clear mathematical and numerical framework for analyzing system \eqref{couple}. In our future work, we may consider other boundary conditions for some of the five variables, as in \cite{LBH2,LBH1}, where a nonlinear term is used at the boundary to model tumor invasion into other organs.
\end{remark}
\subsection{Nondimensionalization of the system}
In order to introduce dimensionless variables, we consider the following rescalings:
\begin{gather}\label{resca}
\begin{split}
\tilde{t} = \rho_T t; \tilde{\x} =\frac{\x}{L}; L=\sqrt{\frac{D_T}{\rho_T}}; \tilde T = \frac{T}{c}; \tilde B = \frac{B}{B_*}; \\
\tilde{O} = \frac{O}{O_{vas}}; \tilde I = \frac{I}{I_*}; \tilde S = \frac{S}{S_*}; I_* = \frac{\beta_T c}{\rho_T}; 
S_* = \frac{\beta_B B_*}{\rho_T}.
\end{split}
\end{gather}
By replacing $t, x, T, B, O, I, S$ obtained by \eqref{resca} in \eqref{couple}, we have in $\tilde\Omega=:\Omega/L$, the following dimensionless system (for simplicity, the tildes are omitted):
\begin{gather}
\label{coupleAd}
\begin{split}
\begin{cases}
\displaystyle
\partial_t  T
=
\Delta  T
+  T(1- T)
- t_0  T
- \alpha_0  S  T,
\\[0.4em]
\displaystyle
\partial_t  B
=
d_B \Delta  B
+ \mu_0  B
\frac{\kappa}{\kappa+ O}
- b_0  B
- \beta_0  I  B,
\\[0.4em]
\displaystyle
\partial_t  O
=
d_O \Delta  O
- \gamma_0  T  O
+ \varrho_0(1- O),
\\[0.4em]
\displaystyle
\partial_t  I
=
d_I \Delta  I
+  T
- \eta_0  I,
\\[0.4em]
\displaystyle
\partial_t  S
=
d_S \Delta  S
+  B
- \lambda_0  S,
\\[0.4em]
\displaystyle{
\partial_{n} T = \partial_{n} B = \partial_{n} O = \partial_{n} I = \partial_{n} S = 0 \mbox{ on } \Sigma,} 
\\[0.4em]
T(0,x)=T_{0}(x); B(0,x)=B_{0}(x);
\\[0.4em]
O(0,x)=O_{0}(x); I(0,x)=I_{0}(x); S(0,x)=S_{0}(x),
\end{cases}
\end{split}
\end{gather}
where the dimensionless parameters are given by
\begin{gather}
\begin{split}
&d_B = \frac{D_B}{D_T}; d_O = \frac{D_O}{D_T}; d_I = \frac{D_I}{D_T}; d_S = \frac{D_S}{D_T}; \mu_0 = \frac{\rho_B}{\rho_T};\\
& t_0 = \frac{\delta_T}{\rho_T};  b_0 = \frac{\delta_B}{\rho_T}; \eta_0 = \frac{\delta_I}{\rho_T}; \lambda_0 = \frac{\delta_S}{\rho_T};
\alpha_0 = \frac{\alpha_S S_*}{\rho_T};\\
 &\beta_0 = \frac{\beta_I I_*}{\rho_T}; 
\gamma_0 = \frac{\gamma_T c}{\rho_T};
 \varrho_0 = \frac{\gamma_{vas}}{\rho_T};
\kappa = \frac{K_H}{O_{vas}}.   
\end{split}
\end{gather}
\subsection{Homogeneous steady states analysis}\label{sec-steady}
In this subsection, we consider the system \eqref{coupleAd} omitting spatial dependence in order to study spatially homogeneous steady states. The Theorem~\ref{wp} guaranteed that the solution of system \eqref{coupleAd} is positive and bounded. The following proposition give the spatially homogeneous steady states and their positivity conditions 
\begin{proposition}\label{shs}
  The system \eqref{coupleAd} admits three spatially homogeneous stationary states given by:
  \begin{itemize}
      \item $E_{0}=(0,0,1,0,0)$ corresponding to the tumor-free equilibrium.
      \item If $t_{0}<1$,  then 
      $E_T \!= \!\left( \!1 \!- \!t_0,\,0,\,\frac{\varrho_0}{\varrho_0  \!+ \!\gamma_0(1 \!- \!t_0)},\,
\frac{1 \!- \!t_0}{\eta_0},\,0 \!\right)$ corresponds to the bacteria-free tumor equilibrium.
\item  If the following conditions
\begin{gather}\label{coec}
\begin{split}
&t_{0}<1;\\
&   b_0 \!< \mu_0 \frac{\kappa}{1\!+\!\kappa}; \\
&b_0 +\frac{\beta_0}{\eta_0} (1\!-t_0)\!>\!  \frac{\mu_0\kappa (\varrho_0 \!+\! \gamma_0 (1\!-\!t_0))}{\kappa (\varrho_0 \!+\! \gamma_0 (1\!-\!t_0)) \!+\! \varrho_0},  
\end{split}
\end{gather}
are met,\\ 
then
$E_{*}\!\!=\!\!\left(\!\!\bar{T}, \frac{\lambda_0}{\alpha_0} (1 \!-\! t_0 \!-\! \bar T), \frac{\varrho_0}{\varrho_0 \!+\! \gamma_0 \bar T}, \frac{\bar T}{\eta_0},  \frac{1}{\alpha_0}(1 \!-\! t_0 \!-\! \bar T)  \!\!\right)$ corresponds to the coexistence equilibrium, 
where $\bar T\in]0, 1-t_{0}[$ satisfies the quadratic equation:
\begin{equation}
    \mu_0 \frac{\kappa (\varrho_0 + \gamma_0 \bar T)}{\kappa (\varrho_0 + \gamma_0 \bar T) + \varrho_0} 
= b_0 + \frac{\beta_0}{\eta_0} \bar T.
\end{equation}
  \end{itemize}
\end{proposition}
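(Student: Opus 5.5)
The plan is to reduce the stationary problem to algebra and then split into cases according to which of $T$ and $B$ vanish. For spatially homogeneous steady states all time derivatives and Laplacians in \eqref{coupleAd} vanish, and the Neumann conditions hold trivially. We are left with five algebraic equations, and we look only for nonnegative solutions, consistent with Theorem~\ref{wp}.

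\emph{Step 1: eliminate the three linear variables.} The three linear equations can be solved explicitly:
$$I=\frac{T}{\eta_0},\qquad S=\frac{B}{\lambda_0},\qquad O=\frac{\varrho_0}{\varrho_0+\gamma_0 T}.$$
Denominators are positive since $T\ge 0$ and $\varrho_0>0$. Substituting $O$ into the hypoxia factor gives
$$\frac{\kappa}{\kappa+O}=g(T):=\frac{\kappa(\varrho_0+\gamma_0T)}{\kappa(\varrho_0+\gamma_0T)+\varrho_0}.$$
The $T$ and $B$ equations then factor as
$$T\,(1-t_0-T-\alpha_0 S)=0,\qquad B\,\Bigl(\mu_0 g(T)-b_0-\frac{\beta_0}{\eta_0}T\Bigr)=0.$$

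\emph{Step 2: case analysis.}

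If $T=0$, then $I=0$ and $O=1$. The $B$ equation becomes $B\bigl(\mu_0\kappa/(1+\kappa)-b_0\bigr)=0$, so $B=0$ except in the degenerate equality case $b_0=\mu_0\kappa/(1+\kappa)$, which I will exclude as nongeneric. Hence $S=0$, which gives $E_0$.

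If $T>0$ and $B=0$, then $S=0$ and $T=1-t_0$. This is positive exactly when $t_0<1$, and gives $E_T$ after inserting $T$ into the formulas for $O$ and $I$.

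If $T>0$ and $B>0$, the first factorization gives $T=1-t_0-\alpha_0 B/\lambda_0$. Hence $B=\frac{\lambda_0}{\alpha_0}(1-t_0-\bar T)$ and $S=\frac{1}{\alpha_0}(1-t_0-\bar T)$, and positivity of $B$ forces $\bar T\in(0,1-t_0)$, so again $t_0<1$. The second factorization gives the stated equation $\mu_0 g(\bar T)=b_0+\frac{\beta_0}{\eta_0}\bar T$. Clearing the (positive) denominator turns it into a quadratic in $\bar T$.

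\emph{Step 3: existence of an admissible root (main obstacle).} The real work is showing that a root lies in $]0,1-t_0[$ under \eqref{coec}. Set $\phi(T)=\mu_0 g(T)-b_0-\frac{\beta_0}{\eta_0}T$.

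At the left endpoint, $g(0)=\kappa/(1+\kappa)$, so the second condition gives $\phi(0)>0$. At the right endpoint, the third condition is exactly $\phi(1-t_0)<0$. By the intermediate value theorem, a root $\bar T\in\,]0,1-t_0[$ exists.

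For uniqueness I would note that $g$ is a Möbius function of $T$ of the form $a - \tfrac{c}{T+d}$ with $c>0$ when $\gamma_0>0$. It is therefore increasing and concave on $[0,\infty)$, so $\phi$ is concave. A concave function that is positive at $0$ and negative at $1-t_0$ crosses zero exactly once in between. This picks out the relevant root of the quadratic, and the other root lies outside the interval.

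Finally, substituting $\bar T$ back into the formulas of Step 1 yields $E_*$ with all components positive.
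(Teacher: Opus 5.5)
Your proposal is correct and follows the paper's proof. You eliminate $\bar I,\bar S,\bar O$ from the three linear equations, factor the $T$- and $B$-equations, split into the tumor-free, bacteria-free and coexistence cases, and apply the intermediate value theorem using $f(0)>0$ and $f(1-t_0)<0$ from \eqref{coec}. Your two additions go beyond the paper. First, the concavity of $\xi\mapsto \kappa(\varrho_0+\gamma_0\xi)/(\kappa(\varrho_0+\gamma_0\xi)+\varrho_0)$ gives uniqueness of $\bar T$ in $]0,1-t_0[$, whereas the paper only shows that at least one root exists. Second, your case $T=0$, $B>0$ identifies the degenerate continuum of bacteria-only equilibria when $b_0=\mu_0\kappa/(1+\kappa)$, which the paper's case analysis omits.
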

\begin{proof}
A spatially homogeneous steady state $(\bar T,\bar B,\bar O,\bar I,\bar S)$ of \eqref{coupleAd} satisfies:
$$
\partial_t T=\partial_t B=\partial_t O=\partial_t I=\partial_t S=0, $$
$$
\Delta T=\Delta B=\Delta O=\Delta I=\Delta S=0.
$$
Thus, the steady states are solutions of the system
\begin{equation}\label{st}
\begin{cases}
0= \bar T(1-\bar T)-t_0\bar T-\alpha_0 \bar S \bar T, \\[0.3em]

0=\mu_0 \bar B \dfrac{\kappa}{\kappa+\bar O}-b_0\bar B-\beta_0 \bar I \bar B, \\[0.6em]

0=-\gamma_0 \bar T \bar O+\varrho_0(1-\bar O), \\[0.3em]

0=\bar T-\eta_0 \bar I, \\[0.3em]

0=\bar B-\lambda_0 \bar S .
\end{cases}
\end{equation}
From the last two equations we obtain
\begin{equation}\label{is}
\bar I=\frac{\bar T}{\eta_0}, 
\quad
\bar S=\frac{\bar B}{\lambda_0}.
\end{equation}
Moreover, in the third equation of \eqref{st} the oxygen density satisfies
\begin{equation}\label{ox}
\bar O=\frac{\varrho_0}{\varrho_0+\gamma_0 \bar T}.
\end{equation}
Substituting \eqref{is} into the first two equations of \eqref{st} gives
\begin{gather}
\label{tb1}
0=\bar T\left(1-\bar T-t_0-\alpha_0\frac{\bar B}{\lambda_0}\right),
\\
\label{tb2}
0=\bar B\left(\mu_0\frac{\kappa}{\kappa+\bar O}-b_0-\beta_0\frac{\bar T}{\eta_0}\right).
\end{gather}
We now identify the possible steady states.
\paragraph{ $\bullet$ Tumor-free equilibrium.}
If $\bar T=\bar B=0$, then according to \eqref{is} $\bar I=0$, $\bar S=0$ and from \eqref{ox} we obtain $\bar O=1$.  
Thus the tumor-free equilibrium is
$$
E_0=(0,0,1,0,0).
$$
\paragraph{$\bullet$ Bacteria-free tumor equilibrium.} If $\bar B=0$, then $\bar S=0$ and \eqref{tb1}-\eqref{tb2} becomes
$$
\bar T(1-\bar T-t_0)=0.
$$
Hence $\bar{T}>0$, if $t_0<1$ i.e., $\delta_{T}<\rho_{T}$, we obtain $\bar T=1-t_0.$\\
  From \eqref{is} and \eqref{ox}, the variables $\bar{I}, \bar{O}$ are
$$
\bar I=\frac{1-t_0}{\eta_0},
\quad
\bar O=\frac{\varrho_0}{\varrho_0+\gamma_0(1-t_0)}.
$$
Therefore the tumor equilibrium without bacteria is
$$
E_T=\left(1-t_0,\,0,\,\frac{\varrho_0}{\varrho_0+\gamma_0(1-t_0)},\,
\frac{1-t_0}{\eta_0},\,0\right).
$$
\paragraph{$\bullet$ Coexistence equilibrium.}If $\bar{T}>0$ and $\bar{B}>0$, from \eqref{tb1}, \eqref{is} and \eqref{ox} we obtain
\begin{equation}\label{tiso}
 \bar T=1-t_0-\alpha_0\frac{\bar B}{\lambda_0}; \bar I=\frac{\bar T}{\eta_0}; \bar S=\frac{\bar B}{\lambda_0}; \bar O=\frac{\varrho_0}{\varrho_0+\gamma_0 \bar T}.
\end{equation}
Then, according to \eqref{tb2} the coexistence equilibrium satisfy the condition
\begin{equation}\label{mt}
\mu_0\frac{\kappa}{\kappa+\bar O}=b_0+\beta_0\frac{\bar T}{\eta_0}.
\end{equation}
 We now determine the positive coexistence steady state $E_*=(\bar T, \bar B, \bar O, \bar I, \bar S)$. For that, suppose that conditions \eqref{coec} are met.\\
Substituting $\bar O={\varrho_0}/(\varrho_0 + \gamma_0 \bar T)$ in equation \eqref{mt} we have an quadratic equation for $\bar T$:
\begin{equation}\label{kt}
\mu_0 \frac{\kappa (\varrho_0 + \gamma_0 \bar T)}{\kappa (\varrho_0 + \gamma_0 \bar T) + \varrho_0} 
= b_0 + \frac{\beta_0}{\eta_0} \bar T.
\end{equation}
Let define the function $f$ in $\mathbb{R}_{+}$ such that
$$f(\xi)=\mu_0 \frac{\kappa (\varrho_0 + \gamma_0 \xi)}{\kappa (\varrho_0 + \gamma_0 \xi) + \varrho_0} -b_0 - \frac{\beta_0}{\eta_0}\xi.$$
Equation \eqref{kt} can be rewritten as
\begin{equation}\label{ktf}
f(\bar T) = 0.
\end{equation}
We have \begin{gather}
\begin{split}
f(0) &= \mu_0 \frac{\kappa}{1+\kappa} - b_0,\\
f(1-t_0) &= \mu_0 \frac{\kappa (\varrho_0 + \gamma_0 (1-t_0))}{\kappa (\varrho_0 + \gamma_0 (1-t_0)) + \varrho_0}\\
&- b_0 - \frac{\beta_0}{\eta_0} (1-t_0).
\end{split}
\end{gather}
According to conditions \eqref{coec}, we obtain $f(0)>0$ and $f(1-t_{0})<0$ and
since $f$ is continuous on $\mathbb{R}_{+}$, then the equation \eqref{ktf} has at least one solution $\bar{T}$ in $]0, 1-t_{0}[$ .
Once the existence of $\bar{T}\in]0, 1-t_{0}[$ is established, then from \eqref{tiso} and according to \eqref{coec} again we have
\begin{gather}\label{rst}
\begin{split}
&\bar B=\frac{\lambda_0}{\alpha_0} (1 - t_0 - \bar T)>0;\quad \bar I = \frac{\bar T}{\eta_0};\\ 
&\bar S  = \frac{1}{\alpha_0}(1 - t_0 - \bar T);\quad \bar O = \frac{\varrho_0}{\varrho_0 + \gamma_0 \bar T}.
\end{split}
\end{gather}
Therefore, we have the existence of the coexistence equilibrium $E_* =(\bar T,\bar B,\bar O,\bar I,\bar S)$ where $\bar T\in]0, 1-t_{0}[$ satisfies \eqref{kt} and $\bar B,\bar O,\bar I,\bar S$ are given by \eqref{rst}, which terminates the proof of Proposition \ref{shs}.
\end{proof}
\subsection{Stability analysis of the steady states}
We analyze the stability of the homogeneous system derived from \eqref{coupleAd}. This allows us to establish the conditions of the stability of equilibrium states, taking into account the formation of spatial patterns. Let $E=(\bar{T},\bar{B},\bar{O},\bar{I},\bar{S})$ be a homogeneous equilibrium described Proposition \ref{shs}. We study the local stability of the homogeneous equilibrium states $E$.
Let $\bU=(T,B,O,I,S)^\top$ solution of \eqref{coupleAd}. The system \eqref{coupleAd} can be rewritten in the abstract form \eqref{Eqstate} with
\begin{gather*}
   \partial_t \bU = \mathbf{D\Delta U} + \mF(\bU),
\end{gather*}
where $\mathbf{D}=\mathrm{diag}(1,d_B,d_O,d_I,d_S)$ and $\mF(\bU)$ the reaction operator terms in \eqref{coupleAd}. 

Now, by linearize the system around $E$ by writing $\bU=E+u,$ we have that the perturbation term $u$ satisfies
\begin{gather}\label{lineaz}
  \partial_t u = \mathbf{D\Delta u} + J(E)u,
\end{gather}
where $J(E)$ is the Jacobian matrix of $\mF$ evaluated at $E$ defined by
\begin{equation*}
J(E)=
\tiny{
\begin{pmatrix}
1-2\bar{T}-t_0-\alpha_0S & 0 & 0 & 0 & -\alpha_0\bar{T} \\
0 & \mu_0\frac{\kappa}{\kappa+\bar{O}}-b_0-\beta_0\bar{I} &
-\mu_0\bar{B}\frac{\kappa}{(\kappa+\bar{O})^2} &
-\beta_0\bar{B} & 0 \\
-\gamma_0\bar{O} & 0 & -\gamma_0\bar{T}-\varrho_0 & 0 & 0 \\
1 & 0 & 0 & -\eta_0 & 0 \\
0 & 1 & 0 & 0 & -\lambda_0
\end{pmatrix}.
}
\end{equation*}
Due to Neumann's boundary conditions in \eqref{coupleAd}, the following Laplacian eigenvalue problem
\begin{gather*}
   -{\Delta} \phi_k = \lambda_k \phi_k 
\end{gather*}
admits a sequence of positive eigenvalues $(\lambda_{k})_{k=1,\cdots,5}$ such that $\lambda_1=0 <\lambda_{k}$ for all $k=2,\cdots, 5$ which correspond respectively to scalar eigenfunctions $(\phi_k)_{k=1,\cdots,5}$.

We assume that the perturbation $u$ as in the form
\begin{gather*}
u(x,t)=\sum_{k} c_k(t)\phi_k(x),
\end{gather*}
where the function $c_{k}\in C^{1}([0,\tau];\mathbb{R}^{5})$, and substituting into the linearized equation \eqref{lineaz} gives for all $k=1,\cdots,5:$
\begin{gather}
\frac{\d c_k}{\d t}=\left(J(E)-\lambda_k \mathbf{D}\right)c_k.
\end{gather}
Therefore the stability of the equilibrium $E$ is determined by the eigenvalues of the matrices $M_{k}$, $k=1,\cdots,5$ such that:
\begin{equation}
M_k=J(E)-\lambda_k\mathbf{D}.
\end{equation}
We have the following results
\begin{proposition}
\begin{itemize}
\item The tumor-free equilibrium
$$
E_0=(0,0,1,0,0)
$$
is locally asymptotically stable if
\begin{equation}
  t_0>1
\quad \text{and} \quad
\mu_0\frac{\kappa}{\kappa+1}<b_0.  
\end{equation}
\item The tumor equilibrium
$$
E_T=
\left(
1-t_0,0,
\frac{\varrho_0}{\varrho_0+\gamma_0(1-t_0)},
\frac{1-t_0}{\eta_0},
0
\right)
$$
exists for $t_0<1$ and is locally asymptotically stable if
\begin{equation}\label{sst}
\mu_0\frac{\kappa}{\kappa+\bar O}<b_0+\beta_0\frac{1-t_0}{\eta_0},
\end{equation}
where
$\bar O=\frac{\varrho_0}{\varrho_0+\gamma_0(1-t_0)}.$
\item The coexistence equilibrium
$$E_*=(\bar T,\bar B,\bar O,\bar I,\bar S)$$ is locally asymptotically stable if there exists.
\end{itemize}
\end{proposition}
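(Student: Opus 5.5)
The plan is to use the modal decomposition already set up above: $E$ is locally asymptotically stable as soon as, for every Neumann eigenvalue $\lambda_k\ge 0$ (with $\lambda_1=0$), all eigenvalues of $M_k=J(E)-\lambda_k\mathbf{D}$ have negative real part, uniformly in $k$. Since $\mathbf{D}$ has positive diagonal entries, the $k$-dependence will simply shift diagonal entries to the left. For $E_0$ and $E_T$ this will make the "no Turing instability" statement immediate once $k=1$ is handled. I would treat the three equilibria in increasing order of difficulty.

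\emph{Tumor-free state $E_0$.} Evaluating $J$ at $(0,0,1,0,0)$, every entry above the diagonal vanishes: $-\alpha_0\bar T=0$, $-\mu_0\bar B\kappa/(\kappa+\bar O)^2=0$ and $-\beta_0\bar B=0$. Hence $M_k$ is lower triangular with diagonal
\[
1-t_0-\lambda_k,\quad \mu_0\tfrac{\kappa}{\kappa+1}-b_0-\lambda_k d_B,\quad -\varrho_0-\lambda_k d_O,\quad -\eta_0-\lambda_k d_I,\quad -\lambda_0-\lambda_k d_S .
\]
The first two entries are negative for all $k$ exactly under $t_0>1$ and $\mu_0\kappa/(\kappa+1)<b_0$, and the last three are always negative.

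\emph{Bacteria-free state $E_T$.} Since $\bar B=\bar S=0$, the $B$-row of $J(E_T)$ reduces to its diagonal entry $a:=\mu_0\kappa/(\kappa+\bar O)-b_0-\beta_0(1-t_0)/\eta_0$. I would expand $\det(M_k-\xi I)$ along this row, which factors out $a-\lambda_k d_B-\xi$. In the remaining $4\times4$ block in $(T,O,I,S)$, the $S$-row has only its diagonal entry $-\lambda_0-\lambda_k d_S$, so it factors out next. What is left in $(T,O,I)$ is lower triangular with diagonal $-(1-t_0)-\lambda_k$, $-\gamma_0(1-t_0)-\varrho_0-\lambda_k d_O$ and $-\eta_0-\lambda_k d_I$. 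All eigenvalues are therefore real and negative for every $k$ iff $a<0$, which is condition \eqref{sst}.

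\emph{Coexistence state $E_*$.} Here I would first simplify $J(E_*)$ using the equilibrium relations \eqref{tb1} and \eqref{mt}. The $(1,1)$ entry becomes $-\bar T$ and the $(2,2)$ entry becomes $0$, so $J(E_*)$ has diagonal $(-\bar T,0,-\gamma_0\bar T-\varrho_0,-\eta_0,-\lambda_0)$. The off-diagonal entries are $-\alpha_0\bar T$, $-\mu_0\bar B\kappa/(\kappa+\bar O)^2$, $-\beta_0\bar B$, $-\gamma_0\bar O$, $1$ and $1$. The interaction graph consists of exactly two feedback cycles:
\[
T\to I\to B\to S\to T \quad\text{(gain } \beta_0\alpha_0\bar T\bar B>0\text{)},\qquad T\to O\to B\to S\to T \quad\text{(gain involving } \gamma_0\bar O\cdot\mu_0\bar B\kappa/(\kappa+\bar O)^2\text{)}.
\]
I would expand $\det(\xi I-M_k)=\xi^5+a_1\xi^4+\dots+a_5$ via these cycles and apply the Routh--Hurwitz criterion: all $a_i>0$, together with the Hurwitz determinant inequalities $a_1a_2>a_3$, etc. The constant term should reduce to a positive multiple of $-f'(\bar T)$, with $f$ as in the proof of Proposition~\ref{shs}. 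So the sign of $a_5$ is controlled by requiring the root $\bar T$ of $f$ to be a down-crossing, which holds, e.g., if $\bar T$ is the unique root of $f$ in $]0,1-t_0[$ guaranteed by \eqref{coec}.

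The main obstacle is this last step. The two cycles carry opposite signs in the coefficients, and the Hurwitz inequalities for a degree-five polynomial with $k$-dependent diagonal shifts are not automatic. I expect the unconditional phrase ``stable if it exists'' to need an extra hypothesis, such as $f'(\bar T)<0$ and a bound on the delay-type loop gain. I would first try to prove the Hurwitz conditions at $k=1$ under $f'(\bar T)<0$. I would then show that adding $\lambda_k\mathbf{D}$ only increases each $a_i$ and preserves the Hurwitz inequalities, which yields stability for all modes.
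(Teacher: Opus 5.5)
Your arguments for $E_0$ and $E_T$ are correct and are the paper's argument. $M_k(E_0)$ is lower triangular. For $E_T$ the paper simply lists the diagonal entries of $M_k(E_T)$ as its eigenvalues, and your cofactor expansion along the $B$-row and then the $S$-row is what justifies this. For $E_*$ you are right that $M_k(E_*)$ is not triangular. The paper nevertheless reads its eigenvalues off the diagonal, which is invalid because of the two $4$-cycles you found. Even taken at face value, that reading gives $\xi_5=0$ at $\lambda_1=0$, so the linearization could not yield asymptotic stability. The genuine gap is in your own plan for this third item: the sign you attach to the constant term is reversed, and with the correct sign the claim fails.

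Write $p=\mu_0\bar B\kappa/(\kappa+\bar O)^2$, $q=\beta_0\bar B$, $r=\gamma_0\bar O$, $s=\gamma_0\bar T+\varrho_0$. At $\lambda_1=0$,
\[
\det(\xi I-J(E_*))=\xi(\xi+\bar T)(\xi+s)(\xi+\eta_0)(\xi+\lambda_0)-\alpha_0\bar T q(\xi+s)+\alpha_0\bar T p r(\xi+\eta_0).
\]
The cytokine loop $T\to I\to B\to S\to T$ has positive gain, so it is a positive feedback and enters with a minus sign. Hence
\[
a_5=\alpha_0\bar T\,(pr\eta_0-qs)=\alpha_0\eta_0\, s\,\bar T\bar B\, f'(\bar T),
\]
which is a positive multiple of $+f'(\bar T)$, not of $-f'(\bar T)$. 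The reduced system $\dot T=T(1-t_0-T-\alpha_0B/\lambda_0)$, $\dot B=Bf(T)$ confirms the sign: its Jacobian at the equilibrium has trace $-\bar T$ and determinant $\alpha_0\bar T\bar B f'(\bar T)/\lambda_0$.

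Your other observation is right. The map $\xi\mapsto\kappa(\varrho_0+\gamma_0\xi)/(\kappa(\varrho_0+\gamma_0\xi)+\varrho_0)$ is concave, so $f$ is concave. With $f(0)>0>f(1-t_0)$ from \eqref{coec}, the root $\bar T$ is unique and satisfies $f'(\bar T)<0$. Therefore $a_5<0$, and $J(E_*)$ has a positive real eigenvalue. Whenever $E_*$ exists under the conditions of Proposition~\ref{shs}, it is linearly unstable, already for spatially homogeneous perturbations. For example, the values in Table~\ref{tab:params} satisfy \eqref{coec} and give $\bar T\approx 0.61$, $f'(\bar T)\approx -0.37$, $a_5\approx -0.07$.

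So the Hurwitz step you planned under $f'(\bar T)<0$ cannot be carried out, and the third item of the statement is false as written. A stable coexistence state would need an up-crossing root, $f'(\bar T)>0$, which \eqref{coec} excludes, together with the remaining Hurwitz inequalities. Your further claim that the shift by $\lambda_k\mathbf{D}$ only increases each $a_i$ is also not true in general. The term $-\alpha_0\bar T q(\xi+s+\lambda_k d_O)$ lowers $a_5$ linearly in $\lambda_k$.
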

\begin{proof}
\begin{itemize}
    \item \emph{The tumor-free equilibrium $E_{0}=(0,0,1,0,0)$}. Assume that $t_0>1,$ and $\mu_0\frac{\kappa}{\kappa+1}<b_0.$ Evaluating $M_{k}$ at $E_0$ $k=1,\cdots,5$ gives
\begin{equation*}
\tiny{
M_{k}(E_0)=
\begin{pmatrix}
1-t_0 -\lambda_{k}&0&0&0&0\\
0&\mu_0\frac{\kappa}{\kappa+1}-b_0 -\lambda_{k}d_{B}&0&0&0\\
-\gamma_0&0&-\varrho_0-\lambda_{k}d_{O}&0&0\\
1&0&0&-\eta_0 -\lambda_{k}d_{I}&0\\
0&1&0&0&-\lambda_0 -\lambda_{k}d_{S}
\end{pmatrix}.
}
\end{equation*}
Since $\lambda_{k}\ge0$ and coefficients $d_B,d_O,d_I,d_S \ge0$, then all real eigenvalues of $M_{k}(E_{0})$ are negative. Hence $E_0$ is locally asymptotically stable.
\item \emph{Bacteria-free tumor equilibrium:} \\$E_T=(1\!-\!t_0,0,\frac{\varrho_0}{\varrho_0\!+\!\gamma_0(1\!-\!t_0)},\frac{1\!-\!t_0}{\eta_0},0)$. Assume that $t_0<1$ and the condition \eqref{sst} is satisfied. For $k=1,\cdots,5$ the eigenvalues of $M_{k}(E_{T})$ are given by
\begin{gather*}
\begin{split}
\xi_{1}&=-(1-t_{0})-\lambda_k \leq 0,\\
\xi_2&= \mu_{0}\frac{\kappa}{\kappa+\bar{O}}-b_{0}-\beta_{0}\bar{I}-\lambda_k d_B \leq 0,\\
\xi_3&=-\gamma_0 (1-t_{0})-\varrho_0-\lambda_k d_O \leq 0,\\
\xi_4&=-\eta_0 - \lambda_k d_I \leq 0,\\
\xi_5&=-\lambda_{0}-\lambda_k d_S \leq 0.
\end{split}
\end{gather*}
Then, all real eigenvalues of $M_{k}(E_{T})$ are negative.
Therefore $E_T$ is locally asymptotically stable under the condition \eqref{sst}.
\item \emph{The coexistence equilibrium} $E_*=(\bar T,\bar B,\bar O,\bar I,\bar S)$ satisfying \eqref{rst} and \eqref{kt}. After calculating all nonzero coefficients of $J(E_{*})$, the matrix $M_{k}(E_{*})$ is:
\begin{equation*}
\tiny{
M_k(E_{*}) =
\begin{pmatrix}
-\bar T-\lambda_k & 0 & 0 & 0 & -\alpha_0\bar T \\

0 & -\lambda_k d_B &
-\mu_0 \bar B \dfrac{\kappa}{(\kappa+\bar O)^2} &
-\beta_0 \bar B & 0 \\

-\gamma_0 \bar O & 0 &
-\gamma_0 \bar T - \varrho_0 - \lambda_k d_O & 0 & 0 \\

1 & 0 & 0 &
-\eta_0 - \lambda_k d_I & 0 \\

0 & 1 & 0 & 0 &
-\lambda_0 - \lambda_k d_S
\end{pmatrix}.
}
\end{equation*}
The eigenvalues of the matrix $M_k(E_*)$ are given by
\begin{gather}
    \begin{split}
        \xi_1 &= -\bar T - \lambda_k\leq 0, \\
\xi_2 &= -\eta_0 - \lambda_k d_I\leq 0, \\
\xi_3 &= -\lambda_0 - \lambda_k d_S\leq 0, \\
\xi_4 &= -\gamma_0 \bar T - \varrho_0 - \lambda_k d_O \leq 0, \\
\xi_5 &= -\lambda_k d_B \leq 0.
    \end{split}
\end{gather}
Then, all spatial perturbations decay exponentially and the coexistence equilibrium $E_*$ is locally asymptotically stable. 
For $k=0$, we have $\xi_5=0$, and thus  $E_{*}$ is non-hyperbolic for the homogeneous system in space, but remains stable with respect to spatial perturbations.
\end{itemize}
\end{proof}
\begin{remark}
  The diffusion introduces the terms $-\lambda_k \mathbf{D}$, which shift the eigenvalues toward the negative half-plane. Then, for $k>0$, no eigenvalues of $M_{k}(E_{*})$ become positive, and diffusion cannot destabilize the equilibrium. Therefore, no Turing instability is induced by diffusion \cite{alma991008704579706535}.
\end{remark}
\section{PINNs Approximation}
\label{sec:pinn_method}
Solving system~\eqref{couple} numerically poses specific challenges that motivate the use of PINNs.
Classical mesh-based methods such as finite elements or finite differences require a spatial mesh whose resolution must be adapted to the three orders-of-magnitude gap between the fastest diffusion coefficient, and must handle the stiff vascular exchange term through costly implicit time-stepping. Beyond these technical difficulties, the five nonlinearly coupled equations require either operator-splitting  which introduces splitting errors or the assembly of large block Jacobians at each Newton step.

PINNs~\cite{raissi2019physics} circumvent all three issues simultaneously: spatial and temporal coordinates are treated as continuous inputs, derivatives are computed exactly by automatic differentiation with no truncation error, and all five equations are enforced simultaneously as soft constraints on a single trainable loss, with no mesh and no time-stepping stability condition to satisfy. The mathematical analysis of Section~\ref{sec:math} guarantees the existence and uniqueness of a solution
$\mathbf{U}=(T,B,O,I,S)^{\top}\in\bigl(C([0,\tau];L^{2}(\Omega))\cap
L^{2}(0,\tau;H^{1}(\Omega))\bigr)^{5}$
to system~\eqref{couple}.
To compute this solution numerically, we employ a PINNs
~\cite{raissi2019physics,farkane2024enhancing,farkane2025adaptive},
an approach that embeds the governing equations, initial conditions,
and boundary conditions directly into the loss functional of a deep neural
network, thereby transforming the PDE problem into an unconstrained
optimisation problem over the network weights. In this context recently many paper have focused on the using the deep leanrning methods for solving the he

The fundamental idea is to seek an approximation
\begin{equation}
  \widehat{\mathbf{U}}_{\theta}:\mathcal{Q}\longrightarrow\mathbb{R}^{5},
  \label{eq:pinn_map}
\end{equation}
realised by a single feedforward network
$\mathcal{N}_{\theta}:\mathbb{R}^{3}\to\mathbb{R}^{5}$
with trainable parameters $\theta$, such that
$\widehat{\mathbf{U}}_{\theta}\approx\mathbf{U}$ in the sense of
minimising a physics-based loss defined below.
Because no labelled simulation data are required, the method is
\emph{fully unsupervised}: the only supervision comes from the physical
laws encoded in~\eqref{couple}.

\subsection{Network Architecture}
\label{sec:arch}

The neural network $\mathcal{N}_{\theta}$ maps the spatio-temporal coordinate
$(x,y,t)\in\Omega\times(0,\tau)$ to the five-dimensional state vector:
\begin{equation}
  \mathcal{N}_{\theta}(x,y,t)
  =\bigl(\hat{T},\hat{B},\hat{O},\hat{I},\hat{S}\bigr)(x,y,t).
  \label{eq:net_io}
\end{equation}
We employ a fully connected Multi-Layer Perceptron (MLP) architecture consisting of $L$
 layers. The architecture comprises an input layer, several hidden layers of uniform width, and an output layer corresponding to the state variables.
The hidden-layer activation is the hyperbolic tangent
$\sigma(z)=\tanh(z)$. This choice is motivated by its infinite differentiability, which ensures that the high-order partial derivatives required for the differential operators in system~\eqref{couple} can be computed accurately through automatic differentiation without interpolation errors.
Let $\mathbf{h}^{(0)}=(x,y,t)^{\top}$; the forward pass reads
\begin{equation}
  \mathbf{h}^{(\ell)}
  = \sigma\!\left(W^{(\ell)}\mathbf{h}^{(\ell-1)}+\mathbf{b}^{(\ell)}\right),
  \quad \ell=1,\ldots,L-1,
  \label{eq:forward}
\end{equation}
\begin{equation}
  \mathcal{N}_{\theta}(x,y,t)
  = W^{(L)}\mathbf{h}^{(L-1)}+\mathbf{b}^{(L)},
  \label{eq:output}
\end{equation}
where $\{W^{(\ell)},\mathbf{b}^{(\ell)}\}_{\ell=1}^{L}$ are the weight
matrices and bias vectors collected in $\theta$.
Weights are initialised using the Xavier uniform scheme, which keeps the
variance of activations stable across layers and accelerates convergence
\cite{glorot2011deep}.

\subsection{Derivative Computation via Automatic Differentiation}
\label{sec:autodiff}

For every scalar network output $\hat{\varphi}\in\{\hat{T},\hat{B},\hat{O},\hat{I},\hat{S}\}$,
all partial derivatives required by the residual operators of~\eqref{couple} are
computed exactly by reverse-mode automatic differentiation
(AD)~\cite{paszke2019pytorch}, implemented in PyTorch.
Concretely:
\begin{itemize}
  \item \textbf{First-order derivatives} $\partial_{x}\hat{\varphi}$,
        $\partial_{y}\hat{\varphi}$, $\partial_{t}\hat{\varphi}$: one backward pass
        through the computation graph.
  \item \textbf{Laplacian} $\Delta\hat{\varphi}=\partial_{xx}\hat{\varphi}+\partial_{yy}\hat{\varphi}$:
        a second backward pass applied to $\partial_{x}\hat{\varphi}$ and
        $\partial_{y}\hat{\varphi}$, respectively.
\end{itemize}
This avoids any finite-difference truncation error and removes the mesh
dependency from derivative computation entirely.

\subsection{Collocation Point Sampling}
\label{sec:coll}

Three disjoint sets of collocation points are drawn to enforce
system~\eqref{couple}:

\medskip
\noindent\textbf{Interior (PDE) points} $\mathcal{S}_{\mathrm{pde}}$: A set of 
$N_{\mathrm{pde}}$ collocation  points
$(x_{k},y_{k},t_{k})\in Q$ sampled uniformly,
with $(x_{k},y_{k})\sim\mathbf{U}(\Omega)$ and
$t_{k}\sim\mathbf{U}(0,\tau)$.

\medskip
\noindent\textbf{Initial condition points} $\mathcal{S}_{\mathrm{ic}}$: A set of 
$N_{\mathrm{ic}}$ points $(x_{k},y_{k},0)\in\Omega\times\{0\}$
drawn uniformly over $\Omega$.

\medskip
\noindent\textbf{Boundary points} $\mathcal{S}_{\mathrm{bc}}$: A set of 
$N_{\mathrm{bc}}$ points $(x_{k},y_{k},t_{k})\in\partial\Omega\times(0,\tau)$,
drawn uniformly on each of the four sides of $\partial\Omega$,
with $t_{k}\sim\mathbf{U}(0,\tau)$.

\medskip
In practice  the union $\mathcal{S}=\mathcal{S}_{\mathrm{pde}}\cup\mathcal{S}_{\mathrm{ic}}
\cup\mathcal{S}_{\mathrm{bc}}$ is partitioned into
training (60\%), validation (20\%), and test (20\%) subsets.

\subsection{Composite Loss Functional}
\label{sec:loss}

The total loss to be minimised over $\theta$ is
\begin{equation}
  \mathcal{L}(\theta)
  =\lambda_{\mathrm{pde}}\,\mathcal{L}_{\mathrm{pde}}(\theta)
  +\lambda_{\mathrm{ic}}\,\mathcal{L}_{\mathrm{ic}}(\theta)
  +\lambda_{\mathrm{bc}}\,\mathcal{L}_{\mathrm{bc}}(\theta),
  \label{eq:total_loss}
\end{equation}
where the weights are set to
$\lambda_{\mathrm{pde}}$, $\lambda_{\mathrm{ic}}$,
$\lambda_{\mathrm{bc}}$.
These weights play an important role  in the convergence of the  PINNs loss. In particular, choosing $\lambda_{\mathrm{ic}}> \lambda_{\mathrm{pde}}$ ensures that the
network remains consistent with the prescribed initial conditions (see~\eqref{initial}). This weighting ensures that the learned spatio-temporal solution follows the given initial profiles. If the initial-condition term is assigned too little weight, the model may focus on reducing the PDE residual, which can cause it to converge to the trivial solution $\mathbf{U}\equiv 0$ instead of the intended physical solution.
\begin{enumerate}
    \item \textbf{PDE Residual Loss:} The five residual operators
$\mathcal{R}_{T},\mathcal{R}_{B},\mathcal{R}_{O},\mathcal{R}_{I},\mathcal{R}_{S}$
are obtained by substituting the network outputs into the left-hand sides
of~\eqref{couple}:
\begin{subequations}
\label{eq:residuals}
\begin{align}
\mathcal{R}_{T}
&=\partial_{t}\hat{T}-D_{T}\Delta\hat{T}
  -\rho_{T}\hat{T}\!\left(1-\tfrac{\hat{T}}{c}\right)
  \nonumber\\&+\delta_{T}\hat{T}+\alpha_{S}\hat{S}\hat{T};
\label{eq:resT}\\
\mathcal{R}_{B}
&=\partial_{t}\hat{B}-D_{B}\Delta\hat{B}
  -\rho_{B}\hat{B}\,\frac{K_{H}}{K_{H}+\hat{O}}
  \nonumber\\&+\delta_{B}\hat{B}+\beta_{I}\hat{I}\hat{B};
\label{eq:resB}\\
\mathcal{R}_{O}
&=\partial_{t}\hat{O}-D_{O}\Delta\hat{O}
  +\gamma_{T}\hat{T}\hat{O}\nonumber\\&-\gamma_{vas}(O_{\mathrm{vas}}-\hat{O});
\label{eq:resO}\\
\mathcal{R}_{I}
&=\partial_{t}\hat{I}-D_{I}\Delta\hat{I}
  -\beta_{T}\hat{T}+\delta_{I}\hat{I};
\label{eq:resI}\\
\mathcal{R}_{S}
&=\partial_{t}\hat{S}-D_{S}\Delta\hat{S}
  -\beta_{B}^{(S)}\hat{B}+\delta_{S}\hat{S}.
\label{eq:resS}
\end{align}
\end{subequations}
Setting $\mathcal{V}:=\{T,B,O,I,S\}$, the PDE loss is then the mean-squared residual over $\mathcal{S}_{\mathrm{pde}}$:
\begin{equation}
  \mathcal{L}_{\mathrm{pde}}(\theta)
  =\frac{1}{N_{\mathrm{pde}}}
   \sum_{k=1}^{N_{\mathrm{pde}}}
   \sum_{\varphi\in\mathcal{V}}
   \bigl[\mathcal{R}_{\varphi}(x_{k},y_{k},t_{k};\theta)\bigr]^{2}.
  \label{eq:Lpde}
\end{equation}
\item \textbf{Initial Condition Loss:} The initial condition loss enforces~\eqref{initial} at $t=0$:
\begin{equation}
  \mathcal{L}_{\mathrm{ic}}(\theta)
  =\frac{1}{N_{\mathrm{ic}}}
   \sum_{k=1}^{N_{\mathrm{ic}}}
   \sum_{\varphi\in\mathcal{V}}
   \bigl[\hat{\varphi}(x_{k},y_{k},0;\theta)-\varphi_{0}(x_{k},y_{k})\bigr]^{2},
  \label{eq:Lic}
\end{equation}
where $\varphi_{0}$ denotes the initial profile of each species
given in~\eqref{initial}.
Concretely, we consider the following initial conditions:
\begin{subequations}
\label{eq:ic_profiles}
\begin{align}
T_{0}(x,y)
&= \exp\!\left(-\frac{\|(x,y)-\mathbf{x}_{T}\|^{2}}
                                    {2\sigma_{T}^{2}}\right),
\label{eq:ic_T}\\
B_{0}(x,y)
&= 0.3\exp\!\left(-\frac{\|(x,y)-\mathbf{x}_{B}\|^{2}}
                                        {2\sigma_{B}^{2}}\right),
\label{eq:ic_B}\\
O_{0}(x,y)&=O_{\mathrm{vas}},\quad
I_{0}(x,y)=0.01,\,\nonumber\\
S_{0}(x,y)&=0,
\label{eq:ic_OIS}
\end{align}
\end{subequations}
where, the tumor centre is $\mathbf{x}_{T}=(L/2,L/2)$ ,
$\mathbf{x}_{B}=(0.3L,\,0.6L)$ is the bacterial injection site,
$\sigma_{T}=L/5$ and $\sigma_{B}=L/8$ are the spatial spreads, and $L=6$~mm.
The Gaussian profile~\eqref{eq:ic_T} initialises a localised tumour seed
at $10\%$ of the carrying capacity $\theta$, while~\eqref{eq:ic_B} places a
very small bacterial inoculum ($B_{0}\ll T_{0}$) at an off-centre position,
simulating a targeted injection away from the tumour core.
\item \textbf{Boundary Condition Loss}

The homogeneous Neumann conditions~\eqref{bord} are enforced by penalising the
squared gradient norm on $\mathcal{S}_{\mathrm{bc}}$:
\begin{equation}
  \mathcal{L}_{\mathrm{bc}}(\theta)
  =\frac{1}{N_{\mathrm{bc}}}
   \sum_{k=1}^{N_{\mathrm{bc}}}
   \sum_{\varphi\in\mathcal{V}}
\left\|\nabla_{(x,y)}\hat{\varphi}(x_{k},y_{k},t_{k};\theta)\right\|^{2}.
  \label{eq:Lbc}
\end{equation}
Note that the gradient $\nabla_{(x,y)}\hat{\varphi}$ is again computed
exactly via AD, so no approximation of the outward normal $\mathbf{n}$ is
needed.
\begin{proposition}[Consistency of the PINN loss]
\label{prop:consistency}
Let $\theta^{*}$ be such that
$\mathcal{L}(\theta^{*})=0$.
Then the network $\mathcal{N}_{\theta^{*}}$ satisfies the residuals
$\mathcal{R}_{\varphi}\equiv0$ for all $\varphi$ at every collocation
point in $\mathcal{S}_{\mathrm{pde}}$, reproduces the initial data at
every point in $\mathcal{S}_{\mathrm{ic}}$, and satisfies the zero-flux
condition at every point in $\mathcal{S}_{\mathrm{bc}}$.
\end{proposition}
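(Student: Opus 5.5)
The statement follows from the elementary fact that a finite sum of nonnegative terms with positive weights vanishes only if every term vanishes. The plan is to apply this fact twice: first to the three weighted components of $\mathcal{L}$, and then to the individual summands inside each component.

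\textbf{Step 1.} We assume, as is implicit in the construction of \eqref{eq:total_loss}, that the weights satisfy $\lambda_{\mathrm{pde}},\lambda_{\mathrm{ic}},\lambda_{\mathrm{bc}}>0$. Each of $\mathcal{L}_{\mathrm{pde}}$, $\mathcal{L}_{\mathrm{ic}}$ and $\mathcal{L}_{\mathrm{bc}}$ is, by \eqref{eq:Lpde}, \eqref{eq:Lic} and \eqref{eq:Lbc}, a positive multiple ($1/N_{\mathrm{pde}}$, $1/N_{\mathrm{ic}}$ or $1/N_{\mathrm{bc}}$) of a finite sum of squares of real numbers. Hence each is nonnegative. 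Therefore $\mathcal{L}(\theta^{*})=0$ forces
\[
\mathcal{L}_{\mathrm{pde}}(\theta^{*})=\mathcal{L}_{\mathrm{ic}}(\theta^{*})=\mathcal{L}_{\mathrm{bc}}(\theta^{*})=0 .
\]

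\textbf{Step 2.} Inside each component, every summand is nonnegative, so each must be zero.

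For $\mathcal{L}_{\mathrm{pde}}$, this gives $\mathcal{R}_{\varphi}(x_k,y_k,t_k;\theta^{*})=0$ for every $\varphi\in\mathcal{V}$ and every point of $\mathcal{S}_{\mathrm{pde}}$.

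For $\mathcal{L}_{\mathrm{ic}}$, it gives $\hat\varphi(x_k,y_k,0;\theta^{*})=\varphi_0(x_k,y_k)$ at every point of $\mathcal{S}_{\mathrm{ic}}$.

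For $\mathcal{L}_{\mathrm{bc}}$, it gives $\nabla_{(x,y)}\hat\varphi=0$ at every point of $\mathcal{S}_{\mathrm{bc}}$. Consequently $\partial_n\hat\varphi=\nabla_{(x,y)}\hat\varphi\cdot n=0$ there, which is the zero-flux condition \eqref{bord}.

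\textbf{Well-definedness.} It remains to check that the residuals are genuine real numbers at the collocation points. Since $\sigma=\tanh$ is smooth, $\mathcal{N}_{\theta}$ is $C^\infty$, so the derivatives appearing in \eqref{eq:residuals} exist pointwise. The only potential singularity is the factor $K_H/(K_H+\hat O)$. This requires $K_H+\hat O\neq 0$ at the collocation points, which we take as implicit in $\mathcal{L}(\theta^{*})$ being finite.

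\textbf{Main subtlety.} The weak point is not the argument but its scope. The conclusion holds only on the finite sets $\mathcal{S}_{\mathrm{pde}}$, $\mathcal{S}_{\mathrm{ic}}$ and $\mathcal{S}_{\mathrm{bc}}$, not on $\mathcal{Q}$. It also relies on the boundary loss penalising the full gradient, which is a stronger requirement than the normal derivative alone. Passing from these pointwise statements to an approximation of $\mathbf{U}$ is the job of the later stability and generalization results, and the proof should remark on this.
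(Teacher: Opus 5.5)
Your argument is correct and is the natural justification. The paper states this proposition without proof, treating it as immediate from the fact that $\mathcal{L}$ is a positively weighted finite sum of squares, which is exactly your two-step nonnegativity argument. Your explicit assumption $\lambda_{\mathrm{pde}},\lambda_{\mathrm{ic}},\lambda_{\mathrm{bc}}>0$, and your observation that $\nabla_{(x,y)}\hat\varphi=0$ implies $\partial_{\mathbf{n}}\hat\varphi=0$, are precisely the points the paper leaves implicit. Your remark that the full-gradient penalty is stronger than the Neumann condition is also apt, since the exact solution itself generally has nonzero tangential derivatives on $\partial\Omega$.
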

\end{enumerate}
\begin{algorithm}[!t]
\caption{PINN Training for System~\eqref{couple}}
\label{alg:training}
\begin{algorithmic}[1]
\Require Collocation sets $\mathcal{S}_{\mathrm{pde}},
         \mathcal{S}_{\mathrm{ic}}, \mathcal{S}_{\mathrm{bc}}$;
         weights $\lambda_{\mathrm{pde}},\lambda_{\mathrm{ic}},
         \lambda_{\mathrm{bc}}$; A
         patience $P$; the maximum epochs $E$.
\Ensure Best-validation-loss weights $\theta^{*}$.
\State Initialise $\theta$ with Xavier uniform distribution.
\State $\mathcal{L}^{*}\leftarrow+\infty$;\; $p\leftarrow0$.
\For{$e=1,\ldots,E$}
  \State Sample mini-batch of size $512$ from each collocation set.
  \State Compute $\mathcal{L}(\theta)$ via forward pass + AD (Sec.~\ref{sec:autodiff}).
  \State Update $\theta\leftarrow\theta-\eta\,\nabla_{\theta}\mathcal{L}(\theta)$
         via Adam.
  \State Evaluate $\mathcal{L}_{\mathrm{val}}$ on the held-out
         validation set.
  \If{$\mathcal{L}_{\mathrm{val}}<\mathcal{L}^{*}$}
    \State $\mathcal{L}^{*}\leftarrow\mathcal{L}_{\mathrm{val}}$;\;
           $\theta^{*}\leftarrow\theta$;\; $p\leftarrow0$.
  \Else
    \State $p\leftarrow p+1$.
  \EndIf
  \If{$p\geq P$}  \textbf{break}  \Comment{Early stopping}
  \EndIf
  \State Adjust $\eta$ via ReduceLROnPlateau scheduler.
\EndFor
\State \Return $\theta^{*}$.
\end{algorithmic}
\end{algorithm}
\begin{table*}[!t]
\caption{The numerical simulation parameters of the system~\eqref{couple}}
\label{tab:params}
\renewcommand{\arraystretch}{1.25}
\centering
\begin{tabular}{@{}llccc@{}}
\toprule
\textbf{Equation} & \textbf{Description} & \textbf{Symbol} & \textbf{Value} & \textbf{Unit} \\
\midrule
\multirow{5}{*}{All}
  & Tumor cell motility         & $D_{T}$             & $0.01$   & mm$^{2}$\,day$^{-1}$ \\
  & Bacterial diffusivity       & $D_{B}$             & $0.1$    & mm$^{2}$\,day$^{-1}$ \\
  & Oxygen diffusivity          & $D_{O}$             & $1.0$    & mm$^{2}$\,day$^{-1}$ \\
  & Cytokine diffusivity        & $D_{I}$             & $0.5$    & mm$^{2}$\,day$^{-1}$ \\
  & AHL signal diffusivity      & $D_{S}$             & $0.3$    & mm$^{2}$\,day$^{-1}$ \\
\midrule
\multirow{4}{*}{(1) Tumor $T$}
  & Proliferation rate          & $\rho_{T}$          & $0.3$    & day$^{-1}$            \\
  & Carrying capacity           & $\theta$            & $1.0$    & normalised            \\
  & Apoptosis rate              & $\delta_{T}$        & $0.05$   & day$^{-1}$            \\
  & Signal suppression coeff.   & $\alpha_{S}$        & $0.2$    & day$^{-1}$    \\
\midrule
\multirow{4}{*}{(2) Bacteria $B$}
  & Growth/clearance rate       & $\rho_{B}$          & $0.5$    & day$^{-1}$                          \\
  & O$_{2}$ half-saturation     & $K_{H}$             & $0.1$    & normalised                          \\
  & Natural death rate          & $\delta_{B}$        & $0.1$    & day$^{-1}$                          \\
  & Immune clearance efficiency & $\beta_{I}$         & $0.3$    & day$^{-1}$        \\
\midrule
\multirow{3}{*}{(3) Oxygen $O$}
  & Tumor O$_{2}$ consumption   & $\gamma_{T}$        & $0.2$    & day$^{-1}$  \\
  & Vascular exchange rate      & $\gamma_{E}$        & $0.5$    & day$^{-1}$  \\
  & Baseline O$_{2}$ level      & $O_{\mathrm{ext}}$  & $0.2$    & normalised  \\
\midrule
\multirow{2}{*}{(4) Cytokines $I$}
  & Tumor-derived production    & $\beta_{T}$         & $0.1$    & day$^{-1}$ \\
  & Cytokine decay rate         & $\delta_{I}$        & $0.2$    & day$^{-1}$                               \\
\midrule
\multirow{2}{*}{(5) Signal $S$}
  &  production rate         & $\beta_{B}$   & $0.4$    & day$^{-1}$ \\
  &  degradation rate        & $\delta_{S}$        & $0.3$    & day$^{-1}$           \\
\bottomrule
\end{tabular}
\end{table*}
More details about the numerical parameters and implementation details are described in the section \ref{sec:num}.\\
In the next section, we provide a rigorous convergence analysis of the proposed PINN framework and establish error estimates for the method. 
\section{Convergence Analysis of the PINN Approximation}\label{sec:convergence}
Let $\mathbf{U}=(T,B,O,I,S)^{\top}$ be the unique classical solution of system~\eqref{couple} corresponding to the initial data $\mathbf{U}_{0}$ (guaranteed by Theorem~\ref{wp}-Proposition \ref{propReg}), and let
$\widehat{\mathbf{U}}_{\theta} = (\hat{T},\hat{B},\hat{O},\hat{I},\hat{S})^{\top}$
be the PINN approximation produced by the network
$\mathcal{N}_{\theta}:\mathbb{R}^{3}\to\mathbb{R}^{5}$ after minimizing the composite loss~\eqref{eq:total_loss}.
Define the \emph{total approximation error} as
\begin{equation}
  \mathbf{e}(t,\mathbf{x})
  \;=\; \mathbf{U}(t,\mathbf{x}) - \widehat{\mathbf{U}}_{\theta}(t,\mathbf{x}),
  \quad (t,\mathbf{x})\in Q.
  \label{eq:error_def}
\end{equation}
The total error admits the three-way decomposition:
\begin{gather}
\begin{split}
  \|\mathbf{e}\|_{L^{\infty}(0,\tau;(L^{2}(\Omega))^{5})}
  &\leq \!\!\!\!
  \underbrace{\mathcal{E}_{\mathrm{approx}}(\theta)}_{\text{(I) approximation}}\!
  +\!
  \underbrace{\mathcal{E}_{\mathrm{gen}}(N)}_{\text{(II) generalization}}\\
 & +
  \underbrace{\mathcal{E}_{\mathrm{opt}}(\theta^{*})}_{\text{(III) optimisation}},
  \label{eq:three_way}
  \end{split}
\end{gather}
where:
\begin{itemize}
  \item[(I)] \textbf{Approximation error}: the best-achievable error of
        a depth-$L$, width-$n$ network in approximating the true solution
        $\mathbf{U}\in (L^{2}(0,\tau;H^{1}(\Omega)))^{5}$.
  \item[(II)] \textbf{Generalization error}: the discrepancy between the
        empirical (Monte Carlo) loss evaluated on $N_{\mathrm{pde}}$,
        $N_{\mathrm{ic}}$, $N_{\mathrm{bc}}$ collocation points and the
        continuous $L^{2}$-residuals over $Q$, $\Omega$, and $\Sigma$.
  \item[(III)] \textbf{Optimisation error}: the gap between the loss
        achieved by the Adam algorithm after $E$ epochs and the global
        minimum of the loss landscape.
\end{itemize}
Item~(III) depends on the specific optimization trajectory and is not addressed here; we provide upper bounds for~(I) and~(II), and we derive a conditional stability estimate showing that if~(I) and~(II) are small, then $\|\mathbf{e}\|_{L^{\infty}(0,\tau;(L^{2}(\Omega))^{5})}$ is small.

We begin by defining the continuous residual functionals associated with the PINN approximation $\mathbf{U}_{\theta}$.
\subsection{Continuous Residual Functionals}
\label{sec:residuals_cont}

Define the \emph{continuous} (true) residuals associated with the
PINN approximation $\widehat{\mathbf{U}}_{\theta}$:
\begin{align}
  \mathcal{E}_{\mathrm{pde}}(\theta)
  &\;=\;
  \int_{0}^{\tau}\!\!\int_{\Omega}
  \sum_{\varphi\in\mathcal{V}}
  \bigl|\mathcal{R}_{\varphi}(\widehat{\mathbf{U}}_{\theta};t,\mathbf{x})\bigr|^{2}
  \,\mathrm{d}\mathbf{x}\,\mathrm{d}t,
  \label{eq:eps_pde}\\[4pt]
  \mathcal{E}_{\mathrm{ic}}(\theta)
  &\;=\;
  \int_{\Omega}
  \sum_{\varphi\in\mathcal{V}}
  \bigl|\hat{\varphi}(\mathbf{x},0;\theta)-\varphi_{0}(\mathbf{x})\bigr|^{2}
  \,\mathrm{d}\mathbf{x},
  \label{eq:eps_ic}\\[4pt]
  \mathcal{E}_{\mathrm{bc}}(\theta)
  &\;=\;
  \int_{0}^{\tau}\!\!\int_{\partial\Omega}
  \sum_{\varphi\in\mathcal{V}}
  \bigl|\nabla_{\mathbf{x}}\hat{\varphi}(\mathbf{x},t;\theta)\cdot\mathbf{n}\bigr|^{2}
  \,\mathrm{d}\sigma\,\mathrm{d}t,
  \label{eq:eps_bc}
\end{align}
where 
$\mathcal{R}_{\varphi}$ are the five residual operators defined
in~\eqref{eq:residuals}.
Let
$\mathcal{E}(\theta)
= \lambda_{\mathrm{pde}}\,\mathcal{E}_{\mathrm{pde}}
+\lambda_{\mathrm{ic}}\,\mathcal{E}_{\mathrm{ic}}
+\lambda_{\mathrm{bc}}\,\mathcal{E}_{\mathrm{bc}}$
denote the total continuous loss.

\subsection{Lipschitz Continuity of the Nonlinear Operator}
\label{sec:lipschitz}
In this section, we first establish that the nonlinear function 
$\mF:\mathbb{R}^{5}\to\mathbb{R}^{5}$ defined in~\eqref{F} is locally Lipschitz.
\begin{lemma}[Lipschitz bound on $\mF$]
\label{lem:lipschitz}
There exists a constant $L_{F}>0$ such that
\begin{equation}
  \|\mF(\mathbf{U})-\mF(\widehat{\mathbf{U}}_{\theta})\|_{(L^{2}(\Omega))^{5}}
  \;\leq\; L_{\mF}\,\|\mathbf{e}\|_{(L^{2}(\Omega))^{5}},
  \quad \text{a.e.}\ t\in(0,\tau).
  \label{eq:lip}
\end{equation}
\end{lemma}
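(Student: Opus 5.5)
The plan is to prove a pointwise Lipschitz bound for $\mF$ on a bounded box of $\mathbb{R}^5$ and then integrate over $\Omega$. Since $\mF$ is only locally Lipschitz (the terms $T^2$, $ST$, $IB$, $TO$ are quadratic), the argument needs both $\mathbf{U}$ and $\widehat{\mathbf{U}}_{\theta}$ to lie in a common bounded set.

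\textbf{Step 1: a common bounded set.} By Theorem~\ref{wp}, $\mathbf{U}\in (L^{\infty}(\mathcal{Q})_{+})^{5}$, so there is $M_1$ with $0\le \mathbf{U}\le M_1$ a.e. in $\mathcal{Q}$. For the network, note that $\widehat{\mathbf{U}}_{\theta}$ is a finite composition of affine maps and $\tanh$ on the bounded closure $\overline{\mathcal{Q}}$, hence continuous there. Therefore $\|\widehat{\mathbf{U}}_{\theta}\|_{L^\infty(\mathcal{Q})}\le M_2(\theta)<\infty$. Set $M=\max(M_1,M_2)$, so that both fields take values in the box $K_M=[-M,M]^5$ a.e.

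\textbf{Step 2: handling the Michaelis--Menten term.} The only non-polynomial term is $K_H B/(K_H+O)$, which needs $K_H+O$ bounded away from $0$. For $\mathbf{U}$ this holds because $O\ge 0$. For $\hat O$, which need not be nonnegative, I would use one of two devices:
\begin{itemize}
\item assume $\hat O\ge -K_H/2$ on $\mathcal{Q}$, which can be enforced by a positivity-preserving output layer such as softplus; or
\item replace $O$ in this term by its positive part $O^+$, which does not change $\mF$ along the true nonnegative solution.
\end{itemize}
With either device, $r\mapsto K_H/(K_H+r)$ is Lipschitz with constant $\le 4/K_H$ on the admissible range, and its absolute value is bounded by $2$.

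\textbf{Step 3: pointwise Lipschitz estimate.} On $K_M$, every component of $\mF$ is $C^1$, so the mean value theorem gives $|\mF(r)-\mF(s)|\le \sup_{K_M}|D\mF|\,|r-s|$. The estimate can also be done term by term, using identities such as $ST-\hat S\hat T=S(T-\hat T)+\hat T(S-\hat S)$. The resulting componentwise bounds are:
\begin{itemize}
\item $F_T$: $|F_T(r)-F_T(s)|\le (\rho_T+2\rho_T M/c+\delta_T+2\alpha_S M)|r-s|$.
\item $F_B$: $(\rho_B(2+4M/K_H)+\delta_B+2\beta_I M)|r-s|$.
\item $F_O$: $(2\gamma_T M+\gamma_{vas})|r-s|$.
\item $F_I$: $(\beta_T+\delta_I)|r-s|$.
\item $F_S$: $(\beta_B+\delta_S)|r-s|$.
\end{itemize}
Summing gives $|\mF(\mathbf{U})-\mF(\widehat{\mathbf{U}}_{\theta})|\le L_{\mF}|\mathbf{e}|$ a.e. in $\mathcal{Q}$, with $L_{\mF}$ depending only on $M$ and the parameters.

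\textbf{Step 4: integration.} Squaring the pointwise bound and integrating over $\Omega$ for a.e. $t\in(0,\tau)$ yields \eqref{eq:lip}.

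\textbf{Main obstacle.} The delicate point is Step 1 together with Step 2. The constant $L_{\mF}$ depends on $\|\widehat{\mathbf{U}}_{\theta}\|_{\infty}$, hence on $\theta$, and on keeping $\hat O$ away from $-K_H$. To make $L_{\mF}$ uniform over the networks used later in Theorem~\ref{thm:stability}, I would restrict to a class of networks with uniformly bounded outputs, for instance by bounding the weights of the final layer, since $|\tanh|\le 1$ gives $|\mathcal{N}_\theta|\le \|W^{(L)}\|+\|\mathbf{b}^{(L)}\|$. Alternatively, one could state $L_{\mF}=L_{\mF}(M)$ explicitly and carry this dependence through the subsequent results.
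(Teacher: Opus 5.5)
Your proposal is correct and follows essentially the same route as the paper: a common $L^\infty$ bound $M$ for $\mathbf{U}$ and $\widehat{\mathbf{U}}_{\theta}$, then term-by-term product-difference estimates for $F_T,\dots,F_S$ (e.g.\ $ST-\hat S\hat T=S\,e_T+\hat T\,e_S$), then summation and integration over $\Omega$. Your Step~2 and your remark on the $\theta$-dependence of $M$ repair points the paper glosses over: its proof simply assumes $\hat O\ge 0$ to get $(K_H+O)(K_H+\hat O)\ge K_H^2$, and it claims a $\theta$-uniform bound on $\widehat{\mathbf{U}}_{\theta}$ from the initialisation and the $\tanh$ activation, even though the output layer is linear.
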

\begin{proof}

Theorem~\ref{wp} guarantees $\mathbf{U}\in(L^{\infty}(Q)_{+})^{5}$ whenever
$\mathbf{U}_{0}\in(H^{1} (\Omega) \cap (L^{\infty}(\Omega)_{+})^{5}$.
For the network approximation $\widehat{\mathbf{U}}_{\theta}$, boundedness is enforced by the bounded weight initialisation and the $\tanh$ activation (whose output is always in $(-1,1)$). Then, there exists constants $M$ such that $\|\mathbf{U}\|_{(L^{\infty}(Q))^{5}}\leq M$, and  $\|\widehat{\mathbf{U}}_{\theta}\|_{(L^{\infty}(Q))^{5}}\leq M$.
Now, recall from~\eqref{F} the five components of  $\mF$:
\begin{align*}
\begin{split}
F_{T}(\mathbf{U})&=\rho_{T }T\left(1 - \frac{T}{c}\right)-\delta_{T} T-\alpha_{S} S T;\\
  F_{B}(\mathbf{U})&=\rho_{B}B\frac{K_{H}}{K_{H}+O}-\delta_{B}B-\beta_{I}IB;\\
  F_{O}(\mathbf{U})&=-\gamma_{T} T O +\gamma_{vas} (O_{vas}-O); \\
  F_{I}(\mathbf{U})&= \beta_{T}T-\delta_{I}I;\\
 F_{S}(\mathbf{U})&= \beta_{B}B-\delta_{S}S.
 \end{split}
   \end{align*}
We bound $F_{\varphi}(\mathbf{U})-F_{\varphi}(\widehat{\mathbf{U}}_{\theta})$
in $L^{2}(\Omega)$ with $\varphi=T,B,O,I,S$.
Let $e_{\varphi}=\varphi-\hat{\varphi}$ for each component.\\
\noindent\textbf{Term $F_{T}$:}
\begin{gather*}
\begin{split}
 \hspace{-0.9cm} F_{T}(\mathbf{U}) - F_{T}(\widehat{\mathbf{U}}_{\theta})
  &= (\rho_{T}-\delta_{T})e_{T}-\frac{\rho_{T}}{c}(T+\hat{T})e_{T}\\
  &-\alpha_{S}(ST-\hat{S}\hat{T}).
  \end{split}
\end{gather*}
Since $|T|\leq M,$ $|\hat{T}|\leq M$ and using the assumption \textit{\textbf{(H)}}, the first term is bounded by $C_{M}\|e_{T}\|_{L^{2}(\Omega)}$.
For the second term, using the fact that
$ST-\hat{S}\hat{T}= S\,e_{T}+\hat{T}\,e_{S}$, and  $|S|\leq M$, $|\hat{T}|\leq {M}$:
\begin{equation*}
  \|ST-\hat{S}\hat{T}\|_{L^{2}(\Omega)} \leq C_{M}\|e_{T}\|_{L^{2}(\Omega)}+C_{M}\|e_{S}\|_{L^{2}(\Omega)}.
\end{equation*}
Hence
\begin{gather}\label{FT}
\hspace{-0.9cm}\|F_{T}(\mathbf{U})-F_{T}(\widehat{\mathbf{U}}_{\theta})\|_{L^{2}(\Omega)}\leq C_{M}\big(\|e_{T}\|_{L^{2}(\Omega)}+\|e_{S}\|_{L^{2}(\Omega)}\big). 
\end{gather}

\medskip
\noindent\textbf{Term $F_{B}$:}
For the fractional term, we have
\begin{equation*}
  B\frac{ K_{H}}{K_{H}+O}-\hat{B}\frac{K_{H}}{K_{H}+\hat{O}}
  = \frac{(K^{2}_{H}+K_{H})\hat{O}e_{B}-K_{H}\hat{B}e_{O}}{(K_{H}+O)(K_{H}+\hat{O})}.
\end{equation*}
Since $O,\hat{O}\geq0$, we have $(K_{H}+O)(K_{H}+\hat{O})\ge K^{2}_{H}$, and using
 $|\hat{B}|\leq M$, $|\hat{O}|\leq M$ and assumption \textit{(\textbf{H)}}, we obtain:
\begin{equation*}
  \left\|B\frac{K_{H}}{K_{H}+O}-\hat{B}\frac{\hat{B}}{K_{H}+\hat{O}}\right\|_{L^{2}(\Omega)}
  \!\!\!\!\!\leq C_{M}(\|e_{B}\|_{L^{2}(\Omega)} +\|e_{O}\|_{L^{2}(\Omega)}).
\end{equation*}
For the rest of terms, we have $$-\delta_{B}B+\delta_{B}\hat{B}-\beta_{I}IB+\beta_{I}\hat{I}\hat{B}=-\delta_{B}e_{B}-\beta_{I}( I\,e_{B}+\hat{B}\,e_{I}).$$  Using
 $|\hat{I}|\leq M$, $|\hat{B}|\leq M$ and assumption \textit{(\textbf{H)}}, we obtain:
 \begin{gather}\label{FB}
 \begin{split}
 \|F_{B}(\mathbf{U})-F_{B}(\widehat{\mathbf{U}}_{\theta})&\|_{L^{2}(\Omega)}
\leq C_{M}\big(\|e_{B}\|_{L^{2}(\Omega)}\\
&+\|e_{O}\|_{L^{2}(\Omega)}+\|e_{I}\|_{L^{2}(\Omega)}\big). 
\end{split}
 \end{gather}
\textbf{Terms $F_{O}$, $F_{I}$, $F_{S}$:}
Using $|T|,|B|,|O|\leq M$, we have similarly obtained as the previous estimates, the following:
\begin{gather}\label{FO}
 \begin{split}
\|F_{O}(\mathbf{U})-F_{O}(\widehat{\mathbf{U}}_{\theta})&\|_{L^{2}(\Omega)}
\leq C_{M}\big(\|e_{T}\|_{L^{2}(\Omega)}\\
&+\|e_{B}\|_{L^{2}(\Omega)}
+\|e_{O}\|_{L^{2}(\Omega)}\big);
 \end{split}
 \end{gather}
 \begin{gather}\label{FI}
\hspace{-1cm}\|F_{I}(\mathbf{U})-F_{I}(\widehat{\mathbf{U}}_{\theta})\|_{L^{2}(\Omega)}\leq C_{M}\big(\|e_{T}\|_{L^{2}(\Omega)}+\|e_{I}\|_{L^{2}(\Omega)}\big);
 \end{gather}
 \begin{gather}\label{FS}
\hspace{-1cm}\|F_{S}(\mathbf{U})-F_{S}(\widehat{\mathbf{U}}_{\theta})\|_{L^{2}(\Omega)}
\leq C_{M}\big(\|e_{B}\|_{L^{2}(\Omega)}+\|e_{S}\|_{L^{2}(\Omega)}\big).
 \end{gather}
Summing estimates \eqref{FT}-\eqref{FS} with $\mathbf{e}=(e_{T},e_{B},e_{O},e_{I},e_{S})^{\top}$, and setting $L_{F}=C_{M}$, completes the proof of Lemma \ref{lem:lipschitz}.
\end{proof}

\subsubsection{Conditional Stability: Error Controlled by Residuals}
\label{sec:stability}

The central result of this section shows that small PDE, IC, and BC residuals
imply a small approximation error.

\begin{theorem}[Conditional Stability]
\label{thm:stability}
Let the assumptions of Theorem~\ref{wp} hold and let
$\widehat{\mathbf{U}}_{\theta}\in (L^{2}(0,\tau;H^{1}(\Omega))\cap
H^{1}(0,\tau;(H^{1}(\Omega))'))^{5}$
be any network approximation.
Define the total residual
$\mathcal{E}(\theta)
=\mathcal{E}_{\mathrm{pde}}(\theta)
+\mathcal{E}_{\mathrm{ic}}(\theta)
+\mathcal{E}_{\mathrm{bc}}(\theta)$.
Then there exists a constant $\mathcal{C}=\mathcal{C}(\tau,M,L_{F},d_{0})>0$,
independent of $\theta$, such that
\begin{equation}
  \sup_{t\in[0,\tau]}
  \|\mathbf{e}(t)\|^{2}_{(L^{2}(\Omega))^{5}}
  +
  d_{0}
  \int_{0}^{\tau}
  \|\nabla\mathbf{e}(s)\|^{2}_{(L^{2}(\Omega))^{5}}
  \,\mathrm{d}s
  \;\leq\;
\mathcal{C}\,\mathcal{E}(\theta).
  \label{eq:stability}
\end{equation}
\end{theorem}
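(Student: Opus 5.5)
We will prove the estimate with an energy argument on the error equation, in the same way as the $L^2$ estimate in Proposition~\ref{propReg}.

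\emph{Step 1: the error equation.} By definition of the residuals~\eqref{eq:residuals}, the network satisfies $\partial_t\widehat{\mathbf{U}}_{\theta}-\mathbf{D}\Delta\widehat{\mathbf{U}}_{\theta}=\mF(\widehat{\mathbf{U}}_{\theta})+\mathcal{R}$, where $\mathcal{R}=(\mathcal{R}_T,\dots,\mathcal{R}_S)^{\top}$. The exact solution satisfies \eqref{Eqstate}. Subtracting, we obtain
\[
\partial_t\mathbf{e}-\mathbf{D}\Delta\mathbf{e}=\mF(\mathbf{U})-\mF(\widehat{\mathbf{U}}_{\theta})-\mathcal{R}\ \text{in }\mathcal{Q},\qquad
\mathbf{D}\partial_n\mathbf{e}=-\mathbf{D}\partial_n\widehat{\mathbf{U}}_{\theta}\ \text{on }\Sigma,
\]
with $\mathbf{e}(0)=\mathbf{U}_0-\widehat{\mathbf{U}}_{\theta}(0)$.

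\emph{Step 2: testing with $\mathbf{e}$.} We take the $L^2$ inner product of the error equation with $\mathbf{e}$ and integrate by parts in space. This keeps the boundary term $-\int_{\partial\Omega}\mathbf{D}\,\partial_n\widehat{\mathbf{U}}_{\theta}\cdot\mathbf{e}\,\mathrm{d}\sigma$, which is the flux mismatch. We then estimate each term:
\begin{itemize}
\item The diffusion term gives at least $d_0\|\nabla\mathbf{e}\|^2$.
\item The reaction difference is bounded by Lemma~\ref{lem:lipschitz} and Cauchy--Schwarz by $L_{F}\|\mathbf{e}\|^2$.
\item The residual term is bounded by $\tfrac12\|\mathcal{R}\|^2+\tfrac12\|\mathbf{e}\|^2$.
\end{itemize}

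\emph{Step 3: the boundary term.} We apply Cauchy--Schwarz on $\partial\Omega$ and then the trace inequality $\|\mathbf{e}\|_{L^2(\partial\Omega)}^2\le C_{\mathrm{tr}}\big(\varepsilon\|\nabla\mathbf{e}\|^2+C_\varepsilon\|\mathbf{e}\|^2\big)$ (equivalently, $\|v\|^2_{L^2(\partial\Omega)}\le C\|v\|\,\|v\|_{H^1}$). This gives
\[
\Big|\int_{\partial\Omega}\mathbf{D}\partial_n\widehat{\mathbf{U}}_{\theta}\cdot\mathbf{e}\Big|\le \tfrac{d_0}{2}\|\nabla\mathbf{e}\|^2+C\|\mathbf{e}\|^2+C\|\partial_n\widehat{\mathbf{U}}_{\theta}\|^2_{L^2(\partial\Omega)}.
\]
The first term is absorbed into the diffusion term, and the last integrates in time to $C\,\mathcal{E}_{\mathrm{bc}}$.

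\emph{Step 4: Gr\"onwall.} Collecting Steps 2 and 3 yields
\[
\tfrac{d}{dt}\|\mathbf{e}\|^2+d_0\|\nabla\mathbf{e}\|^2\le C(1+L_F)\|\mathbf{e}\|^2+\|\mathcal{R}\|^2+C\|\partial_n\widehat{\mathbf{U}}_{\theta}\|^2_{L^2(\partial\Omega)}.
\]
Integrating over $(0,t)$ and applying Gr\"onwall's lemma gives
\[
\|\mathbf{e}(t)\|^2+d_0\int_0^t\|\nabla\mathbf{e}\|^2\le e^{C\tau}\big(\mathcal{E}_{\mathrm{ic}}+\mathcal{E}_{\mathrm{pde}}+C\,\mathcal{E}_{\mathrm{bc}}\big).
\]
Taking the supremum over $t$ gives \eqref{eq:stability} with $\mathcal{C}=C e^{C(1+L_F)\tau}$. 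This constant depends only on $\tau,M,L_F,d_0$ and the trace constant of $\Omega$.

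\emph{Main obstacle.} The difficult step is justifying the Lipschitz bound and the trace absorption uniformly in $\theta$. Lemma~\ref{lem:lipschitz} relies on an $L^\infty$ bound $M$ for $\widehat{\mathbf{U}}_{\theta}$ that is independent of $\theta$. This is only available if we restrict to a class of networks with bounded outputs, for instance through bounded final-layer weights or a clipping or rescaling of the output. I would state this explicitly as a standing hypothesis of the theorem. A secondary subtlety is that the boundary term must be controlled without requiring $\mathbf{e}\in L^\infty(\partial\Omega)$. The trace inequality above handles this, because the $\varepsilon\|\nabla\mathbf{e}\|^2$ part is absorbed by the diffusion, and it uses only the regularity $\mathbf{e}\in L^2(0,\tau;H^1)\cap H^1(0,\tau;(H^1)')$ that is assumed. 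The same regularity justifies the identity $\langle\partial_t\mathbf{e},\mathbf{e}\rangle=\tfrac12\tfrac{d}{dt}\|\mathbf{e}\|^2$ via the Lions--Magenes lemma.
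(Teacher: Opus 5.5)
Your proposal is correct and follows essentially the same route as the paper. Both arguments use the error equation with flux mismatch $\partial_{\mathbf{n}}\mathbf{e}=-\partial_{\mathbf{n}}\widehat{\mathbf{U}}_{\theta}$, the energy identity, Lemma~\ref{lem:lipschitz} for the reaction difference, Young's inequality for the residual, a trace inequality whose $\varepsilon\|\nabla\mathbf{e}\|^{2}$ part is absorbed by the diffusion, and Gr\"onwall's lemma.

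The uniform-in-$\theta$ bound $M$ on $\widehat{\mathbf{U}}_{\theta}$ that you flag is also used tacitly by the paper. The paper justifies it only informally, via the $\tanh$ activation and the initialisation, even though the output layer is linear. In addition, Lemma~\ref{lem:lipschitz} silently needs $\hat{O}\ge 0$, or at least $K_{H}+\hat{O}$ bounded away from zero. Making these explicit hypotheses therefore strengthens the argument rather than departing from it.
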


\begin{proof}
\textbf{Step 1 — Error equation.}
Since $\mathbf{U}$ satisfies~(10) exactly and
$\widehat{\mathbf{U}}_{\theta}$ satisfies it with residual
\\
$\mathbf{R}_{\theta}=({\mathcal{R}_{T}},{\mathcal{R}_{B}},
{\mathcal{R}_{O}},{\mathcal{R}_{I}},{\mathcal{R}_{S}})^{\top}$,
subtracting gives for a.e.~$t\in(0,\tau)$:
\begin{subequations}
\label{eq:error_system}
\begin{align}
  \partial_{t}\mathbf{e}
  + \mathbf{A}\mathbf{e}
  &= \mF(\mathbf{U})-\mF(\widehat{\mathbf{U}}_{\theta})
    - \mathbf{R}_{\theta}
    \quad\text{in }Q,
  \label{eq:err_eq}\\
  \partial_{\mathbf{n}}\mathbf{e}
  &= -\partial_{\mathbf{n}}\widehat{\mathbf{U}}_{\theta}
    \quad\text{on }\Sigma,
  \label{eq:err_bc}\\
  \mathbf{e}(0,\cdot)
  &= \mathbf{U}_{0}-\widehat{\mathbf{U}}_{\theta}(\cdot,0)
    \quad\text{in }\Omega.
  \label{eq:err_ic}
\end{align}
\end{subequations}

\textbf{Step 2 — Energy identity.}
Take the $(L^{2}(\Omega))^{5}$ inner product of~\eqref{eq:err_eq}
with $\mathbf{e}$ and integrate over $\Omega$.
Integration by parts, using the boundary condition~\eqref{eq:err_bc}, yields
\begin{align}
\frac{1}{2}\frac{\mathrm{d}}{\mathrm{d}t}
&\|\mathbf{e}(t)\|^{2}_{(L^{2}(\Omega))^{5}}
+
\bigl(\mathbf{D}\nabla\mathbf{e},\nabla\mathbf{e}\bigr)_{(L^{2}(\Omega))^{5}}
\notag\\
&= \bigl(F(\mathbf{U})-F(\widehat{\mathbf{U}}_{\theta}),\mathbf{e}\bigr)_{(L^{2})^{5}}
   -\bigl(\mathbf{R}_{\theta},\mathbf{e}\bigr)_{(L^{2}(\Omega))^{5}}
\notag\\
&\quad
  +\int_{\partial\Omega}\mathbf{D}\,\partial_{\mathbf{n}}
  \widehat{\mathbf{U}}_{\theta}\cdot\mathbf{e}\,\mathrm{d}\sigma.
\label{eq:energy_id}
\end{align}

\textbf{Step 3 — Bounding the right-hand side of~\eqref{eq:energy_id}.}

\emph{Nonlinear term.}
By Lemma~\ref{lem:lipschitz}:
\begin{equation}
\bigl(F(\mathbf{U})-F(\widehat{\mathbf{U}}_{\theta}),\mathbf{e}\bigr)_{(L^{2}(\Omega))^{5}}
\leq L_{F}\|\mathbf{e}\|^{2}_{(L^{2}(\Omega))^{5}}.
\label{eq:nl_bound} 
\end{equation}

\emph{Residual term.}
By Young's inequality  for all $\varepsilon>0$:
\begin{equation}
\bigl|(\mathbf{R}_{\theta},\mathbf{e})_{(L^{2}(\Omega))^{5}}\bigr|
\leq
\frac{1}{2\varepsilon}\|\mathbf{R}_{\theta}\|^{2}_{(L^{2}(\Omega))^{5}}
+\frac{\varepsilon}{2}\|\mathbf{e}\|^{2}_{(L^{2}(\Omega))^{5}}.
\label{eq:res_bound}
\end{equation}

\emph{Boundary term.}
Using the trace inequality in $L^{2}(\partial\Omega)$ (see \cite[Theorem 1.6.6.]{brenner2008mathematical}), and Young's inequality, we have for all $\varepsilon>0$:
\begin{align}
\hspace{-1cm}\left|\int_{\partial\Omega}\mathbf{D}\,\partial_{\mathbf{n}}
\widehat{\mathbf{U}}_{\theta}\cdot\mathbf{e}\,\mathrm{d}\sigma\right|
&\leq
|||\mathbf{D}|||_{\infty}
\|\partial_{\mathbf{n}}\widehat{\mathbf{U}}_{\theta}\|_{(L^{2}(\partial\Omega))^{5}}
\|\mathbf{e}\|_{(L^{2}(\partial\Omega))^{5}}
\notag\\
&\leq
\frac{C_{d,\Omega}}{2\varepsilon}
\|\partial_{\mathbf{n}}\widehat{\mathbf{U}}_{\theta}\|^{2}_{(L^{2}(\partial\Omega))^{5}}
\nonumber\\&+\frac{\varepsilon C_{d,\Omega}}{2}
\bigl(\varepsilon\|\nabla\mathbf{e}\|^{2}_{(L^{2}(\Omega))^{5}}
+\varepsilon^{-1}\|\mathbf{e}\|^{2}_{(L^{2}(\Omega))^{5}}\bigr).
\label{eq:bc_bound}
\end{align}

\textbf{Step 4 — Differential inequality.}
Since from \eqref{Eqstate}, $\mathbf{A}=-\mathbf{D}\mathbf{\Delta}$ with $\mathbf{D}=\mathrm{diag}(D_{T},D_{B},D_{O},D_{I},D_{S})$,
we have 
$(\mathbf{D}\nabla\mathbf{e},\nabla\mathbf{e})_{(L^{2}(\Omega))^{5}}\geq d_{0}\|\nabla\mathbf{e}\|^{2}_{(L^{2}(\Omega))^{5}}$,
where $d_{0}=\min(D_{T},D_{B},D_{O},D_{I},D_{S})$.
Substituting bounds~\eqref{eq:nl_bound}--\eqref{eq:bc_bound} into~\eqref{eq:energy_id},
and choosing $\varepsilon=\sqrt{d_{0}/2C_{d,\Omega}}$ to absorb the gradient term:
\begin{gather}
\begin{split}
\frac{\mathrm{d}}{\mathrm{d}t}
\|\mathbf{e}\|^{2}_{(L^{2}(\Omega))^{5}}
+ d_{0}\|\nabla\mathbf{e}\|^{2}_{(L^{2})^{5}}
&\leq
\Lambda\|\mathbf{e}\|^{2}_{(L^{2}(\Omega))^{5}}
\\& + C_{1}\|\mathbf{R}_{\theta}\|^{2}_{(L^{2}(\Omega))^{5}}
\\&+C_{2}\|\partial_{\mathbf{n}}\widehat{\mathbf{U}}_{\theta}\|^{2}_{(L^{2}(\partial\Omega))^{5}},
\label{eq:gronwall_ineq}    
\end{split}
\end{gather}
where
$\Lambda = 2L_{F}+\varepsilon+C_{d,\Omega}$,
$C_{1}=\varepsilon^{-1}$,
$C_{2}=\varepsilon^{-1}C_{d,\Omega}$
are constants depending only on $M$, $d_{0}$, and the trace constant $C_{d,\Omega}$.

\textbf{Step 5 — Application of Grönwall’s Lemma.}
Integrating~\eqref{eq:gronwall_ineq} over $[0,t]$ and applying
Grönwall's Lemma:
\begin{align}
\|\mathbf{e}(t)\|^{2}_{(L^{2}(\Omega))^{5}}
&+d_{0}\!\int_{0}^{t}\!\|\nabla\mathbf{e}\|^{2}_{(L^{2}(\Omega))^{5}}\mathrm{d}s
\notag\nonumber\\
&\leq e^{\Lambda\tau}
\Biggl[
\|\mathbf{e}(0)\|^{2}_{(L^{2}(\Omega))^{5}}
\nonumber\\&+C_{1}\!\int_{0}^{\tau}\!\!\int_{\Omega}
\sum_{\varphi\in\mathcal{V}}|\mathcal{R}_{\varphi}|^{2}\mathrm{d}\mathbf{x}\,\mathrm{d}s
\notag \nonumber\\
&
+C_{2}\!\int_{0}^{\tau}\!\!\int_{\partial\Omega}
\sum_{\varphi\in\mathcal{V}}|\partial_{\mathbf{n}}\hat{\varphi}|^{2}\mathrm{d}\sigma\,\mathrm{d}s
\Biggr].
\label{eq:gronwall_out}
\end{align}
Recognising
\begin{gather*}
 \|\mathbf{e}(0)\|^{2}_{(L^{2})^{5}}=\mathcal{E}_{\mathrm{ic}}(\theta);\quad
 \int_{Q}\sum|\mathcal{R}_{\varphi}|^{2} =\mathcal{E}_{\mathrm{pde}}(\theta);\\
\int_{\Sigma}\sum|\partial_{\mathbf{n}}\hat{\varphi}|^{2}
=\mathcal{E}_{\mathrm{bc}}(\theta),
\end{gather*}
and taking the supremum over $t\in[0,\tau]$ gives~\eqref{eq:stability}
with
$\mathcal{C}=e^{\Lambda\tau}\max(1,C_{1}/\lambda_{\mathrm{ic}},
C_{2}/\lambda_{\mathrm{bc}},C_{1}/\lambda_{\mathrm{pde}})$.
\end{proof}

\begin{remark}[Role of the IC weight $\lambda_{\mathrm{ic}}$]
Estimate~\eqref{eq:stability} shows that the error bound is proportional
to $\mathcal{E}(\theta)$, which includes $\lambda_{\mathrm{ic}}\mathcal{E}_{\mathrm{ic}}$.
A large $\lambda_{\mathrm{ic}}$ forces the optimiser to reduce
$\mathcal{E}_{\mathrm{ic}}$ aggressively, thereby tightening the bound.
This provides  justification for the choice a higher
$\lambda_{\mathrm{ic}}$ used in training.
\end{remark}
\subsubsection{Neural Network Approximation Capacity}
\label{sec:approx}

We now address the question of a neural network's ability to approximate the exact solution of the pair system. The Proposition \ref{propReg} gives the Sobolev regularities of the solution $\bU$ of the system \eqref{couple} corresponding to the initial data $\mathbf{U}_0 \in (L^{\infty}(\Omega)_+ \cap \Hp{1}(\Omega))^5$. Precisely, we have $\bU \in \bigl(\Lp{2}(0,\tau;\,\Hp{2}(\Omega)) \cap \Hp{1}(0,\tau;\,\Lp{2}(\Omega))\bigr)^5
=: \bigl(\Hp{2,1}(Q)\bigr)^5.$\\
Based on well-known classical results in the analysis of parabolic equations, this is natural Sobolev regularity of solutions to second-order parabolic systems with initial data $H^{1}(\Omega)$. 

The regularity $\Hp{2,1}(\mathcal{Q})$ of the global solution of \eqref{couple} is sufficient for estimating conditional stability (Theorem~\ref{thm:stability}). However, the approximation theorem (Theorem~\ref{thm:approx}) requires additional regularity of the global solution (which we do not have with the initial data in $(H^{1}(\Omega))^{5}$ to approximate residual operators related to derivatives within second-order spaces. At least one additional order of differentiability beyond the largest derivative order $K$ is required to approximate ($K=2$ for \eqref{couple}), according to the principle $m \geq K+1$ using in \cite[Theorem 1.] {girault2024approximation} and ~\cite[Theorem 5.1]{de2021approximation},  where $m$ is the maximum order of derivatives used in the Sobolev norm. To ensure this additional regularity of the global solution, it is assumed in only this section that the initial data $\bU_0 \in (H^{2}(\Omega))^5$. Using the same classical techniques in the proof of Proposition \ref{propReg}, we have that the associated global solution $\bU$ of \eqref{couple} satisfies
\begin{equation*}
\bU \in \bigl(\Lp{2}(0,\tau;\,\Hp{3}(\Omega))
\;\cap\; \Hp{2}(0,\tau;\,\Lp{2}(\Omega))\bigr)^5.
\end{equation*}
\begin{theorem}[Approximation by $\tanh$ networks]
\label{thm:approx}
Let the solution~$\bU$ of system~\ref{couple} corresponding to the initial data $\bU_{0}\in(H^{2}(\Omega))^{5}$ satisfying
\[
\bU \in \bigl(\Lp{2}(0,\tau;\,\Hp{3}(\Omega))
\;\cap\; \Hp{2}(0,\tau;\,\Lp{2}(\Omega))\bigr)^5=:\cX.
\]
Then for any $\varepsilon > 0$, there exists a fully connected $\tanh$ network $\bN_\theta : \mathbb{R}^3 \to \mathbb{R}^5$ with~$L$ layers of width~$n$ satisfying
\begin{equation}\label{eq:approx_qual}
\min_{\theta}\, \cL(\theta) \leq \varepsilon,
\end{equation}
provided that $n$ is chosen large enough. More precisely, for the continuous loss $\cR_{\mathrm{pde}} + \cR_{\mathrm{ic}} + \cR_{\mathrm{bc}}$\emph{:}
\begin{equation}\label{eq:approx_rate}
\min_{\theta}\, \bigl(\cR_{\mathrm{pde}}(\theta)
+ \cR_{\mathrm{ic}}(\theta) + \cR_{\mathrm{bc}}(\theta)\bigr)
\;\leq\;
C_{\mathrm{approx}}\;\frac{\ln^4(n)}{n^2}\;
\|\bU\|_{\cX}^2,
\end{equation}
where
\begin{equation*}\label{eq:Xnorm}
\|\bU\|_{\cX}^2
:= \lVert\bU\rVert^2_{(\Lp{2}(0,\tau;\,\Hp{3}(\Omega)))^5}
+ \lVert\bU\rVert^2_{(\Hp{2}(0,\tau;\,\Lp{2}(\Omega)))^5},
\end{equation*}
and $C_{\mathrm{approx}}$ depends on $\tau$, $|\Omega|$, and the system parameters $(D_T, D_B, D_O, D_I, D_S, L_F)$ but is independent of $\theta$ and~$n$.
\end{theorem}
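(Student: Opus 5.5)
The proof reduces the loss bound to a Sobolev approximation statement for $\tanh$ networks, in the spirit of \cite[Theorem 5.1]{de2021approximation} and \cite[Theorem 1]{girault2024approximation}. The residuals $\cR_{\varphi}$ are evaluated at $\bUhat_\theta$. Since $\bU$ solves \eqref{couple} exactly, we have $\cR_{\varphi}(\bU)=0$, the exact initial trace, and $\partial_{\mathbf n}\bU=0$ on $\Sigma$. Each continuous loss term can therefore be written as the difference between its value at $\bUhat_\theta$ and its value at $\bU$, so it suffices to bound $\bU-\bUhat_\theta$ in a norm controlling $\partial_t$, $\Delta$, traces at $t=0$, and normal derivatives on $\Sigma$.

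\textbf{Step 1 (Sobolev approximation).} First extend $\bU$ from $\mathcal Q$ to a box containing $\overline{\mathcal Q}$ with a bounded Stein-type extension operator, preserving the anisotropic regularity $\cX$. The isotropic space $H^3(\mathcal Q)$ is not implied by $\cX$, so I would try either to verify that the tensor-type argument of \cite{de2021approximation} works with the mixed norm, or to use the embedding $\cX\subset H^{2}$-type spaces in time/space together with the $m\ge K+1$ principle. The goal is a two-hidden-layer $\tanh$ network of width $n$ with
\[
\|\bU-\bUhat_\theta\|_{(H^{2}(\mathcal Q))^5}\le C\,\frac{\ln^2(n)}{n}\,\|\bU\|_{\cX},
\]
since the loss of one derivative from order $3$ to order $2$ in dimension $3$ gives the rate $n^{-(m-k)/d}\cdot\ln$-factors, which after the paper's convention corresponds to $\ln^2 n/n$. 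Squaring this gives the target $\ln^4(n)/n^2$. I would also record the $W^{1,\infty}$/$L^\infty$ bound of the network from the same theorem, since $\bUhat_\theta$ must stay in a ball of radius $M$ for the next step.

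\textbf{Step 2 (PDE residual).} Write
\[
\mathbf R_\theta = -\partial_t \mathbf e+\mathbf D\Delta\mathbf e+\mF(\bU)-\mF(\bUhat_\theta),
\]
with $\mathbf e$ as in \eqref{eq:error_def}. The first two terms are bounded in $L^2(\mathcal Q)$ by $\max(1,\|\mathbf D\|)\|\mathbf e\|_{H^2}$. The nonlinear term is handled by Lemma~\ref{lem:lipschitz}, applied pointwise in time and integrated, which gives $L_F\|\mathbf e\|_{L^2(\mathcal Q)}$. Lemma~\ref{lem:lipschitz} requires $\hat O\ge 0$ (or at least $K_H+\hat O$ bounded away from zero) together with the bound $M$. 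These follow from uniform closeness of $\bUhat_\theta$ to $\bU$ once $n$ is large; if necessary, $K_H/(K_H+O)$ can be replaced by a globally Lipschitz extension on $O\ge -K_H/2$. Hence $\cR_{\mathrm{pde}}\le C(1+\|\mathbf D\|^2+L_F^2)\|\mathbf e\|^2_{H^2}$.

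\textbf{Step 3 (IC and BC terms).} Use trace theorems. For the initial term, $\|\mathbf e(0)\|_{L^2(\Omega)}\le C\|\mathbf e\|_{H^1(\mathcal Q)}$. For the boundary term, $\|\partial_{\mathbf n}\mathbf e\|_{L^2(\Sigma)}\le C\|\nabla\mathbf e\|_{H^1(\mathcal Q)}\le C\|\mathbf e\|_{H^2}$, with constants depending on $\tau$ and $\Omega$. Summing the three contributions and inserting Step 1 gives \eqref{eq:approx_rate}. The qualitative bound \eqref{eq:approx_qual} then follows by taking $n$ large, since the weights $\lambda$ are fixed.

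\textbf{Main obstacle.} I expect the hardest step to be Step 1: obtaining an $H^2$ (rather than $H^1$) approximation rate from the anisotropic space $\cX$ while keeping explicit $\ln$-factors. The existing theorems are stated for isotropic $H^m$ on boxes, so the extension and the precise exponent bookkeeping are where the argument is most delicate. Ensuring the uniform bound $M$ on the network, which is needed for Lemma~\ref{lem:lipschitz}, is the second delicate point.
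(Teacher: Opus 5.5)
Your three steps are the paper's proof. The paper applies \cite[Theorem~1]{girault2024approximation} componentwise with $m=K+1=3$, $d=3$, and keeps the dominant $k=2$ factor $\ln^2(n)/n$. It then writes $\cR_\varphi=\partial_t e_\varphi-D_\varphi\Delta e_\varphi-[F_\varphi(\bU)-F_\varphi(\bUhat_\theta)]$, invokes Lemma~\ref{lem:lipschitz}, and handles the IC and BC terms by trace theorems.

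The difference is that you stop where the paper does not, and the obstacle you name is a genuine gap in both your proposal and the paper. The paper's Step~1 bounds the error by $|\varphi|_{\Wp{3}{2}(\mathcal Q)}$ and then silently replaces this by $\|\bU\|_{\cX}$. But $\cX\not\subset \Hp{3}(\mathcal Q)$: $u(t,x)=g(t)$ with $g\in \Hp{2}(0,\tau)\setminus \Hp{3}(0,\tau)$ lies in $\Lp{2}(0,\tau;\Hp{3}(\Omega))\cap\Hp{2}(0,\tau;\Lp{2}(\Omega))$ but not in $\Hp{3}(\mathcal Q)$. More generally, $\cX$ controls neither $\partial_t^3 u$ nor $\partial_t\partial_x^2 u$.

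Neither of your repairs closes this. The second one cannot work: $\cX\subset \Hp{2}(\mathcal Q)$ does hold, but with $m=k=2$ the cited estimates give no decay, since the $m\ge K+1$ requirement is exactly what $\cX$ fails to meet. The first one would be a new anisotropic approximation theorem for $\tanh$ networks, which is not in \cite{de2021approximation} or \cite{girault2024approximation}, and you give no argument for it. So under the hypothesis $\bU\in\cX$, estimate \eqref{eq:approx_rate} is not established.

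The direct fix is to assume $\bU\in(\Hp{3}(\mathcal Q))^5$ and put $\|\bU\|_{(\Hp{3}(\mathcal Q))^5}$ on the right-hand side. For a parabolic system this requires considerably smoother, compatible initial data than $(\Hp{2}(\Omega))^5$. Step~1 is then a direct application after your extension step.

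Two smaller points.

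First, say what $n$ is. Your own count gives $n^{-(m-k)/d}=n^{-1/3}$ if $n$ is the width. The rate $\ln^2(n)/n$ is correct only when $n\sim\delta^{-1}$ is the grid parameter of the cited construction, whose width grows like $\delta^{-3}$. That is the convention the paper's proof uses.

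Second, your treatment of the hypotheses of Lemma~\ref{lem:lipschitz} improves on the paper. The paper justifies $\|\bUhat_\theta\|_{L^\infty}\le M$ by the $\tanh$ activation and the initialization, and simply uses $\hat O\ge 0$. That gives no $\theta$-uniform bound, because the output layer is linear, and it says nothing about the approximating network.

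Make your uniform-closeness claim precise with $\Hp{2}(\mathcal Q)\hookrightarrow C(\overline{\mathcal Q})$, which holds for $\mathcal Q\subset\mathbb{R}^3$. This gives $\|\be\|_{L^\infty(\mathcal Q)}\le C\|\be\|_{(\Hp{2}(\mathcal Q))^5}\to 0$. Hence for $n\ge n_0$ you get $|\bUhat_\theta|\le M+1$ and $K_H+\hat O\ge K_H/2$ (with $K_H>0$), and the lemma's proof goes through with $K_H^2$ replaced by $K_H^2/2$.

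Discard the alternative of replacing $K_H/(K_H+O)$ by a globally Lipschitz extension. That changes $\cR_B$, and hence the very loss you are bounding.
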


\begin{proof}
The proof combines the Sobolev approximation theory for $\tanh$ networks established in~\cite[Theorem~1 and Corollary~2]{girault2024approximation} with the structure of the residual operators~\ref{eq:resT}-\ref{eq:resS}.

\smallskip
\noindent\textit{Step~1 - Component-wise approximation.}
As explained at the beginning of the subsection, the residual operators $\cR_\varphi$ defined in~\ref{eq:resT}-\ref{eq:resS} use at most $K=2$ derivatives (the Laplacian $\Delta$ represented by second-order spatial derivatives and $\partial_t$ represented by a first-order temporal derivative). Therefore, to bound $\cR_{\mathrm{pde}}$ in $\Lp{2}(Q)$, we must control the approximation error in the Sobolev norm $\Wp{K}{2}(Q)$. To do this, we apply a tighter approximation bound for Sobolev-regular functions (see~\cite[Theorem~1]{girault2024approximation} with $m=K+1=3$) component-wise to each variable $\varphi \in \{T, B, O, I, S\}$. Then, there exists a $\tanh$ network $\hat{\varphi}_n$ with width parameter~$n$ such that for all $0 \leq k \leq 2$:
\begin{equation}\label{eq:comp_approx}
\hspace{-0.17cm}\|\varphi - \hat{\varphi}_n\|_{\Wp{k}{2}(Q)}
\;\leq\;
C_{k,3,3}^{[\mathrm{approx}]}\;\Delta_{k,3,3}^{[\tanh]}(\delta;\alpha_{\epsilon,\delta})\;
|\varphi|_{\Wp{3}{2}(Q)},
\end{equation}
where the asymptotic equivalence term
\begin{gather}\label{eq:Delta_asymp}
\Delta_{k,3,3}^{[\tanh]}(\delta;\alpha)
\;\sim\;
\left(\frac{m + d + K}{2}\right)^k
\ln^k\!\left(\frac{1}{\delta}\right)\,\delta^{m-k},
\end{gather}
with $m=3$, $d= 3$, $K=2$, and $n \sim \delta^{-1}$. In our situation, the dominant asymptotic equivalence term is obtained for $(k=K=2)$. Then the terms $C_{2,3,3}^{[\mathrm{approx}]}$,$\Delta_{2,3,3}^{[\tanh]}(\delta)$ are given by
\begin{gather}\label{asym}
\begin{split}
  \Delta_{2,3,3}^{[\tanh]}(\delta)\;\sim\; 16\;\ln^2\!\left(\frac{1}{\delta}\right)\,\delta\;=\; 16\;\frac{\ln^2(n)}{n},\\ C^{[\mathrm{approx}]}_{2,3,3}\sim2\times3^{9/2}(1+\zeta)\sqrt{6}\left(\frac{3\sqrt{3}}{\pi}\right)^{3}.
  \end{split}
\end{gather}
\noindent\textit{Step~2 --- PDE residual bound.}
Since $\bU$ satisfies~\eqref{couple}, the residual for each component takes the form
\begin{gather*}
    \cR_\varphi(\bN_\theta)
= \partial_t e_\varphi - D_\varphi\,\Delta e_\varphi
- \bigl[F_\varphi(\bU) - F_\varphi(\bUhat_\theta)\bigr],
\end{gather*}
where $e_\varphi = \varphi - \hat{\varphi}_n$. 
Then we have:
\begin{gather*}
\begin{split}
 \|\cR_\varphi\|_{\Lp{2}(Q)}
\;\leq\;
\|\partial_t e_\varphi&\|_{\Lp{2}(Q)}
+ D_\varphi\,\|\Delta e_\varphi\|_{\Lp{2}(Q)}\\
&+ \|F_\varphi(\bU) - F_\varphi(\bUhat_\theta)\|_{\Lp{2}(Q)}. 
\end{split}
\end{gather*}
The first two terms are controlled by the $\Wp{1}{2}(Q)$ and $\Wp{2}{2}(Q)$ norm of approximation errors from Step~1. The nonlinear term is bounded by the Lipschitz estimate of Lemma~\ref{lem:lipschitz}: 
\begin{gather*}
\|F_\varphi(\bU) - F_\varphi(\bUhat_\theta)\|_{\Lp{2}(Q)}
\;\leq\; L_F\,\|\be\|_{(\Lp{2}(Q))^5}.
\end{gather*}
Therefore:
\begin{gather*}
\begin{split}
\cR_{\mathrm{pde}}(\theta)
&= \int_Q \sum_{\varphi \in \mathcal{S}} |\cR_\varphi|^2 \dx\,\dt\\
&\;\leq\;
C\sum_{\varphi \in \mathcal{S}} \sum_{k=0}^{2}
\|\varphi - \hat{\varphi}_n\|^2_{\Wp{k}{2}(Q)}.
\end{split}
\end{gather*}
Applying~\eqref{eq:comp_approx} for $k=0,1,2$ and using~\eqref{eq:Delta_asymp}, we obtain
\begin{equation}\label{eq:Rpde_bound}
\cR_{\mathrm{pde}}(\theta)
\;\leq\;
C\;\frac{\ln^4(n)}{n^2}\;
\sum_{\varphi \in \mathcal{S}} \lVert\varphi\lVert^2_{\Wp{3}{2}(Q)}.
\end{equation}

\smallskip
\noindent\textit{Step~3 --- IC and BC terms.} By the trace Theorem, the initial condition error satisfies:
\begin{gather*}
    \begin{split}
\cR_{\mathrm{ic}}(\theta)
&\leq
C_{\mathrm{}} \sum_{\varphi \in \mathcal{S}}
\|\varphi - \hat{\varphi}_n\|^2_{\Wp{1}{2}(Q)}\\
&\leq C\;\frac{\ln^2(n)}{n^4}\;
\lVert\bU\lVert^2_{(\Wp{3}{2}(Q))^5},  
    \end{split}
\end{gather*}
which is of higher order than $\cR_{\mathrm{pde}}$. Similarly, $\cR_{\mathrm{bc}}$ is controlled by the boundary trace of $\|\nabla\hat{\varphi}_n\|_{\Lp{2}(\partial\Omega)}$, bounded via the $\Wp{2}{2}(\mathcal{Q})$ approximation error.
Summing the three contributions yields~\eqref{eq:approx_rate}.
\end{proof}
\subsubsection{Generalization: From Empirical to Continuous Loss}
\label{sec:generalisation}

The PINN is trained by minimising the empirical (Monte Carlo) loss
$\mathcal{L}(\theta)$ defined in~\eqref{eq:total_loss}
over finite collocation sets.
We now quantify how closely $\mathcal{L}(\theta)$
approximates the true continuous functional $\mathcal{E}(\theta)$.

Let the PDE domain $Q=(0,\tau)\times\Omega$ have measure
$|Q|=\tau|\Omega|$, and define the rescaled empirical estimator
\begin{equation}
  \hat{\mathcal{E}}_{\mathrm{pde}}(\theta)
  = \frac{|Q|}{N_{\mathrm{pde}}}
    \sum_{k=1}^{N_{\mathrm{pde}}}
    \sum_{\varphi\in\mathcal{V}}
    \bigl|\mathcal{R}_{\varphi}(x_{k},t_{k};\theta)\bigr|^{2},
  \label{eq:empirical_pde}
\end{equation}
and similarly $\hat{\mathcal{E}}_{\mathrm{ic}}$, $\hat{\mathcal{E}}_{\mathrm{bc}}$
with their respective domains.

\begin{lemma}[Monte Carlo generalization bound]
\label{lem:mc}
Assume the residual functions
$g_{\varphi}(z;\theta):=|\mathcal{R}_{\varphi}(z;\theta)|^{2}$
are uniformly bounded:
$\sup_{z\in Q,\theta}\sum_{\varphi}|g_{\varphi}(z;\theta)|\leq p<\infty$.
Then for any $\delta\in(0,1)$, with probability at least $1-\delta$ over
the random draw of $N_{\mathrm{pde}}$ i.i.d.\ uniform points in $Q$:
\begin{equation}
  \bigl|\hat{\mathcal{E}}_{\mathrm{pde}}(\theta)
       - \mathcal{E}_{\mathrm{pde}}(\theta)\bigr|
  \leq
  p\,|Q|\sqrt{\frac{2\ln(2/\delta)}{N_{\mathrm{pde}}}}.
  \label{eq:mc_bound}
\end{equation}
The same bound holds for
$|\hat{\mathcal{E}}_{\mathrm{ic}}-\mathcal{E}_{\mathrm{ic}}|$
with $N_{\mathrm{ic}}$ and $|\Omega|$, and for
$|\hat{\mathcal{E}}_{\mathrm{bc}}-\mathcal{E}_{\mathrm{bc}}|$
with $N_{\mathrm{bc}}$ and $|\Sigma|=\tau|\partial\Omega|$.
\end{lemma}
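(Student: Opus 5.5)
The plan is to treat this as a direct application of Hoeffding's inequality to bounded i.i.d. random variables, with $\theta$ held fixed. So the bound is pointwise in $\theta$, not uniform over the parameter space.

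\emph{Setup.} Let $z_{1},\dots,z_{N_{\mathrm{pde}}}$ be i.i.d. uniform on $Q$, with density $1/|Q|$. Define
\[
X_{k}:=|Q|\sum_{\varphi\in\mathcal{V}}g_{\varphi}(z_{k};\theta).
\]
Then $\hat{\mathcal{E}}_{\mathrm{pde}}(\theta)=\frac{1}{N_{\mathrm{pde}}}\sum_{k}X_{k}$. Moreover,
\[
\mathbb{E}[X_{k}]=|Q|\int_{Q}\sum_{\varphi}g_{\varphi}(z;\theta)\,\frac{\mathrm{d}z}{|Q|}=\mathcal{E}_{\mathrm{pde}}(\theta)
\]
by \eqref{eq:eps_pde}. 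So the empirical loss is an unbiased Monte Carlo estimator of the continuous residual.

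\emph{Concentration.} Each $g_{\varphi}\ge 0$, and by assumption $\sum_{\varphi}g_{\varphi}\le p$. Hence $X_{k}\in[0,p|Q|]$ almost surely. Hoeffding's inequality for the mean of $N:=N_{\mathrm{pde}}$ independent variables with range $p|Q|$ gives
\[
\mathbb{P}\bigl(|\hat{\mathcal{E}}_{\mathrm{pde}}-\mathcal{E}_{\mathrm{pde}}|\ge s\bigr)\le 2\exp\!\bigl(-2Ns^{2}/(p|Q|)^{2}\bigr).
\]
Setting the right-hand side equal to $\delta$ yields
\[
s=p|Q|\sqrt{\ln(2/\delta)/(2N)}.
\]
This is smaller than the stated $p|Q|\sqrt{2\ln(2/\delta)/N}$, so \eqref{eq:mc_bound} follows a fortiori. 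The extra factor absorbs, for example, the cruder version of Hoeffding's inequality for variables in $[-a,a]$.

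\emph{Other terms.} For the initial-condition term, repeat the argument with $z_{k}$ uniform on $\Omega$ and $X_{k}=|\Omega|\sum_{\varphi}|\hat\varphi(z_{k},0)-\varphi_{0}(z_{k})|^{2}$. For the boundary term, take $z_{k}$ uniform on $\Sigma$ with respect to $\mathrm{d}\sigma\,\mathrm{d}t$, so $|\Sigma|=\tau|\partial\Omega|$, and integrand $|\nabla\hat\varphi\cdot\mathbf{n}|^{2}$. In both cases the integrands must be bounded by $p$, which I read as part of the assumption.

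\emph{Main obstacle.} The delicate point is interpretation rather than computation. The sampling used in Section~\ref{sec:coll} draws points uniformly on each side of $\partial\Omega$. This coincides with the uniform surface measure only if the sides are weighted by their length; for a square they have equal length, so it holds. The boundary loss \eqref{eq:Lbc} uses the full gradient, which dominates the normal derivative in \eqref{eq:eps_bc}. Finally, because $\theta$ is trained on the same points, the statement must be understood for $\theta$ fixed independently of the sample. A uniform-in-$\theta$ version would require a covering-number argument over a bounded parameter set, and I would not attempt it here.
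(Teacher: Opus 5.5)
Your proof is correct and follows essentially the paper's argument. Like the paper, you write $\hat{\mathcal{E}}_{\mathrm{pde}}(\theta)$ as $|Q|$ times the sample mean of the $[0,p]$-valued variable $\sum_{\varphi}g_{\varphi}(z_k;\theta)$, apply Hoeffding's inequality, and observe that the resulting deviation $p|Q|\sqrt{\ln(2/\delta)/(2N_{\mathrm{pde}})}$ already lies below the stated bound (the paper's own Hoeffding step gives this same sharper constant); your caveat that $\theta$ must be fixed independently of the sample is well taken, since the paper later applies the lemma at the trained $\theta^{*}$ without any uniform-in-$\theta$ argument.
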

\begin{proof}
The rescaled estimator~\eqref{eq:empirical_pde} is
$\hat{\mathcal{E}}_{\mathrm{pde}}(\theta)
= |Q|\cdot\overline{g}_{N}$,
where $\overline{g}_{N}=N_{\mathrm{pde}}^{-1}\sum_{k}g(z_{k};\theta)$
is the sample mean of the bounded random variable
$g(z;\theta)=\sum_{\varphi}g_{\varphi}(z;\theta)\in[0,p]$.
The true functional is
$\mathcal{E}_{\mathrm{pde}}(\theta)=|Q|\cdot\mathbb{E}[g(z;\theta)]$. By Hoeffding's inequality applied to $N_{\mathrm{pde}}$ i.i.d.\ samples
from $[0,p]$:
\begin{equation*}
  \mathbb{P}\!\left(
  \bigl|\overline{g}_{N}-\mathbb{E}[g]\bigr|
  \geq\sqrt{\frac{p^{2}\ln(2/\delta)}{2N_{\mathrm{pde}}}}
  \right)
  \leq \delta.
\end{equation*}
Multiplying by $|Q|$ gives~\eqref{eq:mc_bound}.
\end{proof}

\begin{remark}[Convergence rate]
The generalization error decreases as
$\mathcal{O}(N^{-1/2})$ with respect to the number of
collocation points, which is the standard Monte Carlo rate. Higher-order quasi-Monte Carlo sampling (e.g., Sobol sequences) can improve this to $\mathcal{O}((\log N)^{d}/N)$ (see ~\cite{Caflisch1998}).
\end{remark}
\subsubsection{ Convergence Result}
\label{sec:main_thm}
Combining the conditional stability estimate (Theorem~\ref{thm:stability}),
the approximation theorem (Theorem~\ref{thm:approx}), and the Monte Carlo
generalization bound (Lemma~\ref{lem:mc}), we arrive at the following
convergence result.
\begin{theorem}[PINN Convergence for System~\eqref{couple}]
\label{thm:main}
 Let $\bUhat_{\theta^*}$ be the PINN approximation obtained by minimising $\hat{\cL}(\theta)$ over the class of $\tanh$ networks with $L$ layers of width~$n$. Then for any $\delta \in (0,1)$, with probability at least $1-\delta$ over the random draw of collocation points\emph{:}
\begin{align}\label{eq:convergence}
\hspace{-1cm}\sup_{t \in [0,\tau]}
\|\bU(t) - \bUhat_{\theta^*}(t)\|^2_{(\Lp{2}(\Omega))^5}
\;&\!\!\leq\;
\nonumber\!\!\bar{\kappa}\biggl[
C_{\mathrm{approx}}\;\frac{\ln^4(n)}{n^2}\;\|\bU\|_{\cX}^2
\;\\
&+\; \cG(N_{\mathrm{pde}}, N_{\mathrm{ic}}, N_{\mathrm{bc}}, \delta)
\biggr],
\end{align}
where $\bar{\kappa} = \bar{\kappa}(\tau, M, L_F, d_0) > 0$ is the stability constant from Theorem~\ref{thm:stability}, independent of~$\theta$, and the generalization term is
\begin{align}\label{eq:gen_term}
\cG
= &\lambda_{\mathrm{pde}}\,B\,|Q|
\sqrt{\frac{2\ln(2/\delta)}{N_{\mathrm{pde}}}}
\;+\; \lambda_{\mathrm{ic}}\,B\,|\Omega|
\sqrt{\frac{2\ln(2/\delta)}{N_{\mathrm{ic}}}}
\;\nonumber\\&+\; \lambda_{\mathrm{bc}}\,B\,|\Sigma|
\sqrt{\frac{2\ln(2/\delta)}{N_{\mathrm{bc}}}}.
\end{align}
In particular\emph{:}
\begin{itemize}
\item As $n \to \infty$ (wider network), the approximation term vanishes as $O(n^{-2}\ln^4(n))$.
\item As $N_{\mathrm{pde}}, N_{\mathrm{ic}}, N_{\mathrm{bc}} \to \infty$, the generalization term $\cG \to 0$ at rate $O(N^{-1/2})$.
\item The combined error decays as
\begin{equation}\label{eq:combined_rate}
O\!\left(\frac{\ln^4(n)}{n^2} + N^{-1/2}\right)
\qquad \text{as } n, N \to \infty,
\end{equation}
where $N = \min(N_{\mathrm{pde}}, N_{\mathrm{ic}}, N_{\mathrm{bc}})$.
\end{itemize}
\end{theorem}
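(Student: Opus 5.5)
The plan is to chain the three ingredients already established: the stability estimate of Theorem~\ref{thm:stability}, the approximation bound of Theorem~\ref{thm:approx}, and the Monte Carlo bound of Lemma~\ref{lem:mc}.

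Throughout, the optimisation error (III) is taken to be zero. Concretely, $\theta^*$ is an exact minimiser of the empirical loss $\hat{\cL}$ over the class of $\tanh$ networks with $L$ layers and width $n$. We also take the residuals to be uniformly bounded by a constant $B$, which plays the role of $p$ in Lemma~\ref{lem:mc}. This is consistent with the boundedness of $\bUhat_\theta$ used in Lemma~\ref{lem:lipschitz}.

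\textbf{Step 1: stability.} Apply Theorem~\ref{thm:stability} to $\bUhat_{\theta^*}$. Dropping the gradient term gives
\[
\sup_{t}\|\mathbf{e}(t)\|^2_{(L^2(\Omega))^5}\le \bar\kappa\,\mathcal{E}(\theta^*),
\]
where $\mathcal{E}$ is the $\lambda$-weighted continuous loss. This is legitimate because the constant in the theorem already absorbs the weights $\lambda_{\mathrm{pde}},\lambda_{\mathrm{ic}},\lambda_{\mathrm{bc}}$.

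\textbf{Step 2: empirical versus continuous loss.} Split the continuous loss as
\[
\mathcal{E}(\theta^*)\le \hat{\mathcal{E}}(\theta^*)+|\mathcal{E}(\theta^*)-\hat{\mathcal{E}}(\theta^*)|.
\]

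\textbf{Step 3: bounding the empirical loss by the approximation rate.} Let $\theta_n$ be the network supplied by Theorem~\ref{thm:approx}. Minimality of $\theta^*$ gives
\[
\hat{\mathcal{E}}(\theta^*)\le\hat{\mathcal{E}}(\theta_n)\le \mathcal{E}(\theta_n)+|\hat{\mathcal{E}}(\theta_n)-\mathcal{E}(\theta_n)|,
\]
and Theorem~\ref{thm:approx} bounds $\mathcal{E}(\theta_n)$ by $C_{\mathrm{approx}}\ln^4(n)n^{-2}\|\bU\|^2_{\cX}$, up to the weights.

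\textbf{Step 4: generalization terms.} Each of the two deviation terms is bounded by Lemma~\ref{lem:mc}, applied separately to the pde, ic and bc pieces with their respective domains $|Q|$, $|\Omega|$, $|\Sigma|$. This produces the term $\cG$ in \eqref{eq:gen_term}, up to a factor $2$, which can be absorbed into $\bar\kappa$. The probability budget is handled by a union bound: split $\delta$ over the three sets and the two parameters $\theta^*,\theta_n$, and absorb the resulting change of logarithmic constants.

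\textbf{Step 5: conclusion.} Combining Steps 1 to 4 yields \eqref{eq:convergence}. The asymptotic statements then follow by substituting $N=\min(N_{\mathrm{pde}},N_{\mathrm{ic}},N_{\mathrm{bc}})$ into \eqref{eq:gen_term}.

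\textbf{Main obstacle.} The hard part is applying Lemma~\ref{lem:mc} at $\theta^*$. This parameter depends on the same collocation sample, so pointwise Hoeffding does not directly apply. My first attempt is to use the hypothesis of Lemma~\ref{lem:mc} in its stated form, where the bound is uniform over $\theta$, and read it as a uniform deviation bound. Making that rigorous would require a covering-number argument over a bounded parameter set, using Lipschitz dependence of the residuals on $\theta$. That argument would inflate $\ln(2/\delta)$ by a term of order $\dim\theta\cdot\ln N$, which I would absorb into the constant for fixed architecture.

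Two routes avoid this difficulty. One is to use fresh validation points, consistent with Algorithm~\ref{alg:training}. The other is to state the theorem conditionally on the deviation bound holding at $\theta^*$. At $\theta_n$, which is deterministic, Hoeffding applies directly.
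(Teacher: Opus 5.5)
Your Steps 1--3 follow the paper's proof exactly. You bound the error by the continuous loss via stability, split $\mathcal{E}(\theta^*)\le\hat{\mathcal{E}}(\theta^*)+|\mathcal{E}(\theta^*)-\hat{\mathcal{E}}(\theta^*)|$, and use minimality of $\theta^*$ against the network of Theorem~\ref{thm:approx}. That network does not depend on the sample, so Lemma~\ref{lem:mc} legitimately applies to it. (Your factor $2$ and union-bound bookkeeping is more careful than the paper's, which drops both silently.) The paper disposes of the remaining term $|\mathcal{E}(\theta^*)-\hat{\mathcal{E}}(\theta^*)|$ in one sentence, by applying the pointwise Monte Carlo bound at $\theta^*$ (it cites Lemma~\ref{lem:lipschitz}, evidently meaning Lemma~\ref{lem:mc}). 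That is precisely the step you flag, and you are right that it fails as written. Hoeffding controls the deviation only for a parameter fixed before the draw. The uniform-in-$\theta$ bound $p$ in Lemma~\ref{lem:mc} only fixes the range of the summands; it does not turn the lemma into a bound on $\sup_\theta|\mathcal{E}(\theta)-\hat{\mathcal{E}}(\theta)|$. So the gap is real, and the paper's proof has it too.

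None of your remedies, however, proves the theorem as stated.

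\textbf{Covering.} A covering argument over a parameter ball of radius $R$ replaces $\ln(2/\delta)$ by roughly $\ln(2/\delta)+P\ln(R\Lambda\sqrt{N})$. Here $P=\dim\theta\sim Ln^{2}$, and $\Lambda$ is a Lipschitz constant of $\theta\mapsto|\mathcal{R}_\varphi(z;\theta)|^{2}$; it involves second derivatives of the network and grows with $R$ and $n$. This cannot be absorbed into a constant \emph{for fixed architecture}, because the theorem lets $n\to\infty$, fixes $\bar\kappa=\bar\kappa(\tau,M,L_F,d_0)$, and has only $\ln(2/\delta)$ inside $\mathcal{G}$. What you would actually obtain is a generalization term of order $n\sqrt{L\ln N/N}$ up to logarithms. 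The combined rate then looks like $\ln^4(n)/n^2+n\sqrt{\ln N/N}$, which requires $N\gg n^2\ln N$ rather than independent limits. The argument also requires restricting the minimisation to that ball, which must contain the approximating networks; their weights typically grow with $n$.

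\textbf{Validation points.} This route yields only the a posteriori bound $\mathcal{E}(\theta^*)\le\hat{\mathcal{E}}_{\mathrm{val}}(\theta^*)+\mathcal{G}$. Minimality of $\theta^*$ refers to the training loss, so relating $\hat{\mathcal{E}}_{\mathrm{val}}(\theta^*)$ to the approximating network is the same uniform-deviation problem again. Moreover, in Algorithm~\ref{alg:training} the validation set is used to select $\theta^*$, so it is not fresh.

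\textbf{Conditional statement.} This is simply a different theorem.

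Under the paper's hypotheses nothing controls the capacity of the class as $n$ grows. Nothing prevents a very wide network from nearly annihilating the residuals at the $N$ sample points while staying far from $\bU$. A correct version of \eqref{eq:convergence} must therefore carry an explicit capacity term coupling $n$ and $N$, or be stated conditionally on the deviation bound at $\theta^*$.
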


\begin{proof}
From Theorem~\ref{thm:stability} we have:
\[
\sup_{t \in [0,\tau]} \|\be(t)\|^2_{(\Lp{2}(\Omega))^5}
\;\leq\; \bar{\kappa}\,\cL(\theta^*).
\]
Decompose the continuous loss:
\[
\cL(\theta^*) =
\underbrace{\bigl(\cL(\theta^*) - \hat{\cL}(\theta^*)\bigr)}_{\text{generalisation error}}
+ \underbrace{\hat{\cL}(\theta^*)}_{\text{empirical loss}}.
\]

\smallskip
\noindent\textit{Empirical loss term.}
Since $\theta^*$ minimises $\hat{\cL}$, and the approximating network $\theta_{\mathrm{approx}}$ from Theorem~\ref{thm:approx} achieves $\cL(\theta_{\mathrm{approx}}) \leq C_{\mathrm{approx}}\,n^{-2}\ln^4(n)\,\|\bU\|^2_{\cX}$:
\begin{align*}
\hat{\cL}(\theta^*)
&\leq \hat{\cL}(\theta_{\mathrm{approx}}) \\
&\leq \cL(\theta_{\mathrm{approx}})
+ \bigl|\hat{\cL}(\theta_{\mathrm{approx}}) - \cL(\theta_{\mathrm{approx}})\bigr| \\
&\leq C_{\mathrm{approx}}\;\frac{\ln^4(n)}{n^2}\;\|\bU\|^2_{\cX}
\;+\; \cG.
\end{align*}

\smallskip
\noindent\textit{Generalisation error term.}
Applying Lemma~\ref{lem:lipschitz} component-wise to $\hat{\cL}(\theta^*) - \cL(\theta^*)$ and summing the bounds for the PDE, IC, and BC contributions gives $|\cL(\theta^*) - \hat{\cL}(\theta^*)| \leq \cG$ with probability at least $1 - \delta$.

Combining the two bounds yields~\eqref{eq:convergence}.
\end{proof}
Using the left-hand side of the estimate \eqref{eq:gronwall_out} in Theorem~\ref{thm:stability}, we also obtain convergence in the energy norm given by the following
\begin{corollary}
\label{cor:energy}
Under the same assumptions of Theorem~\ref{thm:main}\emph{:}
\begin{equation}\label{eq:energy_conv}
d_0 \int_0^\tau \|\nabla\be(s)\|^2_{(\Lp{2}(\Omega))^5}\,\ds
\;\leq\;
\bar{\kappa}\biggl[
C_{\mathrm{approx}}\;\frac{\ln^4(n)}{n^2}\;\|\bU\|^2_{\cX}
\;+\; \cG
\biggr].
\end{equation}
\end{corollary}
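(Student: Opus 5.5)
The corollary should follow from the energy estimate already contained in Theorem~\ref{thm:stability}, with no new analysis. Inequality \eqref{eq:gronwall_out} bounds the \emph{sum} of $\|\mathbf{e}(t)\|^2_{(L^2(\Omega))^5}$ and $d_0\int_0^t\|\nabla\mathbf{e}\|^2_{(L^2(\Omega))^5}\,\mathrm{d}s$ by $e^{\Lambda\tau}$ times the combination of $\mathcal{E}_{\mathrm{ic}}$, $\mathcal{E}_{\mathrm{pde}}$ and $\mathcal{E}_{\mathrm{bc}}$. In Theorem~\ref{thm:main} only the first summand was kept. Here I would drop the nonnegative term $\|\mathbf{e}(t)\|^2$ instead, evaluate at $t=\tau$, and read off
\[
d_0\int_0^\tau\|\nabla\mathbf{e}(s)\|^2_{(L^2(\Omega))^5}\,\mathrm{d}s\le \bar\kappa\,\mathcal{E}(\theta^*),
\]
with the same constant $\bar\kappa=\mathcal{C}$ as in \eqref{eq:stability}.

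The second step repeats the decomposition in the proof of Theorem~\ref{thm:main} verbatim. Write $\mathcal{E}(\theta^*)$ as the generalization gap $\mathcal{E}(\theta^*)-\hat{\mathcal{L}}(\theta^*)$ plus the empirical loss $\hat{\mathcal{L}}(\theta^*)$. Bound the empirical loss by comparison with the approximating network $\theta_{\mathrm{approx}}$ of Theorem~\ref{thm:approx}, using the minimality of $\theta^*$. Bound both generalization gaps, at $\theta^*$ and at $\theta_{\mathrm{approx}}$, by $\mathcal{G}$ via Lemma~\ref{lem:mc}, applied to the PDE, IC and BC pieces separately with their respective domain measures $|Q|$, $|\Omega|$, $|\Sigma|$ and weights $\lambda_{\mathrm{pde}},\lambda_{\mathrm{ic}},\lambda_{\mathrm{bc}}$. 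This yields $\mathcal{E}(\theta^*)\le C_{\mathrm{approx}}n^{-2}\ln^4(n)\|\mathbf{U}\|^2_{\mathcal{X}}+\mathcal{G}$ on the event of probability at least $1-\delta$. Inserting this into the first step gives \eqref{eq:energy_conv}.

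The only point needing care is the probabilistic bookkeeping together with the hypotheses of Lemma~\ref{lem:mc}. The concentration bound is pointwise in $\theta$, but $\theta^*$ depends on the sample. So I would either invoke the bound exactly as in Theorem~\ref{thm:main}, inheriting its convention and possibly absorbing a factor $2$ into $\mathcal{G}$, or note that a uniform-in-$\theta$ version is implicitly assumed through the uniform bound $p$ (denoted $B$ in \eqref{eq:gen_term}). I would also check that the weighted empirical and continuous losses are compatible with the unweighted $\mathcal{E}$ used in \eqref{eq:stability}; the factors $\lambda$ are already absorbed into $\mathcal{C}$ in the proof of Theorem~\ref{thm:stability}. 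This compatibility check is the main (minor) obstacle; everything else is a direct rereading of existing estimates.
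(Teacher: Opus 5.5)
Your proposal is correct and follows the paper's own argument. The paper obtains the corollary simply by reading off the gradient term on the left-hand side of \eqref{eq:gronwall_out} (equivalently \eqref{eq:stability}) and inserting the same bound on the loss at $\theta^*$ that is used in the proof of Theorem~\ref{thm:main}.

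The caveats you raise are inherited from Theorem~\ref{thm:main} rather than new to the corollary: the factor $2$ from bounding two generalization gaps, the pointwise-in-$\theta$ use of Lemma~\ref{lem:mc} at the sample-dependent $\theta^*$, and the weighted versus unweighted loss. Note, however, that a uniform bound $p$ on the residuals does not by itself make Hoeffding's inequality uniform in $\theta$, so only your first option (inheriting the theorem's convention) actually matches what the paper does.
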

\begin{remark}[Positivity preservation]
Theorem~\ref{wp} guarantees $\bU \geq 0$ component-wise for non-negative initial data. The network output $\bUhat_\theta$ does not automatically inherit this property. However, Theorem~\ref{thm:main} implies that when $n\to\infty$ we have $\|\bUhat_{\theta^*} - \bU\|_{\Lp{\infty}(0,\tau; (\Lp{2}(\Omega))^{5})} \to 0$, so the negative part norm $\|\bUhat_{\theta^*}^-\|_{(\Lp{2}(\Omega))^{5}} \leq \|\be\|_{(\Lp{2}(\Omega))^{5}} \to 0$ as well when $n\to\infty$.
\end{remark}

\section{Numerical Results}\label{sec:num}
This section tests the proposed PINN framework with thorough numerical experiments on the five-species tumor-bacteria system~\eqref{couple} in a domain $\Omega\in\mathbb{R}^{2}$ representing the tissue. We show that the trained network effectively captures the spatiotemporal dynamics introduced by the mathematical analysis in previous sections. We also confirm the four-phase therapy progression mentioned in the abstract: initial tumor growth, hypoxia bacterial activation, quorum-sensing mediated tumor suppression, and coexistence. For the code and other details, please refer to our repository\footnote{https://github.com/FARKANE/-Tumor-Bacterial-Therapy-PINNs}
\subsection{Implementation Details}
The  neural network architecture  is employed  a fully connected feedforward network with architecture 

\hspace*{-0.51cm}$[3,64,64,64,64,5]$. This architecture is comprising four hidden layers of 64 neurons each with tanh activation, and a linear output layer producing the five state variables $\hat{\mathbf{U}} = (\hat{T}, \hat{B}, \hat{O}, \hat{I}, \hat{S})^\top$. The network has 21,829 trainable parameters and processes spatio-temporal coordinates $(x, y, t) \in [0, 6] \times [0, 6] \times [0, 30]$ as input.

For the collocation points Training data consists of $N_{\text{pde}} = 10{,}000$ interior points sampled uniformly over $Q = (0, 30) \times \Omega$, $N_{\text{ic}} = 600$ initial condition points at $t = 0$, and $N_{\text{bc}} = 400$ boundary points on $\partial\Omega \times (0, 30)$. The full dataset is partitioned $60\%$,20\%, $20\%$ into training, validation, and test subsets following standard practice.

\begin{remark}
   In the loss functions,  we set $\lambda_{\text{ic}} = 50.0$ to enforce initial condition fidelity, while $\lambda_{\text{pde}} = 1.0$ and $\lambda_{\text{bc}} = 1.0$. This weighting prevents collapse to the trivial solution $\mathbf{U} \equiv 0$. It also makes sure the network follows the specified Gaussian tumor seed and bacterial inoculum.
\end{remark}

Neural network parameters are optimised by the Adam algorithm~\cite{adam2014method}, with an initial learning rate of $\eta_0 = 10^{-3}$. \textit{ReduceLROnPlateau} decays this rate with a factor of $0.5$ and patience of $1000$ epochs. Overfitting can be avoided by stopping early and being patient for $P = 2000$ epochs. The size of the mini-batch is $512$. All calculations are carried out on the Tesla T4 15 Go using PyTorch 2.0.
Early stopping with patience $P$ epochs prevents overfitting.
The full procedure is summarised in Algorithm~\ref{alg:training}.

All parameters entering system~\eqref{couple} are calibrated from peer-reviewed
experimental and computational literature.
Table~\ref{tab:params} provides a complete listing.
\subsection{Training}
\begin{figure}
    \centering
\includegraphics[width=1\linewidth]{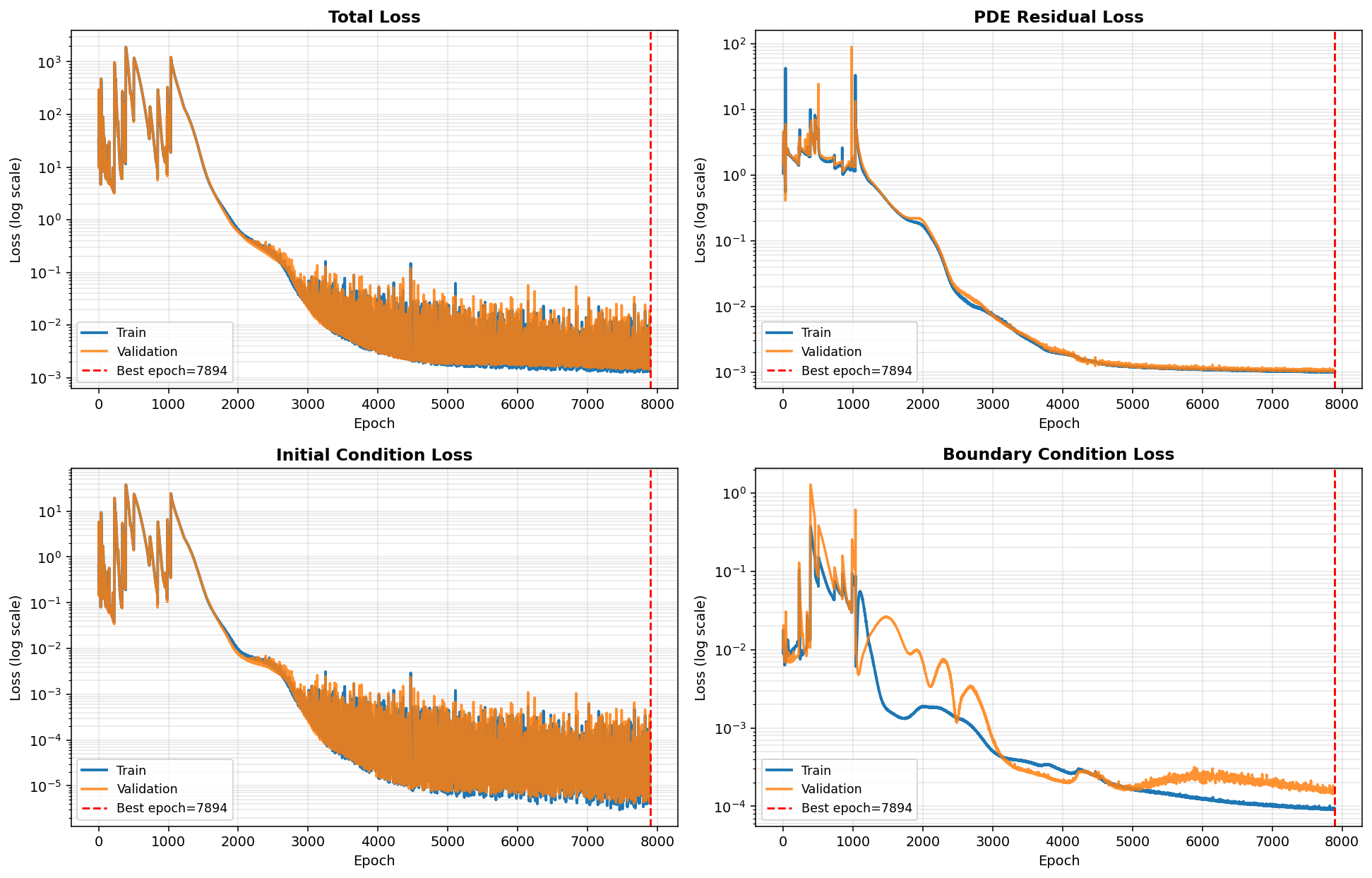}
    \caption{Training and validation loss evolution over 8000 epochs.}
    \label{fig:train}
\end{figure}
The training history in Figure~\ref{fig:train} demonstrates successful PINN optimization across all loss components. The network converges to a best validation loss of $\mathcal{L}_{\text{val}}^{\text{best}} \approx 1.2 \times 10^{-3}$ at epoch 7894, demonstrating stable optimization without overfitting. The Total composite loss decreases from $\sim 10^3$ to $\mathcal{O}(10^{-3})$ with training (blue) and validation (orange) curves tracking closely, indicating good generalization. Moreover, PDE residual loss exhibits monotonic decay from $\sim 10^2$ to $1.1 \times 10^{-3}$, confirming that the network  learns to satisfy the five coupled reaction-diffusion equations. Additionally,  initial condition loss drops sharply in the first 2000 epochs from $\sim 10^1$ to $\mathcal{O}(10^{-5})$ due to the high weighting $\lambda_{\text{ic}} = 50.0$, confirming strong adherence to the prescribed Gaussian profiles.  The last sub-figure represents shows that boundary condition loss decreases from $\sim 10^0$ to $\mathcal{O}(10^{-4})$, validating enforcement of the zero-flux Neumann conditions. The red dashed line marks the best epoch (7894) where early stopping is triggered.
\subsection{Results and Discussion}
\begin{figure*}
    \centering
    \includegraphics[width=0.8\linewidth]{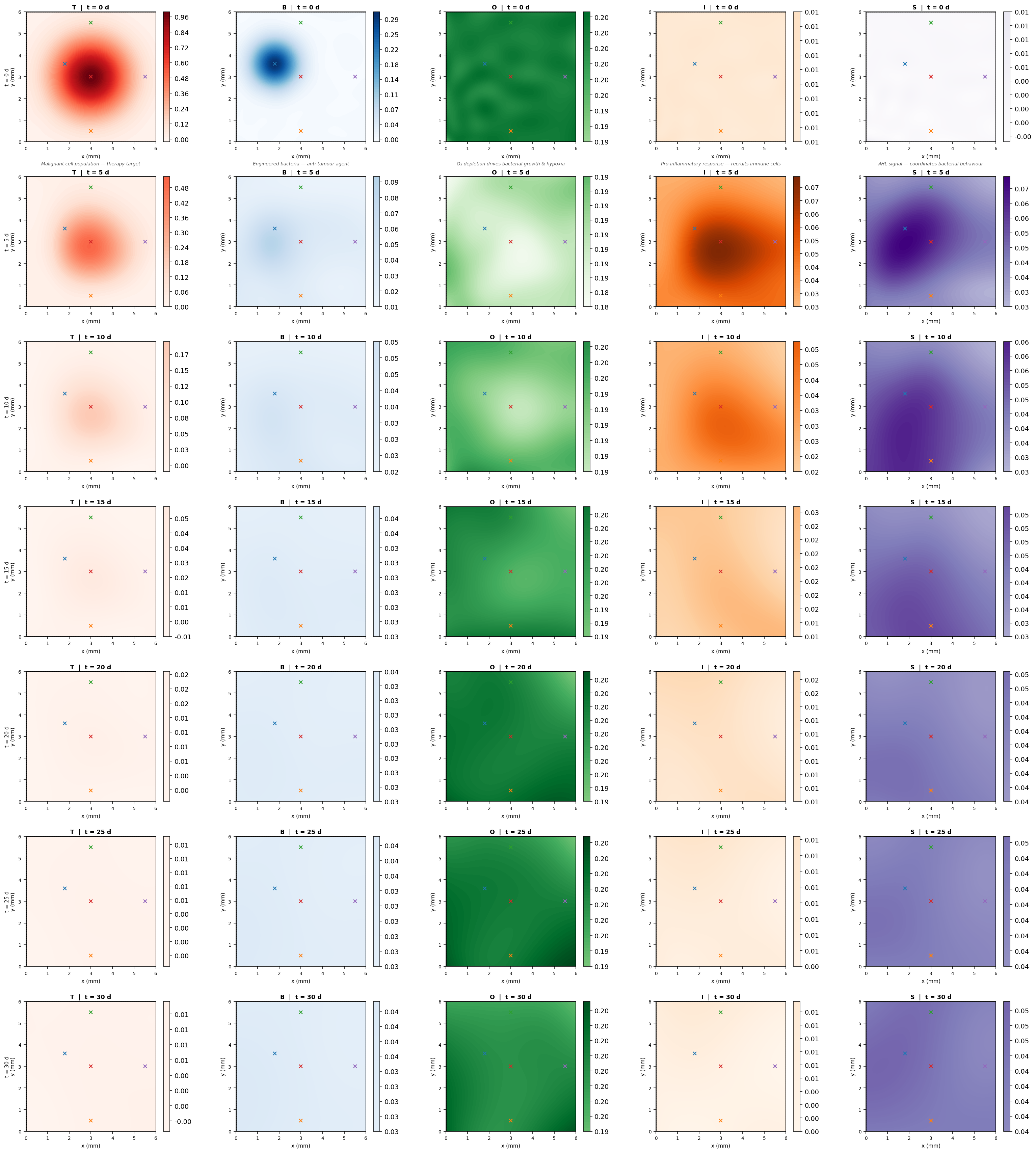}
    \caption{Spatio-temporal evolution of the five-species tumor-bacteria system over 30 days.}
    \label{fig:spatio}
\end{figure*}
Figure~\ref{fig:spatio} shows the complete spatiotemporal dynamics of bacterial cancer therapy as described by the trained PINN.  This presents  evolution of the five-species tumor-bacteria system over 30 days on the $6 \times 6$ $mm^2$  tissue domain. Each row corresponds to a temporal snapshot at $t \in \{0, 5, 10, 15, 20, 25, 30\}$ days; columns show tumor density $T$ (red), bacterial density $B$ (blue), oxygen $O$ (green), immunosuppressive cytokines $I$ (orange), and quorum-sensing signal $S$ (purple).

The PINN solution was sampled at fixed spatial locations over time to ensure a thorough analysis of the experiments. The coordinates for these locations are provided in Table \ref{tab:probes}. 

The solution demonstrates four therapeutic stages: (I) tumor growth reaching $T_{\max} \approx 0.96$ by $t = 5$ days; (II) tumor decrease to $T \approx 0.05$ in spite of low bacterial density ($B \sim 0.04$–$0.09$); (III) accumulation of signals ($S \approx 0.05$–$0.07$) facilitating ongoing tumor cells suppression; (IV) a nearly tumor-free condition ($T \approx 0.01$, $B \approx 0.04$) by $t = 30$ days. Oxygen levels stay close to baseline ($O \approx 0.19$–$0.20$) throughout the process, while cytokines correspond to tumor density.
\begin{table*}[h]
\centering
\footnotesize
\caption{Monitoring Probe Locations in Figure 1}
\label{tab:probes}
\renewcommand{\arraystretch}{1.3}
\begin{tabular}{@{}clll@{}}
\toprule
\textbf{Symbol} & \textbf{Color} & \textbf{Location $(x, y)$} & \textbf{Purpose} \\
\midrule
$\color{red}{\times}$ & \textcolor{red}{Red}     & $(3.0, 3.0)$  & Tumor center-initial tumor seed location \\
$\color{Blue}{\times}$ & \textcolor{blue}{Blue}    & $(1.8, 3.6)$  & Bacteria injection site, 30\% $x$, 60\% $y$ of domain \\
$\color{Green}{\times}$ & \textcolor{Green}{Green}   & $(3.0, 5.5)$  & Top edge,  Near-boundary monitoring point \\
$\color{orange}{\times}$ & \textcolor{orange}{Orange}  & $(3.0, 0.5)$  & Bottom edge, Near-boundary monitoring point \\
$\color{Purple}{\times}$ & \textcolor{Purple}{Purple}  & $(5.5, 3.0)$  & Right edge, Near-boundary monitoring point \\
\bottomrule
\end{tabular}
\end{table*}

The solution has four clear phases: \textbf{Phase I (0-5 days): Tumor expansion.}  At the beginning, the tumor variable $T$ is at its maximum value $((T_{\max} \approx 0.96))$. The bacteria $(B_0)$, introduced slightly away from the tumor center, begin to grow and move toward the tumor. Even though the tumor consumes oxygen rapidly, the oxygen level $O$ stays relatively stable near the vascular value $(O_{\text{vas}} \approx 0.20)$ because the vascular supply is high.
\\
\textbf{Phase II (5–15 days)}: The tumor gets smaller. Even though the bacteria load is low and the oxygen levels are normal, the tumor density drops sharply from $T \approx 0.96$ to $T \approx 0.05$. This is because of natural tumor cell death ($\delta_T T$), carrying capacity saturation, and the start of quorum-sensing suppression ($-\alpha_S S T$). Bacterial density levels off at $B \approx 0.04$, which is the right amount for immune clearance ($\beta_I I B$) to keep up with oxygen-modulated growth. The QS signal starts to build up ($S \approx 0.05$–$0.07$) and spreads from the bacterial site to the tumor center. This activates the cytotoxic term $-\alpha_S S T \approx -0.01 T$.

\textbf{Phase III (15–25 days): QS-mediated suppression.} Tumor density continues decreasing ($T \approx 0.05 \to 0.01$) as the stabilized signal ($S \approx 0.05$) maintains suppression. Bacteria reach a plateau at $B \approx 0.04$ in a state of quasi-equilibrium, where production ($\approx 0.167 B$) balances clearance ($\approx 0.106 B$). Cytokines decay proportionally to tumor ($I \approx 0.03 \to 0.01$), reducing immune pressure on residual bacteria.

\textbf{Phase IV (>25 days): Near-tumor-free coexistence.} The system approaches a low-density equilibrium:
\\
$(T, B, O, I, S) \approx (0.01, 0.04, 0.19, 0.01, 0.05)$, which corresponds to a near stable tumor-free equilibrium (Proposition~\ref{shs}) rather than complete eradication. Persistent bacteria act as a reservoir for QS signals, stopping tumors from growing back even when the immune system isn't as strong.
\begin{figure}
    \centering
\includegraphics[width=1\linewidth]{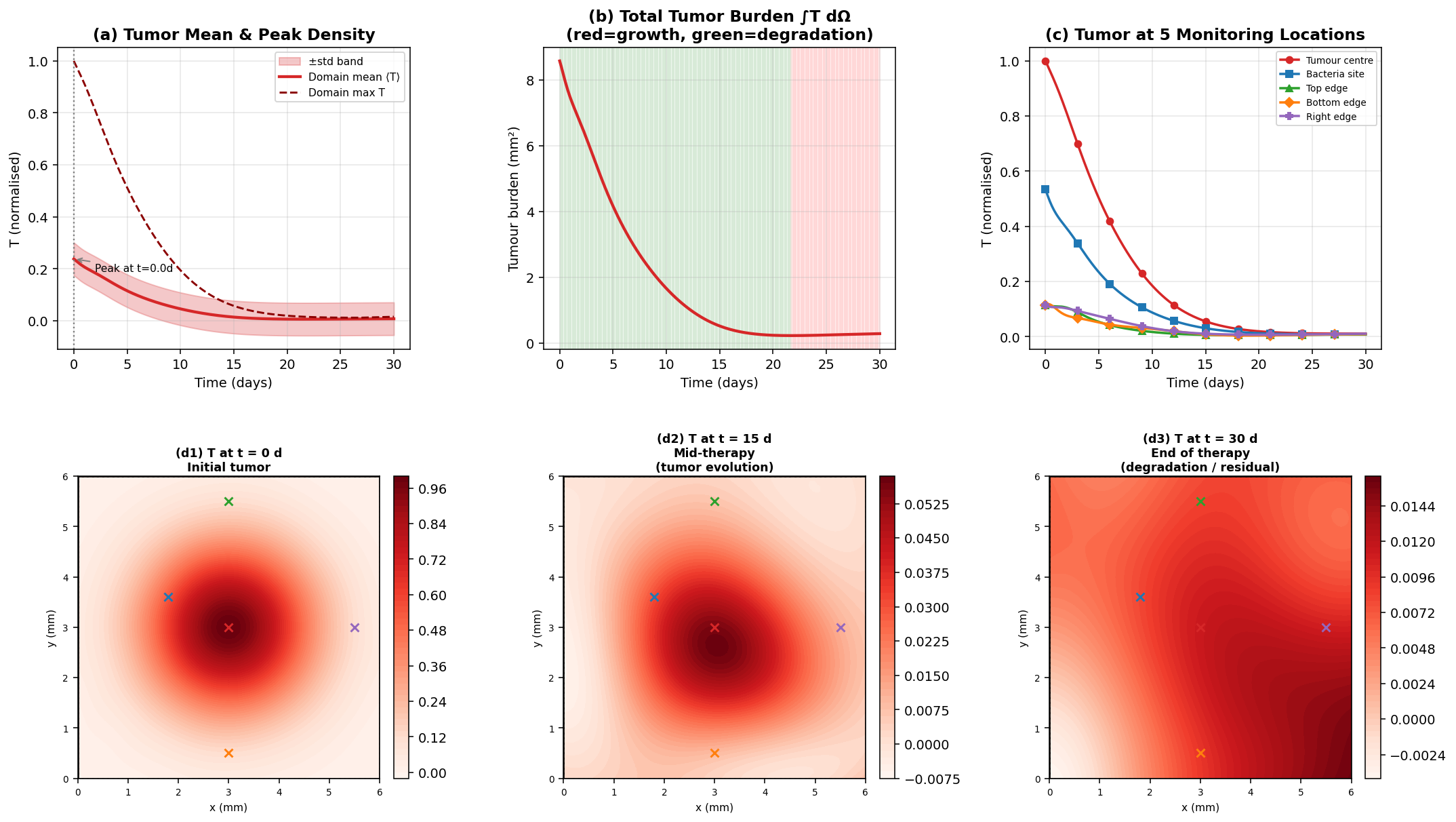}
    \caption{Tumor cell dynamics: growth phase, peak and bacterial-driven degradation}
\label{fig:tumor_dynamics}
\end{figure}
\\
The Figure~\ref{fig:tumor_dynamics}  illustrates the  tumor dynamics throughout the 30-day therapy. In panel (a), the domain-averaged tumor density rises to an early maximum before entering a steady, monotonic decline. At the same time, the variability bands progressively narrow, reflecting increasing spatial uniformity as the tumor regresses. Panel (b) depicts the overall tumor burden. A short-lived initial growth phase (highlighted in red) precedes an extended degradation period (shown in green), clearly illustrating the therapeutic impact of the bacterial intervention. Panel (c) emphasizes spatial heterogeneity. The tumor center undergoes the most pronounced regression, whereas monitoring points near the boundaries remain close to baseline levels, confirming that the tumor remains confined within the computational domain.

The spatial snapshots in the bottom row (d1–d3) provide a visual summary of this progression. The tumor begins as a concentrated Gaussian-like core, transitions into a more diffuse intermediate configuration during therapy, and ultimately approaches a nearly uniform low-density residual state by the end of treatment.

\begin{figure}
    \centering
    \includegraphics[width=1\linewidth]{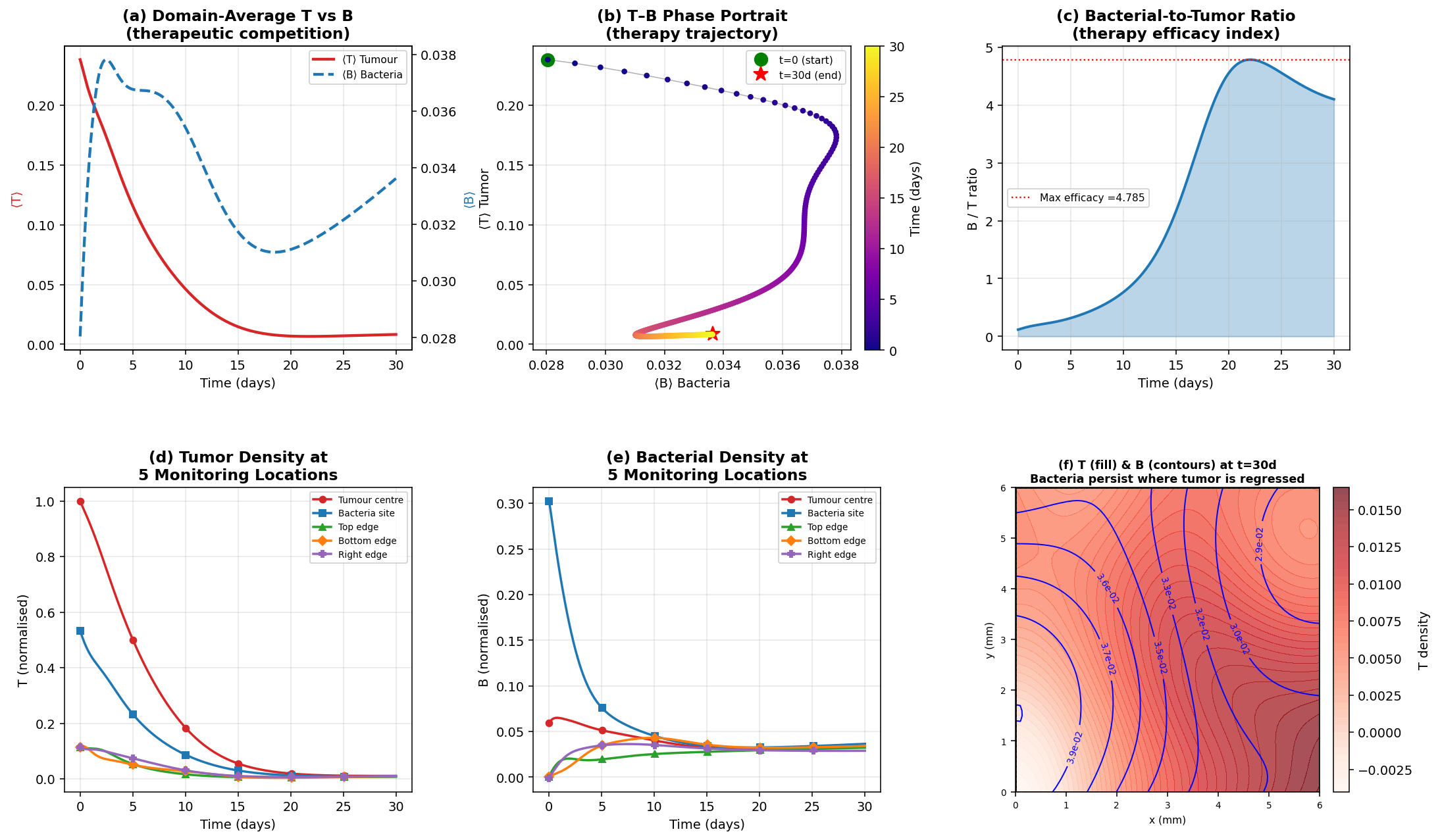}
    \caption{Bacterial therapy effect: Salmonella-tumor interaction dynamics bacteria colonise hypoxia régions, produce signals, and drive tumors regression.}
    \label{fig:Bacterial}
\end{figure}
Figure~\ref{fig:Bacterial} shows how tumors interact with bacteria and the outcomes of treatments. Panel (a) illustrates that as the tumor (red) decreases in size, bacteria (blue dashed) initially decline but then stabilize at moderate levels. This indicates that a high number of bacteria isn't necessary for effectiveness. Panel (b) depicts the system from a state of high tumor/low bacteria (green circle) to significant tumor shrinkage while bacteria levels remain moderate (red star) over a period of 30 days. Panel (c) illustrates that the bacteria-to-tumor ratio peaks when the tumor is small enough. This indicates the effective treatment. Panels (d) and (e) depict spatial dynamics. The greatest tumor reduction occurs at the center, while bacteria tend to stay close to their injection site. Panel (f) shows that bacteria can be located far from the tumor. Diffusible signals prompt the treatment effect. This aligns with clinical cases where Salmonella act on tumor edges through released molecules instead of direct contact.

\begin{figure}
    \centering
\includegraphics[width=1\linewidth]{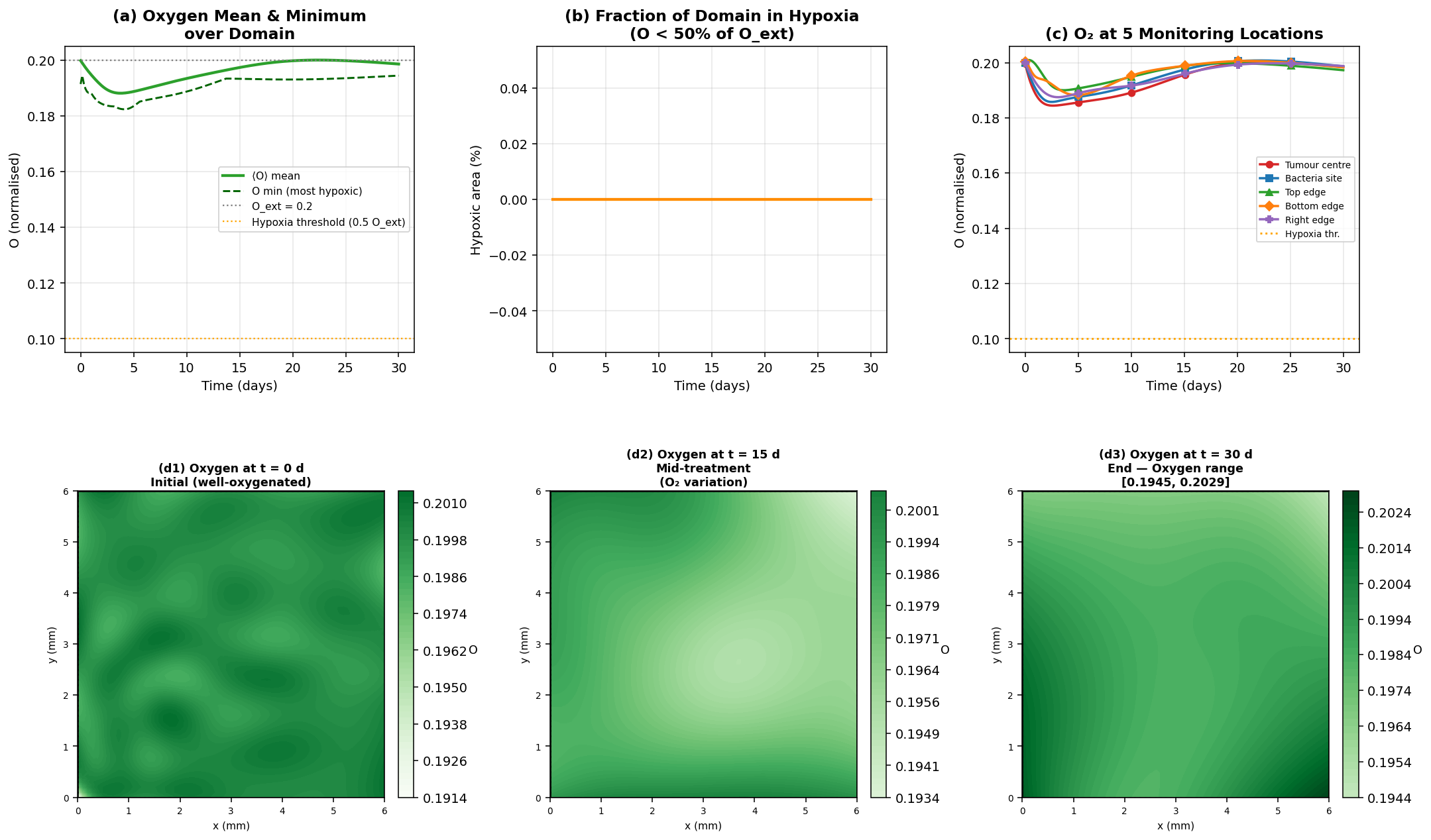}
    \caption{Oxygen deplation $\&$ hupoxia dynamics, $O_2$ consumed by  tumor.}
    \label{fig:oxy}
\end{figure}
Figure \ref{fig:oxy} shows that oxygen levels remain mostly normal during the treatment. This is different from many bacterial therapies that depend on low-oxygen (hypoxia) tumor regions. In panel (a), oxygen slightly drops during the first five days. This happens because the tumor is still large and consumes more oxygen. After treatment begins, oxygen levels gradually return to the vascular baseline $((O_{\text{vas}} \approx 0.20))$ as the tumor becomes smaller.

Panel (b) confirms that the environment stays stable. The hypoxia fraction remains essentially zero during the entire 30-day period. Panel (c) and the spatial snapshots (d1–d3) show a similar pattern. Oxygen stays well distributed across the tissue, with no strong gradients. Overall, these results suggest that the treatment does not rely on low-oxygen conditions. Instead, tumor suppression is mainly driven by the diffusion of quorum-sensing signals released by the bacteria, rather than by hypoxia-dependent bacterial colonization.

\begin{figure}
    \centering
\includegraphics[width=1\linewidth]{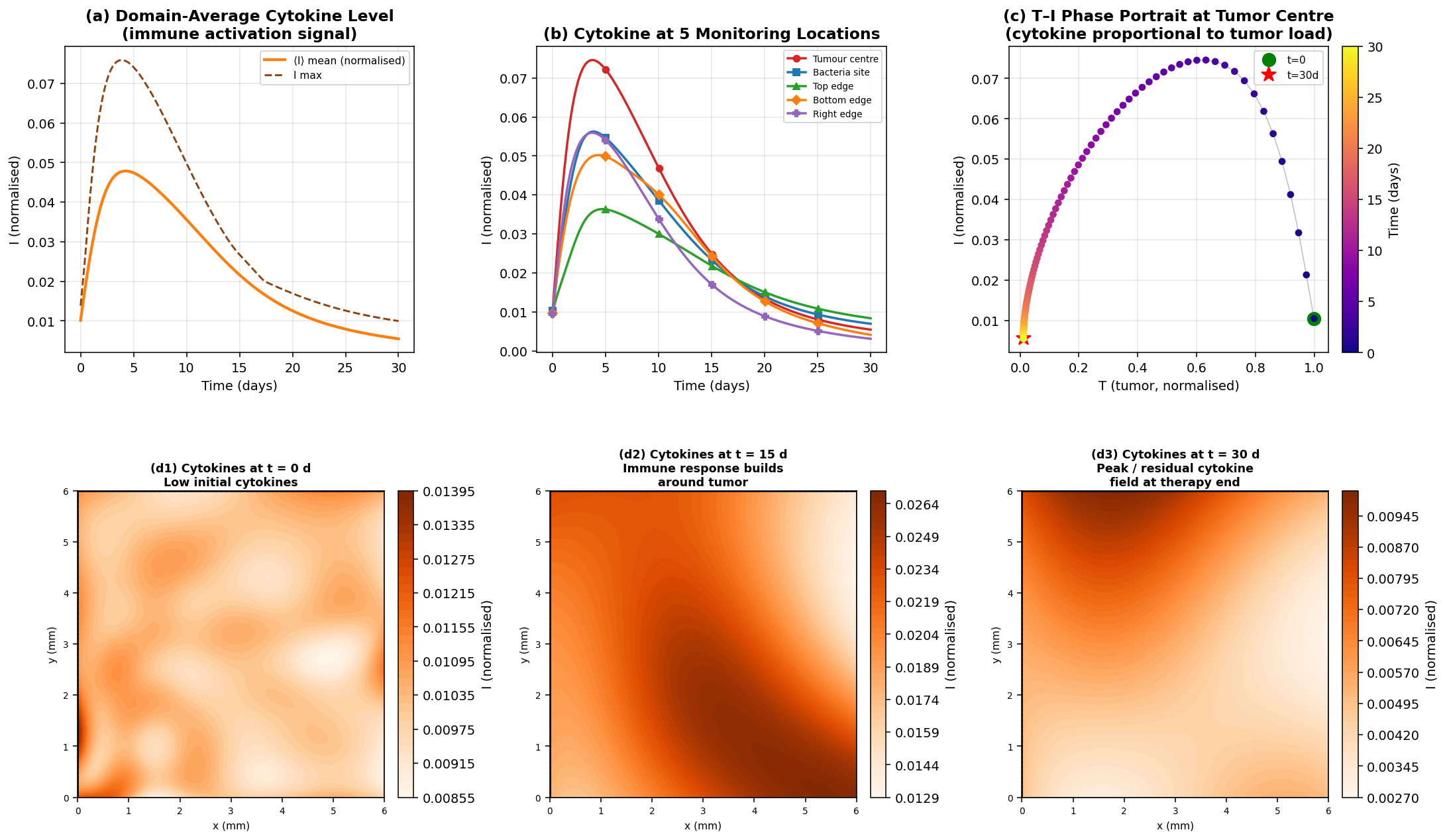}
    \caption{Immune cytokine response (I). Cytokines are produced by tumor cells and regulate immune activity.}
    \label{fig:cytokine}
\end{figure}

Figure \ref{fig:cytokine} shows the host inflammatory response to the tumor.
Panel (a) shows cytokine levels rising sharply around days 4–5. This matches the time when the bacteria and the large tumor interact most strongly. After the peak, cytokines fall steadily as the tumor is reduced.
Panel (b) shows the response is local. The tumor center has much higher cytokine levels than areas farther away. Panels (d1–d3) make this clear: cytokines build up around the tumor (d2) and then fade to a low background by the end (d3).
Panel (c) represents the phase portrait. Cytokines rise as the tumor shrinks, then drop back to near zero once the tumor is gone.
In short,  the cytokines response is driven by the tumor. It peaks during active treatment and then settles down as the malignancy is cleared.

\begin{figure}
    \centering
    \includegraphics[width=1\linewidth]{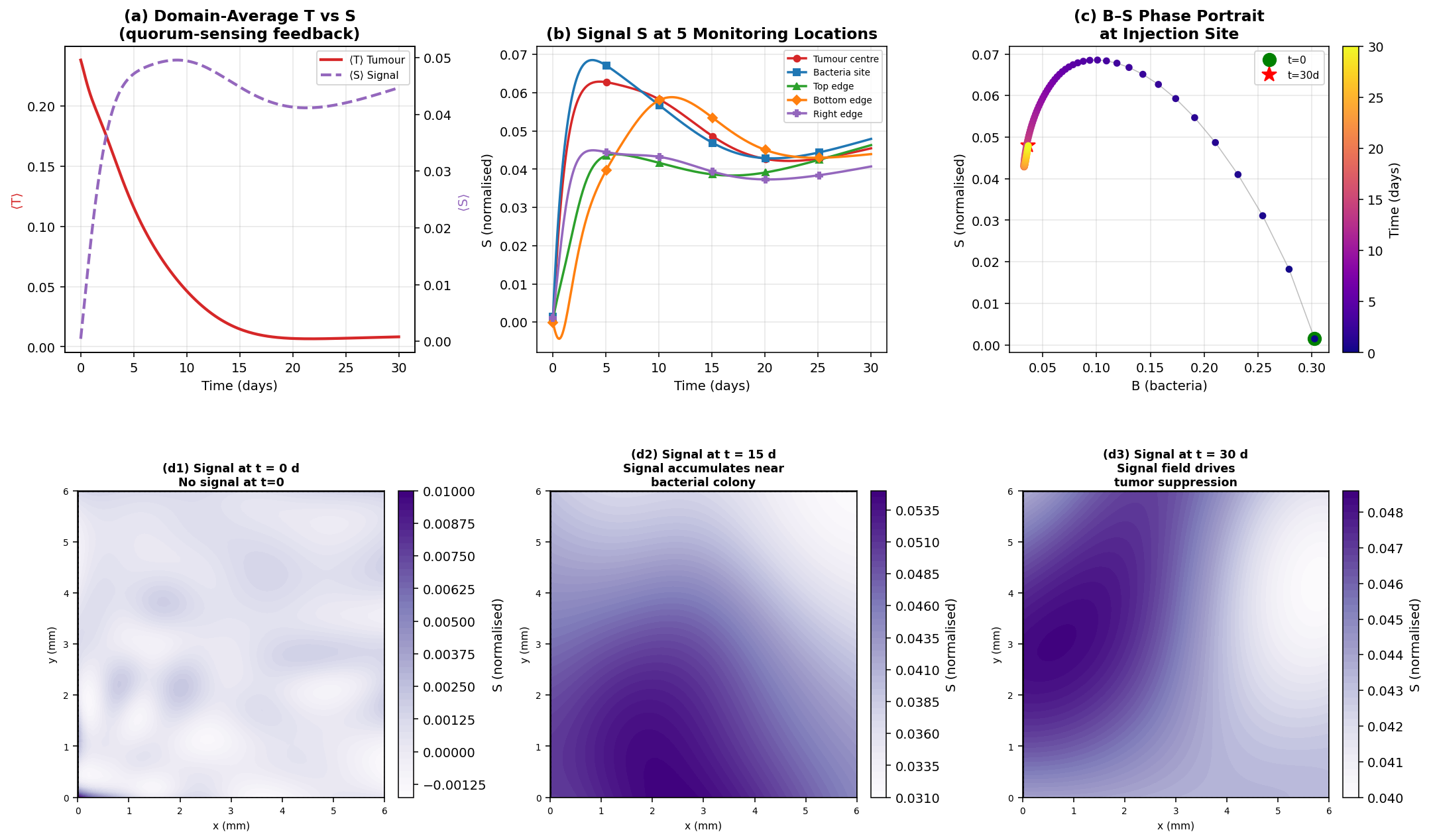}
    \caption{Bacterial quorum-sensing signal (S) dynamics.}
    \label{fig:signal}
\end{figure}

Figure~\ref{fig:signal} shows how quorum-sensing signals play a key role in tumor suppression. Panel (a) illustrates an inverse relationship over time: as the tumor density decreases steadily, the signal concentration increases and levels off, creating a feedback loop that helps maintain treatment effectiveness. Panel (b) displays signal buildup at all observed sites, with the highest levels found at the bacteria injection point and the center of the tumor. This suggests that diffusion helps connect the bacterial source to the distant tumor target. The B-S phase portrait in panel (c) tracks how signals are produced at the injection site. There is a quick buildup during the early establishment of bacteria, followed by a stable plateau as the bacteria settle at a moderate density, indicating that quorum activation depends on density. The spatial snapshots in panels (d1-d3) highlight the essential treatment mechanism-signals spread from the bacterial colony like a purple cloud, reaching the entire area, including the tumor site, where they trigger the suppression term $-\alpha_S ST$. This action from a distance supports the main idea of the model: bacteria do not need to invade the tumor mass directly. Instead, diffusible AHL molecules manage the therapeutic effects from afar, leading to lasting tumor reduction through molecular communication rather than direct bacterial damage.

\begin{figure*}
    \centering
    \includegraphics[width=1\linewidth]{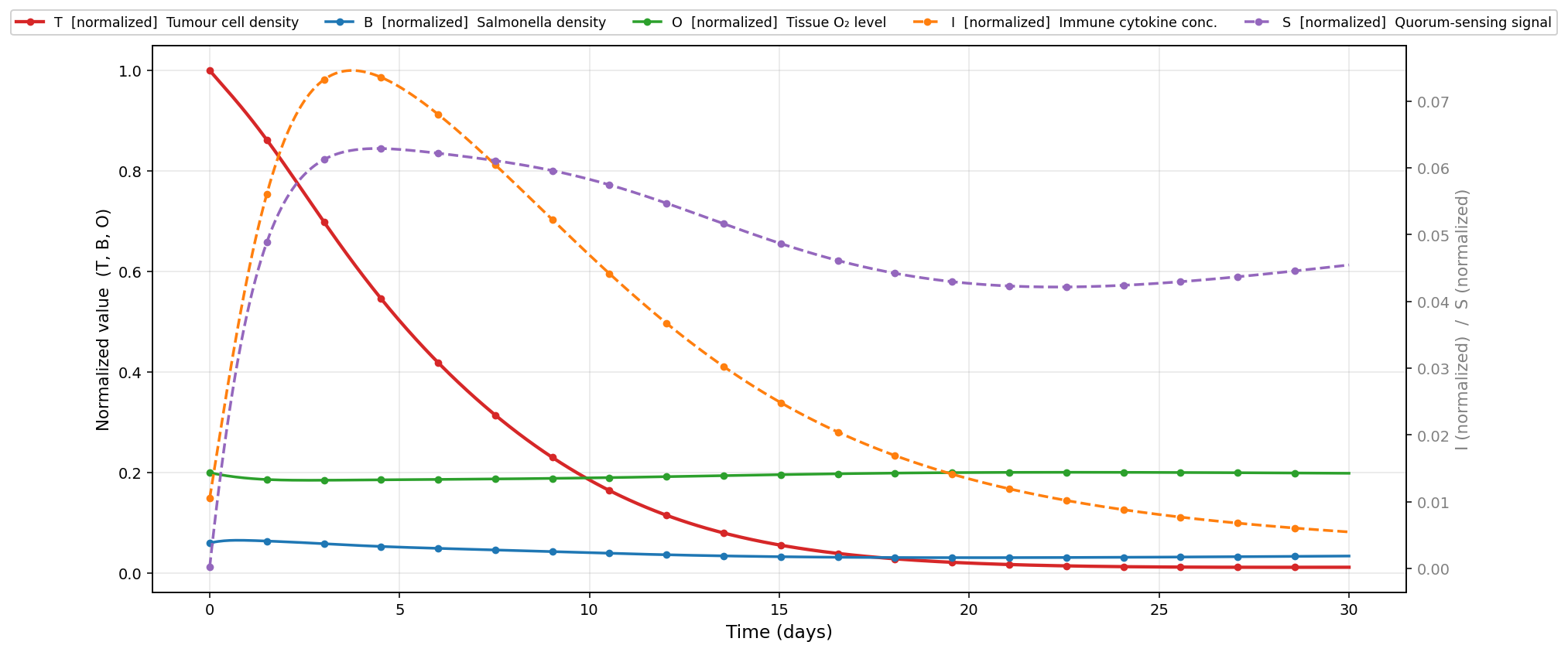}
    \caption{All 5 variables at fixed $(x = 3.0, y = 3.0)$: solid lines represent (T, B, O normalized), dashed lines represent (I, S normalized).}
    \label{fig:final}
\end{figure*}

Figure~\ref{fig:final} shows how all five state variables change over time at the tumor-centered probe $(x, y) = (3.0, 3.0)$ mm during the 30-day treatment. Tumor density $T$ decreases from $T_0 \approx 1.0$ to a residual $T \approx 0.01$. This decline results from the combined effects of apoptosis, saturation of carrying capacity, and quorum-sensing suppression $-\alpha_S ST$.  Bacterial density $B$ levels off at a quasi-equilibrium plateau $B^* \approx 0.04$, where oxygen-modulated growth balances immune clearance and natural death, allowing continuous AHL production even after the active regression period. Oxygen $O$ stays at the vascular baseline $O_{\mathrm{vas}} = 0.20$ throughout, confirming tissue normoxia and indicating a molecular rather than hypoxia treatment mechanism. Cytokines $I$ reach their peak at $t \approx 5$ days, directly linked to the maximum tumor burden, and then decline, consistent with the production term $\beta_T T$. The quorum-sensing signal $S$ increases quickly, stabilizes at $S \approx 0.045$, and continues to suppress the tumor remotely through diffusion. Notably, $S$ peaks just before the sharpest tumor decline, highlighting the causal role of quorum-sensing in tumor regression. The final state $(T, B, O, I, S) \approx (0.01, 0.04, 0.19, 0.01, 0.05)$ agrees with the predicted coexistence equilibrium.
\subsection{Discussion}
The numerical experiments confirm the proposed PINN framework. The trained network generates solutions that are non-negative, bounded, and spatially smooth, aligning with mathematical analysis. The composite validation loss drops to $\mathcal{L}^{\mathrm{best}}_{\mathrm{val}} \approx 1.2 \times 10^{-3}$ at epoch 7894, with training and validation curves closely tracking each other, showing generalization without overfitting. The final state matches the analytically determined coexistence equilibrium, offering direct empirical support for the steady-state analysis described in Section~\ref{sec-steady}.

A key finding is that tumor suppression occurs under normal oxygen conditions. The therapeutic effect is driven by diffusible  quorum-sensing signals. It is not driven by hypoxia-induced bacterial activation. The model shows a spatial separation. Bacteria remain at the injection site. The distant tumor is almost completely eliminated. This indicates that long-range molecular communication is the main cytotoxic mechanism.
These results agree with experimental data on Salmonella-based therapies in well-vascularized tissue~\cite{howell2025mathematical}.
Overall, the results show that the PINN framework describes the four-phase spatio-temporal dynamics of the tumor microenvironment. The simulations are performed on a $6 \times 6~\text{mm}^2$ domain. This provides empirical support for the convergence guarantees stated in Theorem~\ref{thm:main}.
\section{Conclusion}\label{sec:conclusion}
This paper presents a mathematical and computational framework for modeling bacterial cancer therapy. The model is based on a system of five coupled nonlinear reaction-diffusion equations. These equations describe the cross-interaction between tumor density, bacterial colonization, oxygen concentration, immunosuppressive cytokines, and quorum-sensing signals. Four main contributions were established.
First, we formulated a coupled reaction-diffusion system. It describes hypoxia-driven bacterial activation, quorum-sensing-mediated tumor suppression, vascular oxygen exchange, and tumor-induced immunosuppressive cytokines production. To our knowledge, these interactions had not been jointly modeled in a single PDE framework. Second, we provided a mathematical analysis. Third, we developed a convergence theory for the PINN approximation. This result extends existing PINN convergence guarantees to five coupled nonlinear reaction–diffusion equations describing tumor tissue systems. Fourth, various simulations were conducted for different experiments, as well as a parameter sensitivity analysis.

Several directions remain for future work. From a modeling perspective, more realistic features can be added. Nonlinear boundary conditions could describe tumor invasion into surrounding tissue. Stochastic perturbations could describe patient-specific variability. The model could also be coupled with pharmacokinetic models for combination therapies.  From a computational perspective, improvements are possible. Adaptive collocation strategies could be used. Domain decomposition could help handle large three-dimensional geometries. Patient imaging data could be included as initial or boundary conditions. From a theoretical side, more accurate approximation bounds are needed. Methods such as quasi-Monte Carlo sampling or adaptive network architectures could help. The optimization error in the convergence analysis also remains a challenge. 

 Finally, while the signaling molecule $S$
 contributes to sustained tumor suppression, prolonged signal persistence may induce chronic immune activation or off-target cytotoxic effects, warranting further investigation through extended-horizon simulations and experimental validation.

\appendix

\section{Biological Parameters Sensitivity}\label{param}
\begin{figure*}
\centering
\begin{subfigure}{1\textwidth}
    \includegraphics[width=\textwidth]{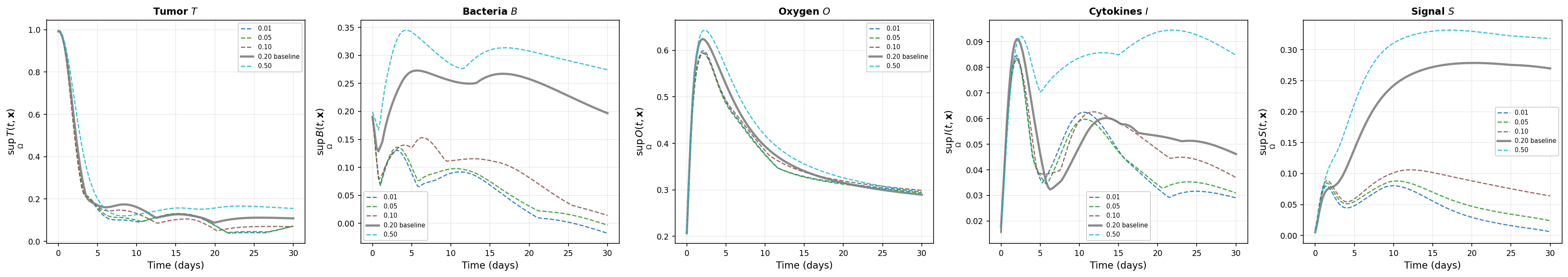}
    \caption{Variation of $\alpha_S$ with respect to the maximum values of $(T, B, O, I, S)$ in $\Omega$.}
    \label{fig:alpha_S}
\end{subfigure}
\hfill
\begin{subfigure}{1\textwidth}
    \includegraphics[width=\textwidth]{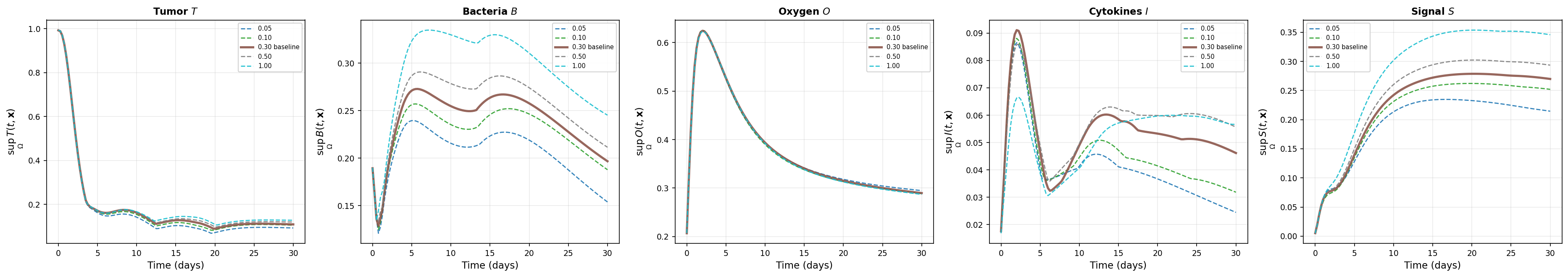}
    \caption{Variation of $\beta_I$ with respect to the maximum values of $(T, B, O, I, S)$ in $\Omega$.}
    \label{fig:beta_I}
\end{subfigure}
\hfill
\begin{subfigure}{1\textwidth}
    \includegraphics[width=\textwidth]{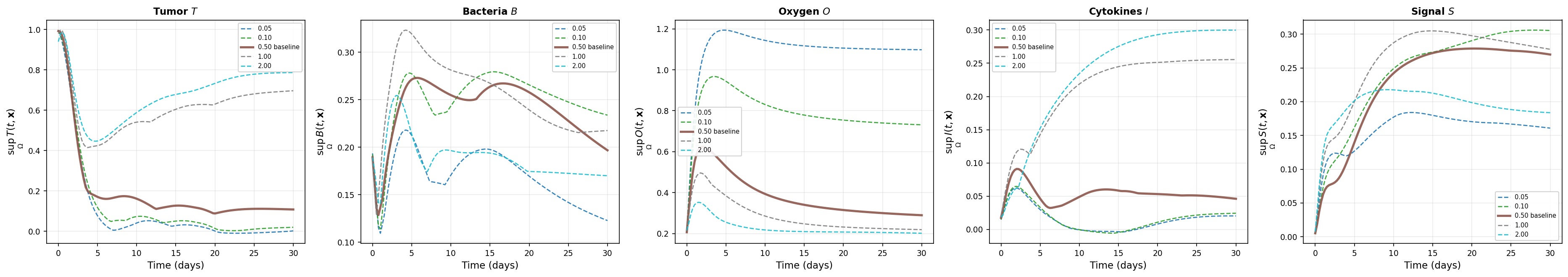}
    \caption{Variation of $\gamma_{vas}$ with respect to the maximum values of $(T, B, O, I, S)$ in $\Omega$.}
    \label{fig:gamma_vas}
\end{subfigure}
        
\caption{Sensitivity of the model to the specific parameters $\alpha_{S}$, $\beta_{I}$, and  $\gamma_{vas}$.}
\label{fig:sens}
\end{figure*}

In this section, we analyze the model's sensitivity to some biological parameters, specifically $\alpha_S$, $\beta_I$, and $\gamma_{vas}$ (Figure~\ref{fig:sens}). 

We first examined how the $\alpha_S$ influences the system (see Figure~\ref{fig:alpha_S}). When this parameter is small (e.g., 0.01), the tumor density $T$ slowly decreases and stabilizes at a low level. However, when $\alpha_S$ becomes larger (such as 0.50), shown by the cyan dashed line), the tumor is suppressed more strongly at the beginning but eventually stabilizes at a higher level. This means the response is not purely linear: strong inhibition at first increases oxygen levels $O$, which improves the tumor’s metabolic conditions and allows it to recover instead of being fully eliminated.

At the same time, higher values of $\alpha_S$ lead to a larger bacterial population $B$ and a stronger inhibitory signal $S$. Even though this signal suppresses the tumor, it is not strong enough to fully overcome the tumor’s ability to adapt, especially in the presence of higher cytokine $I$ and oxygen $O$ levels. Therefore, choosing an appropriate value of $\alpha_S$ is important to maximize tumor suppression without allowing the system to settle into a persistent tumor state.

Next, we focused on $\beta_I$ (the immune clearance rate), as it serves as a critical threshold for treatment success (Figure~\ref{fig:beta_I}).  The sensitivity analysis of the parameter $\beta_I$, which describes cytokine-related interactions affecting the bacterial density, shows that increasing this parameter generally increases the overall activity of the system. When $\beta_I$ increases from $0.05$ to $1.00$, both the bacterial density $B$ and the inhibitory signal $S$ become stronger. This means that higher values of $\beta_I$ allow the therapeutic bacteria to persist and grow more in the system.

However, this stronger bacterial presence does not necessarily improve the treatment outcome. In fact, the tumor density $T$ settles at a slightly higher level when $\beta_I = 1.00$ (cyan dashed line) compared with lower values. In other words, having more bacteria and a stronger inhibitory signal does not always lead to better tumor suppression.

One possible explanation is that the therapeutic signal $S$ and the final tumor suppression become partly disconnected. This is suggested by the oxygen profiles $O$, which remain almost the same for all parameter values. At higher $\beta_I$, the cytokine level $I$ also increases, which may create a balanced state where the tumor adapts to the stronger inhibition. As a result, the system reaches a kind of saturation point: even though more bacteria are present, they no longer improve tumor elimination. This shows that choosing the right parameter range is more important than simply increasing the bacterial concentration.

Finally, we studied how the vascular oxygen supply rate $\gamma_{vas}$ affects the treatment outcome. This is shown in Figure~\ref{fig:gamma_vas}. When $\gamma_{vas}$ increases from 0.01 to 0.50, the oxygen level $O$ in the tissue rises, changing the environment from hypoxia (low oxygen) to normoxia (normal oxygen). As oxygen becomes more available, the tumor density $T$ grows again and reaches much higher steady levels, especially at high supply rates (e.g., 0.50). The larger tumor mass also produces more cytokines $I$, leading to a highly active state where tumor growth eventually overcomes the therapeutic effect.

The effectiveness of the therapy also depends on how oxygen affects the anaerobic bacteria $B$. Although stronger vascularization causes a larger initial peak in bacterial density, the long-term bacterial levels and the inhibitory signal $S$ they produce, become much lower when $\gamma_{vas}$ is high. This reduces the pressure on the tumor over time and allows it to grow again. In other words, higher vascularization is a double-edged sword: it can initially help the bacteria grow, but the higher oxygen levels eventually prevent them from surviving. These results suggest that maintaining hypoxia regions in the tumor, or using bacteria that tolerate oxygen better, may be necessary for long-lasting tumor control.

In summary, the simulations show that the success of bacteria-based tumor therapy depends on a delicate balance between the immune response of the host and the metabolic conditions inside the tumor. Increasing the inhibitory effect or the oxygen supply can improve the treatment at the beginning, but it may also activate feedback mechanisms. For example, cytokines can clear the bacteria too quickly, and higher oxygen levels can reduce the survival of anaerobic bacteria. Both effects can allow the tumor to grow again or persist in the system.

\bibliographystyle{plain} 
\bibliography{biblio.bib}
\end{document}